\newtheorem{theorem}{Theorem}
\newtheorem{assumption}{Assumption}
\newtheorem{lemma}[theorem]{Lemma}
\theoremstyle{definition}
\newtheorem{definition}{Definition}
\theoremstyle{remark}
\newtheorem{claim*}{Claim}
\numberwithin{equation}{section}
\numberwithin{theorem}{section}
\numberwithin{example}{section}
\numberwithin{definition}{section}
\numberwithin{figure}{section}
\newcommand{\boldB}{\boldsymbol B}
\newcommand{\assumpref}[1]{Assumption~\ref{assump:#1}}
\newcommand{\assumpsref}[1]{Assumptions~\ref{assump:#1}}
\newcommand{\assumpssref}[1]{\ref{assump:#1}}
\newcommand{\figref}[1]{Figure~\ref{fig:#1}}
\newcommand{\figsref}[1]{Figures~\ref{fig:#1}}
\newcommand{\figssref}[1]{\ref{fig:#1}}
\newcommand{\secref}[1]{Section~\ref{sec:#1}}
\newcommand{\algref}[1]{Algorithm~\ref{alg:#1}}
\newcommand{\appref}[1]{Appendix~\ref{app:#1}}
\newcommand{\appsref}[1]{Appendices~\ref{app:#1}}
\newcommand{\appssref}[1]{\ref{app:#1}}
\newcommand{\defref}[1]{Definition~\ref{def:#1}}
\newcommand{\defsref}[1]{Definitions~\ref{def:#1}}
\newcommand{\defssref}[1]{\ref{def:#1}}
\newcommand{\exref}[1]{Example~\ref{ex:#1}}
\newcommand{\lemref}[1]{Lemma~\ref{lem:#1}}
\newcommand{\thmref}[1]{Theorem~\ref{thm:#1}}
\newcommand{\tabref}[1]{Table~\ref{tab:#1}}
\newcommand{\tabsref}[1]{Tables~\ref{tab:#1}}
\newcommand{\tabssref}[1]{\ref{tab:#1}}
\title[]{ZAP: $Z$-value Adaptive Procedures for False Discovery Rate Control with Side Information}
\author[D.~Leung]{Dennis Leung}
\address{School of Mathematics and Statistics, University of Melbourne}
\email{dennis.leung@unimelb.edu.au}
\author[W.~Sun]{Wenguang Sun}
\address{Center for Data Science, Zhejiang University}
\email{wgsun@zju.edu.cn}
\begin{document}

\begin{abstract}
Adaptive multiple testing with covariates is an important research direction that has gained major attention in recent years. It has been widely recognized that leveraging side information provided by auxiliary covariates  can improve  the power of false discovery rate (FDR) procedures. Currently, most such  procedures are devised with $p$-values as their main statistics. However, for two-sided hypotheses,  the usual data processing step that transforms the primary statistics, known as $z$-values, into $p$-values not only leads to a loss of information carried by the main statistics, but can also undermine the ability of the covariates to assist with the FDR inference. We develop a $z$-value based covariate-adaptive (ZAP) methodology  that operates on the intact structural information  encoded jointly by the $z$-values and covariates. It seeks to emulate the oracle $z$-value procedure via a working model, and its rejection regions significantly depart from those of the $p$-value adaptive testing approaches. The key strength of ZAP is that the FDR control is guaranteed  with minimal assumptions, even when the working model is misspecified.  We demonstrate the state-of-the-art performance of ZAP using both simulated and real data, which shows that the efficiency gain  can be substantial in comparison with p-value based methods. Our methodology is implemented in the R package \texttt{zap}.

\end{abstract}

\keywords{multiple testing, false discovery rate, $z$-value, beta mixture, side information}

\subjclass[2000]{62H05}

\maketitle

\section{Introduction} \label{sec:intro}

In modern scientific studies, a ubiquitous  task is to test a multitude of two-sided hypotheses regarding the presence of nonzero effects. The problem of multiple testing with covariates has received much recent attention, as  leveraging contextual information beyond what is offered by the main statistics can enhance both the power and interpretability of existing false discovery rate (FDR; \citealp{benjamini1995controlling}) methods. 
This has marked a gradual paradigm shift from the Benjamini-Hochberg (BH) procedure and its immediate variants (e.g. \citealp{benjamini2000adaptive, storey2002direct}) that are based solely on the $p$-values. 
For instance, in the differential analysis of RNA-sequencing data, 
 the average read depths across samples can provide useful side information alongside  individual $p$-values, and incorporating such  information promises to improve the efficiency of existing methods. The importance of this direction has been reflected by its intense research activities; see \cite{lei2016adapt, zhang2020covariate, yurko2020selective, chen2017functional, ignatiadis2016data, li2019multiple, boca2018direct} for an incomplete list of related works. 
In contrast with BH and its variants that apply a universal threshold to all $p$-values, these methods boil down to setting varied $p$-value thresholds that are adaptive to the covariate information.  

This seemingly natural modus operandi, which involves using $p$-values as the basic building blocks, however, is suboptimal for two-sided testing; the $p$-values involved are typically formed via a data reduction step, which applies a non-bijective transformation to ``primary" test statistics such as the $z$-values, $t$-statistics \citep{ritchie2015limma} or Wald statistics \citep{love2014moderated}. \cite{sun2007oracle} and \cite{storey2007optimal} argued that reducing $z$-values to two-sided $p$-values may lead to substantial loss of information, particularly when the $z$-values exhibit distributional asymmetry. The main thrust of this article is to reveal a new source of information loss in the context of covariate-adaptive multiple testing, and to develop a $\bf z$-value cov{\bf a}riate-ada{\bf p}tive  methodology, which we call ``ZAP" for short, that bypasses the data reduction step. As illustrated in  \secref{infoLoss}, the interactive relationship between the $z$-values and the covariates can capture structural information that can be exploited for more testing power. However, this interactive information may be undercut, and in some scenarios, completely forgone when converting the $z$-values to $p$-values. Hence, the data reduction step not only leads to a loss of information carried by the main statistics, but also undermines the ability of the covariates to assist with the FDR inference. 


Few works on covariate-adaptive testing have pursued the $z$-value direction for two-sided testing since combining the $z$-values and covariates poses an additional layer of challenges. Existing $z$-value based procedures either make strong assumptions on the data generating model \citep{scott2015false}, or are not robust for handling multi-dimensional covariate data \citep{cai2019covariate}. By contrast, ZAP retains the merits of $z$-value based methods and avoids the information loss from ``collapsing  into $p$-values", without relying on strong assumptions nor forgoing robustness. It faithfully preserves the interactive structure between the primary statistics and covariates as a starting point for inference, and is deployed with a working model, whose potential misspecification will not invalidate the FDR control. 

Our contribution is twofold. First, ZAP represents a $z$-value based, covariate-adaptive testing framework that attains state-of-the-art power performance under minimal assumptions, filling an important gap in the literature. Second, in light of a plethora of $p$-value based covariate-adaptive methods that have emerged in recent years, our study explicates new sources of information loss in data processing, which provides new insights and gives caveats for conducting covariate-adaptive inference in practical settings.  

The rest of the paper is structured as follows. \secref{ZAPframework} states the problem formulation and describes the high-level ideas of ZAP. \secref{zapMethods} formally introduces our two data-driven methods of ZAP and their implementation details. Numerical results based on both simulated and real data are presented in  \secref{numeric}. \secref{discuss} concludes the article with a discussion of open issues.

\section{Problem Formulation and Basic Framework}\label{sec:ZAPframework}


\subsection{The problem statement} \label{sec:problemState}

Suppose we are interested in making inference of $m$ real-valued effects $\mu_i$, $i =1, \dots, m$, and for each $i$, we observe
a primary statistic $Z_i \in \mathbb{R}$ (``$z$-value") and  an auxiliary covariate  $X_i \in \mathbb{R}^p$ that can be multivariate. We consider a multiple testing problem where the goal is to identify nonzero effects or, equivalently, determine the values of the indicators
\begin{equation} \label{hypotheses_indic}
H_i  \equiv I(\mu_i \neq  0) = \begin{dcases*}
        1  & if $\mu_i \neq  0$\\
        0 & otherwise
        \end{dcases*}.
\end{equation}
Assume that the triples $\{H_i, Z_i, X_i\}_{i =1}^m$ are independent and identically distributed, and the data are described by the following mixture model:
\begin{equation} \label{mixtureDens}
 Z_i | X_i=x \sim  f_{x}(z)   \equiv  f(z |x)= (1 - w_x) f_0 (z) + w_x f_{1,x} (z), 
\end{equation}
where $w_{x} \equiv P(H_i = 1|X_i=x)$ is the conditional probability of having a non-zero effect given $X_i=x$ and $f_{1,x}(z)\equiv f(z| H_i=1, X_i=x)$ is the conditional density under the alternative. $f_0$ denotes the null density, which is invariant to the covariate value. In this article we assume $f_0(z)\equiv \phi(z)$, the density of a $\mathcal N(0,1)$ variable\footnote{This can be easily achieved via the composite transformation $\Phi^{-1}\circ G_0(\cdot)$ if the primary test statistic has a known null distribution function $G_0(\cdot)$, e.g. a t-distribution, where $\Phi(\cdot)$ is the standard normal distribution function.}.  In contrast with \cite {scott2015false} which assumes a fixed alternative density, i.e. $f_{1,x}\equiv f_1$, the data generating model in \eqref{mixtureDens} provides a more general framework for multiple testing with covariates by allowing both $w_x$ and $f_{1,x} (z)$ to vary in $x$.   

Let $\mathcal R \subset \{1, \dots, m\}$ be the set of hypotheses rejected by a multiple testing procedure. In large-scale testing problems, the widely used FDR is defined as 
$$
\mbox{FDR} = \mathbb{E}\left[\frac{V} {R \vee 1}\right],
$$
where $V  = \sum_{i = 1}^m (1 - H_i) I(i \in \mathcal{R}) \text{ and } R = \sum_{i = 1}^m  I(i \in \mathcal{R})$
are respectively the number of false positives and the number of rejections. 
Throughout, $\mathbb{E}[\cdot]$ denotes an expectation operator with respect to the joint distribution of $\{H_i, Z_i, X_i\}_{i = 1}^m$, and $\mathbb{E}[\cdot|\cdot]$ denotes a conditional expectation that should be self-explanatory from its context. The ratio ${V}/(R \vee 1)$ is known as the false discovery proportion (FDP). The power of a testing procedure can be evaluated using the expected number of true discoveries $\mbox{ETD}=\mathbb{E}[R - V]$ or the true positive rate 
$$
\mbox{TPR}=\mathbb{E}\left[\frac{R - V}{(\sum_{i = 1}^m H_i)\vee 1} \right].
$$ 
Our goal is to devise a powerful procedure that can control
 the FDR under a pre-specified level $\alpha \in (0,1)$.

\subsection{Information loss in covariate-adaptive testing} \label{sec:infoLoss}

A two-sided $p$-value is formed by the non-bijective transformation $P_i=2\Phi(-|Z_i|)$, where $\Phi$ is the cumulative distribution function of a $\mathcal N(0,1)$ variable. We call a testing procedure \emph{z-value based} if it makes rejection decisions based on the full dataset $\{Z_i, X_i\}_{i =1}^m$, and \emph{p-value based} if it does so only based on the reduced dataset $\{P_i, X_i\}_{i=1}^m$. This section presents examples to illustrate that the interactive structure between $Z_i$ and $X_i$ is generally not preserved by transforming into $p$-values; the associated information loss can lead to decreased power in the FDR inference. 

Consider Model \eqref{mixtureDens}, and suppose $X_i\sim \mbox{Unif}(-1,1)$. Our study examines three situations: 
\begin{enumerate}
  \item [Example 2.1] Asymmetric alternatives: $f(z|x) = \frac{8 - x }{10}f_0(z)  + \frac{ x+ 2 }{10}\phi(z-1.5)$. \label{ex:S1}
  \item [Example 2.2] Unbalanced covariate effects on the non-null proportions: 
  \[f(z|x) = 0.8 f_0(z)  +
 \frac{1-x}{10}\phi(z+1.5)+\frac{1+x}{10}\phi(z-1.5).\] \label{ex:S2}
  \item [Example 2.3] Unbalanced covariate effects on the alternative means: \[f(z|x) = 0.9 f_0(z) + 0.1\phi(z-1.5 \;  \text{sgn}(x)), \; \text{where}\; \text{sgn}(x)=I(x\geq0)-I(x < 0). \]  \label{ex:S3}
   \end{enumerate}

We investigate two approaches to FDR analysis for these examples that respectively reject hypotheses with suitably small posterior probabilities $\{P(H_i=0|P_i, X_i)\}_{i=1}^m$ and $\{P(H_i=0|Z_i, X_i)\}_{i =1}^m$. The latter probabilities are assumed to be known by an oracle\footnote{In practice these posterior probabilities are unknown.}. In the literature the $z$-value based quantity $P(H_i=0|Z_i, X_i)$ is also called the conditional local false discovery rate (CLfdr, \citealp{Efr08, CaiSun09}). 
It is known that the \emph{optimal} $p$-value and $z$-value based procedures, which maximize true discoveries subject to false discovery constraints, have the respective forms
\begin{equation*}
\pmb\delta^\mathcal{P}  =  \bigl\{I[P(H_i=0|P_i, X_i)\leq t_\mathcal{P} ]\bigr\}_{i =1}^m \text{ and }\pmb\delta^\mathcal{Z}  =  \bigl\{I[P(H_i=0|Z_i, X_i)\leq t_\mathcal{Z}]\bigr\}_{i =1}^m,
\end{equation*}
where the rejection decisions are  expressed by indicators, and the thresholds $t_\mathcal{P}$ and $t_\mathcal{Z}$ are calibrated such that the nominal FDR level is exactly $\alpha$; see \appref{oracleOptim} for a review.
 In our comparisons we choose suitable thresholds such that the FDR of both methods is exactly $0.1$, and their powers are reported as the TPR empirically computed by 150 repeated experiments for $m = 1000$:
\begin{itemize}
\item[] Example 2.1: $\mbox{TPR}_{\pmb\delta^\mathcal{P} } = 4.4\%$; $\mbox{TPR}_{\pmb\delta^\mathcal{Z} } = 11.7\%$.
\item[] Example 2.2: $\mbox{TPR}_{\pmb\delta^\mathcal{P} } = 3.4\%$; $\mbox{TPR}_{\pmb\delta^\mathcal{Z} } = 5.5\%$.
\item[] Example 2.3: $\mbox{TPR}_{\pmb\delta^\mathcal{P} } = 0.6 \%$; $\mbox{TPR}_{\pmb\delta^\mathcal{Z} } = 2.6 \%$.
\end{itemize}
Apparently, ${\pmb\delta^\mathcal{Z}}$ is more powerful  than ${\pmb\delta^\mathcal{P}}$. 

To understand the differences in power, we first remark that either  oracle procedure essentially amounts to one by which $i$ is rejected if and only if
\begin{equation} \label{oracle_rej_region}
Z_i \in \mathcal{S}(X_i) \subset \mathbb{R}
\end{equation}
for some rejection region $\mathcal{S}(\cdot)$ on the $z$-value scale that is a function of the covariate value; the theoretical derivation  is sketched in  \appref{rejRegion}. 
Let $\mathcal{S}^\mathcal{P}(x)$ and $\mathcal{S}^\mathcal{Z}(x)$ denote the respective rejection regions of $\pmb\delta^\mathcal{P}$ and $\pmb\delta^\mathcal{Z}$ on the $z$-value scale for a given covariate value $x$, which are plotted for the three examples in \figref{oracle_rej_reg}.
On the left panel, both $\mathcal{S}^\mathcal{P}(x)$ and $\mathcal{S}^\mathcal{Z}(x)$ enlarge as the covariate value increases, suggesting that the covariates are informative for both methods. The information loss leading to the lesser power of $\pmb\delta^\mathcal{P}$ in Example 2.1 is intrinsically within the main statistics when converting $z$-values to $p$-values (\citealp{sun2007oracle,storey2007optimal}).
By contrast, the middle and right panels show that $\mathcal{S}^\mathcal{Z}(x)$ changes with $x$, while $\mathcal{S}^\mathcal{P}(x)$ is completely insensitive to the changes in $x$; see  \appref{rejRegion}  for the relevant calculations. Hence, the covariates are only informative for $\pmb\delta^\mathcal{Z}$. This fundamental phenomenon reveals that upon reduction to $p$-values, the information loss not only can occur internally within the main statistics, but also externally due to the failure of $\pmb\delta^\mathcal{P}$ in fully capturing the original interactive information between $Z_i$ and $X_i$. When the latter interactive structure represents the bulk of the information provided by the covariates for testing, reduction to $p$-values can substantially undermine the covariates' ability to assist with inference.

\begin{figure}[t]
\centering
\includegraphics[width=\textwidth]{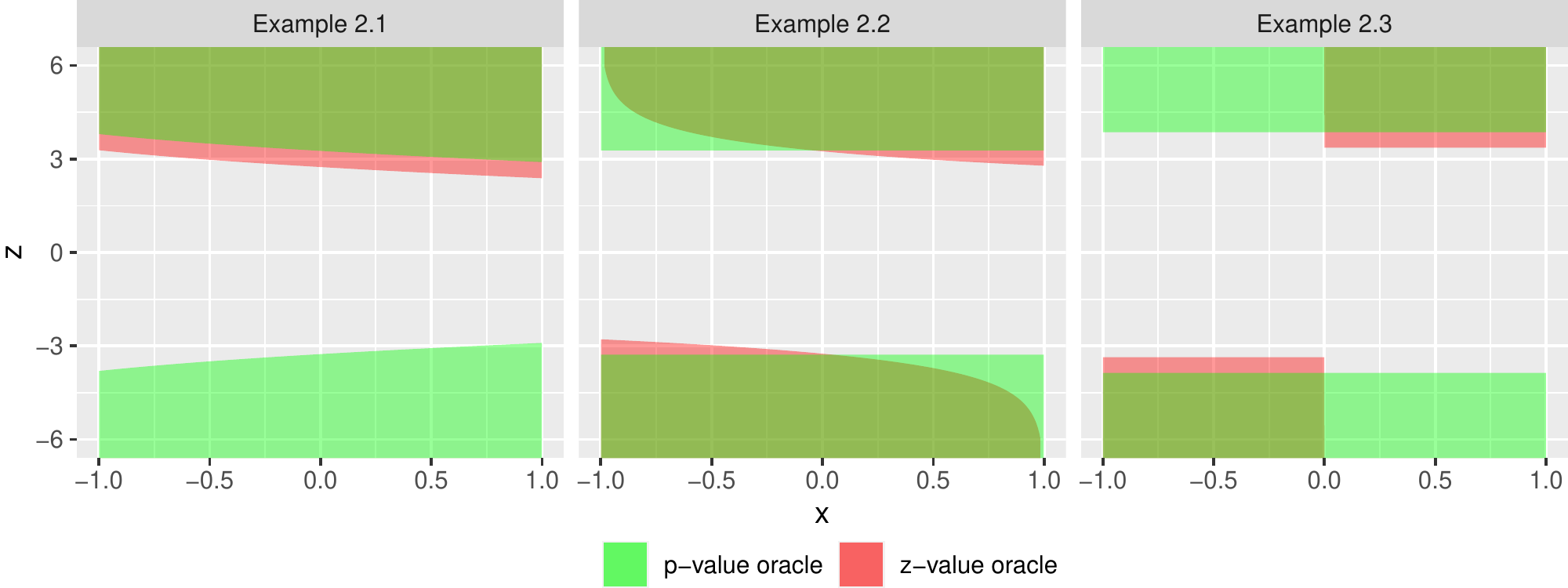}
\caption{ 
Comparisons of rejection regions. For each covariate value $x \in (-1, 1)$, the rejection regions  $\mathcal{S}^\mathcal{P}(x)$ and $\mathcal{S}^\mathcal{Z}(x)$  are respectively marked in green and red on the $z$-value scale. The overlapped region is depicted in yellow.}
\label{fig:oracle_rej_reg}
\end{figure}

\subsection{The ZAP framework and a preview of contributions} \label{sec:ProposalZAP}
  
The previous examples  motivate us to focus on $z$-value adaptive procedures to avoid information loss. This naturally boils down to pursuing the oracle procedure $\pmb\delta^\mathcal{Z}$ in some shape or form, which presents unique challenges. Existing $z$-value based works such as \cite{scott2015false} and \cite{cai2019covariate} are built directly upon $\{P(H_i=0|Z_i, X_i)\}_{i=1}^m$, the CLfdr statistics, which unfortunately involves unknown quantities that can be difficult to estimate in the presence of covariates. Commonly used algorithms may not produce desired estimates, and even lead to invalid FDR procedures if the modelling assumptions are violated. That the theory on FDR control critically depends on the quality of these estimates has limited the scope and applicability of these works. 

We aim to develop a new class of $z$-value adaptive (ZAP) procedures  that are assumption-lean, robust and capable of effectively exploiting the interactive information between $Z_i$ and $X_i$. 
The key idea is to emulate the oracle procedure $\pmb\delta^\mathcal{Z}$ while circumventing the direct estimation of $P(H_i=0|Z_i, X_i)$. Next we first outline the key steps (ranking and thresholding) of our framework and then provide a preview of its contributions. 

In the first \emph{ranking} step, we introduce the new concept of \emph{assessor functions}, which can be estimated from the data based on a {working} model, to construct a new sequence of significance indices $\{T_i\}_{i=1}^m$ as proxies for $\{P(H_i=0|Z_i, X_i)\}_{i=1}^m$. While many potential working models can be used, in this work we focus on a class of beta-mixture models that are carefully defined on a bijective transformation of the $z$-values and particularly suitable for two-sided testing (\secref{betaMod}). In the second \emph{thresholding} step, ZAP calibrates a threshold along the ranking produced by $\{T_i\}_{i=1}^m$. The essential idea is to count the
number of false rejections by any candidate threshold value with  a ``mirroring" sequence of  the rejected significance indices, which can be created via either simulation (\algref{zapAsymp}) or partial  data masking (\algref{zapFinite}). 
 The key strength of ZAP over the methods in \cite{scott2015false} and \cite{cai2019covariate} is that it seeks to emulate the oracle $z$-value procedure while avoiding a direct substitution of $P(H_i|Z_i, X_i)$ with its estimate. ZAP is assumption-lean and robust in the sense that it is provably valid for FDR control under misspecifications of the working model. We stress that the resulting rejection regions of ZAP significantly depart from those of  $p$-value adaptive methods, including the closely related CAMT \citep{zhang2020covariate} and AdaPT \citep{lei2016adapt}. Our simulation and real data studies show that the resulting gain in testing power can be substantial. 

\section{Data-Driven ZAP Procedures}\label{sec:zapMethods}

This section develops the framework of ZAP and its data-driven algorithms for covariate-adaptive FDR inference.  \secref{assessor} introduces the concept of an assessor function and a prototype procedure inspired by the oracle $z$-value procedure. The assessor function can be constructed based on a working beta-mixture model, which is proposed in Section \ref{sec:betaMod}. Sections \ref{sec:ZAPasymp} and \ref{sec:ZAPfinite} lay out two variants of data-driven ZAP procedures and establish their theoretical properties. Further implementation details are discussed in  \secref{beta_expr}.


\subsection{Preliminaries: oracle $z$-value procedure, assessor function and a prototype ZAP algorithm}\label{sec:assessor}

To facilitate the development of a working model, we consider the following lossless transformation: $U_i = \Phi(Z_i)$. The transformed statistic $U_i$ is referred to as a \emph{$u$-value}\footnote{Despite the similarity in their constructions, the $u$-values should \emph{not} be treated as $p$-values for one-sided tests, which are not the subject matter of this work.}, which, according to \eqref{mixtureDens}, obeys the induced mixture model
  \begin{equation} \label{U_cond_dens}
   U_i| X_i=x \sim h_{x} (u) \equiv h(u|x)  = (1 - w_x) h_0(u)+ w_x h_{1,x} (u),
  \end{equation}
with $h_0(u)$ and $h_{1,x} (u) \equiv h(u | H_i =1, X_i = x)$ respectively being the null $\mbox{Unif}(0,1)$ and conditional alternative densities. An optimal FDR procedure (\citealp{sun2007oracle, CaiSun09, heller2021optimal}) is a thresholding rule based on the conditional local false discovery rates (CLfdr)
  \begin{equation} \label{zlfdr_alternative_rep}
\mbox{CLfdr}_i \equiv P(H_i = 0 | Z_i, X_i)  =P(H_i = 0 |  U_i, X_i) =  \frac{(1 - w_{X_i}) h_0(U_i)}{ h_{X_i}(U_i)}, i =1, \dots, m.
\end{equation} 
Since  each $\mbox{CLfdr}_i$ is a function of $U_i$ conditional on $X_i$,
we let $\text{CLfdr}_x(u):(0,1) \rightarrow (0,1)$ be the corresponding function defined on the $u$-value scale for a given  realized covariate value $x$. Related data-driven CLfdr procedures involve first estimating the CLfdr statistics, and second determining a threshold for them using, for example, step-wise algorithms (\citealp{sun2007oracle}), randomized rules (\citealp{Basetal18}) or linear programming (\citealp{heller2021optimal}).   
However, the first estimation step poses significant challenges as it boils down to a hard \emph{density regression} problem \citep{dunson2007bayesian}. For example, to estimate $f_{x}(\cdot)$ (or equivalently $h_x(\cdot)$), a line of works  \citep{scott2015false, tansey2018false, deb2021two} proceeds by assuming a fixed alternative density, i.e.
\begin{equation} \label{FDRregAssumption}
f_{1, x}(z) \equiv f_1(z),
\end{equation}
to make way for the application of an EM algorithm. If the assumption fails to hold, the CLfdr statistics can be poorly estimated and lead to both invalid FDR control and adversely affected power. The non-parametric CARS procedure developed in \citet{cai2019covariate} does not require the assumption in \eqref{FDRregAssumption}. However, it still employs the CLfdr statistics as its basic building blocks, which are estimated with kernel density methods. Due to the curse of dimensionality, the methodology becomes unstable in the presence of multivariate covariates, which has limited its applicability. 

By contrast, ZAP strives to sensibly emulate the oracle procedure without heavy reliance on the quality of the CLfdr estimates, which is its key strength. To motivate our data-driven procedures in the next sections, we shall first discuss a prototype ZAP procedure to illustrate two key steps of our testing framework: (a) how to combine $Z_i$ (or equivalently $U_i$) and $X_i$ for assessing the significance of hypotheses; and (b) how to threshold the new significance indices. 

Step (a) involves the construction of an \emph{assessor function}\footnote{Or simply known as an assessor.} $a_x(u) : (0, 1) \rightarrow (0, 1)$, which seeks to approximate the $\text{Clfdr}_x(u)$ function  to integrate the information in both the $u$-value and covariate. For the present assume that $a_x(\cdot)$ is pre-determined. Let $T_i\equiv a_{X_i}(U_i)$ be the new significance index for $i$  and $c_i(t)\equiv P(T_i \leq t |H_i = 0, X_i)$ be its null distribution conditional on $X_i$. Assume that $c_i(\cdot)$ is continuous and strictly increasing\footnote{Both  are true  as the consequences of the way we will construct $a_x(\cdot)$; see the discussion after  \lemref{recipProp} in \appref{workModProp}.}, and denote its inverse by $c_i^{-1} (\cdot)$. All hypotheses will then be ordered according to the $T_i$'s, with a smaller $T_i$ indicating a more significant hypothesis.

In Step (b), we aim to determine a threshold for the $T_i$'s to control the FDR. This involves the construction of a conservative FDP estimator for any candidate threshold $t$  by the Barber-Candes (BC) method  \citep{arias2017distribution, barber2015controlling}:
\begin{equation} \label{FDP_est_prototype}
\widehat{\mbox{FDP}}(t) \equiv \frac{ 1 + \# \{i: S_i  \geq 1- c_i(t)\}}{\# \{i:   T_i \leq  t \} \vee 1} =\frac{ 1 + \# \{i: T^ \mathfrak{m}_i \leq  t \}}{\# \{i:   T_i \leq t \} \vee 1},
\end{equation} 
where, given that $i$ is a true null,  $S_i \equiv c_i(T_i)$ is the probability of realizing a smaller significance index and $T^ \mathfrak{m}_i\equiv c_i^{-1} (1-S_i)$ is the \emph{mirror statistic} that ``reflects" $T_i$'s position in the distribution $c_i$. Define 
\begin{equation} \label{t_alpha_ideal}
\hat{t}(\alpha) \equiv \max\{ t \in (0, t_{\max}]:\widehat{\mbox{FDP}}(t)   \leq \alpha \}, 
\end{equation}
where $t_{\max} \equiv  \max\{t: c_i(t) \leq 0.5 \text{ for all } i \}$. It follows from  \citet[Lemma 1]{barber2019knockoff} that a procedure which rejects $i$ whenever $T_i \leq \hat{t}(\alpha)$ controls the FDR at level $\alpha$; see \appref{Pf_ideal}.  Importantly,  the FDR  is controlled under the desired level $\alpha$ whether $a_x(\cdot)$ is a  good approximation of $\text{CLfdr}_x(\cdot)$ or not.

However, the assessor $a_x(\cdot)$, which is taken as pre-determined thus far, is to be estimated from the observed data in practice. This leads to additional difficulties in both methodological and theoretical developments; for one thing, the theory in \cite{barber2019knockoff} cannot be directly applied to prove the FDR controlling property.  Section  \ref{sec:betaMod} discusses  a working beta-mixture model, whose parameters can be estimated from the observed data and subsequently used to construct a data-driven assessor  $\hat a_x(\cdot)$. From there we can test the hypotheses in a data-driven manner by either implementing the prototype procedure directly using $\hat a_x(\cdot)$ as if it is pre-determined (\secref{ZAPasymp}), or mimic the prototype procedure in a more nuanced manner by leveraging the partial data masking technique   in \cite{lei2016adapt} (\secref{ZAPfinite}).
 These two variants of ZAP have their own relative strengths and weaknesses: the direct approach offers asymptotic FDR control under suitable regularity conditions, and is both computationally and power efficient, while the data masking approach offers finite-sample FDR control but is computationally intensive and moderately less powerful in practice; see \appref{speed} for a discussion on aspects of their computational costs.

\subsection{A beta-mixture model}  \label{sec:betaMod}

We now develop a working model to approximate \eqref{U_cond_dens}, which will be subsequently used to construct the assessor. We propose to capture the overall shape of $h_x(u)$ using a three-component mixture: 
\begin{equation}\label{hu-beta}
h_{x}(u) = (1-\pi_{l, x} - \pi_{r, x}) h_0(u)+\pi_{l, x} h_{l, x}(u)+\pi_{r, x}h_{r,x}(u), 
\end{equation}
where, given $X_i=x$, $\pi_{l,x}$ and $\pi_{r,x}$ respectively denote the mixing probabilities that $\mu_i<0$ and $\mu_i>0$ \footnote{The different symbols $\pi_{l,x}$ and $\pi_{r,x}$ are used in the working model. In the true data generating model \eqref{U_cond_dens}, the mixing probability is  denoted $w_x$.}, and $h_{l,x}$ and $h_{r,x}$ respectively represent the densities of the negative and positive effects (on the left and right sides of the null). Our working model assumes that $\pi_{l,x}$ and $\pi_{r,x}$ are multinomial probabilities with regression parameter vectors $\theta_l$ and $\theta_r$:
$$
\pi_{l,x} = \frac{\exp( \tilde{x}^T \theta_l)}{1 + \exp( \tilde{x}^T \theta_r) + \exp( \tilde{x}^T\theta_l)}, \;
  \pi_{r, x} = \frac{ \exp( \tilde{x}^T \theta_r)}{1 + \exp( \tilde{x}^T\theta_r) + \exp( \tilde{x}^T \theta_l)},
$$
where $\tilde{x}= (1, x^T)^T$ is the intercept-augmented covariate vector. Further, $h_{l,x}$ and $h_{r,x}$ are chosen to be beta densities with regression parameters $\beta_l$ and $\beta_r$: 
$$
h_{l,x}(u) = \frac{1}{B(k_{l,x}  ,\gamma_l ) } u^{k_{l,x}  - 1} (1 - u )^{\gamma_l - 1}, 
h_{r,x}(u) =\frac{1}{B( \gamma_r , k_{r,x} ) } u^{\gamma_r - 1} (1 - u )^{k_{r,x} - 1},
$$
where $k_{l,x} = \{1 + \exp(- \tilde{x}^T \beta_l)\}^{-1}$ and $k_{r,x} = \{1 + \exp(- \tilde{x}^T \beta_r)\}^{-1}$, for  two \emph{fixed} shape parameters $\gamma_l$ and $\gamma_r$. $h_{l,x}$ and $h_{r,x}$ are respectively left-leaning (right-skewed) and right-leaning (left-skewed) functions. We require that $\gamma_l>2$ and $\gamma_r> 2$ to ensure that both  are strictly monotone and \emph{convex}, and thus provide a reasonable approximation to the underlying true density in practice; see \lemref{convex} in \appref{workModProp} for a precise result. The exact choices for $\{\gamma_l, \gamma_r\}$ will be further discussed in \secref{beta_expr}. 
The working model may be generalized to capture non-linearity in $x$ using, say, spline functions. 

Beta mixtures have long been identified as a flexible modeling tool for variables taking values in the unit interval; see \citet{pounds2003estimating, ji2005applications, parker1988identifying, markitsis2010censored, migliorati2018new, ferrari2004beta} for related works. In the context of covariate-adaptive multiple testing, \cite {lei2016adapt} and \cite{zhang2020covariate} employ a two-component beta-mixture model for the $p$-values that consists of a uniform  and another left-leaning beta component. Our working model defined on the $u$-value scale can be viewed as a natural extension of these works to capture important patterns in the $u$-value distribution associated with two-sided covariate-adaptive testing.

The assessor can be constructed as the $\text{CLfdr}_x(\cdot)$ function with respect to our working model \eqref{hu-beta}. Since $h_0 \equiv 1$, it follows that 
\begin{equation} \label{assessor_form}
a_x(u) \equiv \frac{1 -\pi_{l,x} - \pi_{r,x}}{1 - {\pi}_{l,x} - {\pi}_{r,x} + {\pi}_{l,x} {h}_{l,x}(u) + {\pi}_{r,x}{h}_{r,x}(u)}, \quad 0<u<1.
\end{equation} 
The corresponding data-driven assessor is denoted by $\hat{ a}_x(u)$ if the parameters $\{\theta_l, \theta_r, \beta_l, \beta_r\}$ are estimated from the data for its construction. 
\subsection{Asymptotic ZAP} \label{sec:ZAPasymp}

We now develop a direct data-driven version of the prototype algorithm in \secref{assessor}. To construct $\hat{ a}_x(u)$, we first obtain the maximum likelihood estimates (MLE) of   the  unknown regression parameters $\{\theta_l, \theta_r, \beta_l, \beta_r\}$  with the data $\{U_i, X_i\}_{i =1}^m$; the EM algorithm for their computations are provided in \appref{EM_zap_asymp}. Denote $\hat{T}_i \equiv \hat{a}_{X_i}(U_i)$, and let $\hat{c}_i (\cdot)$ be its null distribution by treating $\hat{a}_{X_i}(\cdot)$ as if it is pre-determined. With $\hat{S}_i \equiv  \hat{c}_i (\hat{T}_i)$, the estimated mirror statistics are correspondingly defined as $\hat{T}^ \mathfrak{m}_i  \equiv \hat{c}_i^{-1}(1 -\hat{S}_i)$, which can be computed numerically by performing quantile estimation. The FDP for a candidate threshold $t$ can be estimated as 
\begin{equation} \label{FDP_est_zap_asymp}
\widehat{\mbox{FDP}}_{asymp}(t) \equiv
 \frac{1 + \#\{  i: \hat{T}^ \mathfrak{m}_i \leq t\}}{\# \{i: \hat{T}_i \leq t\} \vee 1}.  \end{equation}
Define
$\hat{t}_{asymp} (\alpha)\equiv \sup\{ 0 \leq t \leq 1 : \widehat{\mbox{FDP}}_{asymp } (t) \leq \alpha \}$, and reject $i$ whenever $\hat{T}_i \leq \hat{t}_{asymp} (\alpha)$. In practice, it suffices to consider only the values of $\hat{T}_1, \dots, \hat{T}_m$ as candidate thresholds. This procedure is summarized in \algref{zapAsymp}.

\begin{algorithm}[h]
\caption{Asymptotic ZAP}\label{alg:zapAsymp}

Construct $\hat{a}_{X_i}(\cdot)$'s using the MLEs obtained via the EM algorithm  in \appref{EM_zap_asymp} and compute $\hat{T}_i=\hat{a}_{X_i}(U_i)$ for each $i$.
  
Compute the mirror statistics $\{\hat{T}^ \mathfrak{m}_i\}_{i=1}^m$: \newline
 \begin{inparaenum} 
 \item Generate i.i.d. realizations $u_1, \dots, u_N$ from $\mbox{Unif}(0,1)$ for a large $N$. 
 \item For each $i$, evaluate $\hat{a}_{X_i}(u_1), \dots, \hat{a}_{X_i}(u_N)$ to simulate the null distribution $\hat{c}_i(\cdot)$. Compute $\hat{T}^\mathfrak{m}_i$ via e.g. \texttt{quantile()} in \texttt{R}. 
 \end{inparaenum}
 
Order $\{\hat{T}_i\}_{i=1}^m$ as $\hat{T}_{(1)} \leq \dots \leq \hat{T}_{(m)}$. Reject $i$ if $T_i \leq  \hat{T}_{(k)}$, 
where $ 
k=\max \left\{ l \in \{1, \dots, m\} : \frac{1  + \# \{ i:  \hat{T}^ \mathfrak{m}_i \leq  \hat{T}_{(l)}\} }{l\vee 1}\leq \alpha \right\}.$

\end{algorithm}

The main theory requires the following classical assumption from the literature on misspecified models \citep{white1981consequences, white1982maximum}:

\begin{assumption}[Existence of a unique maximizer] \label{assump:cpct}
The expected log-likelihood
\[
 \mathbb{E} \log [ (1 - \pi_{l, X_i} - \pi_{r, X_i}) + \pi_{l, X_i}h_{l, X_i} (U_i) + \pi_{r, X_i} h_{r, X_i} (U_i) ]
\]
of the beta-mixture model \eqref{hu-beta} 
has a unique maximum at $\{\theta^*, \beta^*\}$ over $\theta \in \Theta$ and $\beta \in \boldB$ for  compact spaces $\Theta$ and $\boldB$, where $\theta \equiv (\theta_l^T, \theta_r^T)^T$ and $\beta \equiv (\beta_l^T, \beta_r^T)^T$. The expectation is taken with respect to the true joint distribution of $\{H_i, Z_i, X_i\}$.
\end{assumption}
Together with \assumpsref{regularity} - \assumpssref{convgAssump} in \appref{add_assume}, which are standard regularity and strong-law conditions, we can prove the following asymptotic FDR controlling property.  

\begin{theorem} \label{thm:ZAPasympControl}
Let $\hat{a}_x(\cdot)$ be constructed with the MLE 
$
\{\hat{\theta}, \hat{\beta} \} 
= \underset{\theta, \beta}{\mathrm{argmax}} \sum_{i =1}^m\log h_{X_i} (U_i)
$
of the beta-mixture model \eqref{hu-beta}. 
Under \assumpsref{cpct}-\assumpssref{convgAssump}, 
the procedure that rejects $i$ whenever $\hat{T}_i \leq \hat{t}_{asymp} (\alpha)$ controls the FDR asymptotically in the sense that 
$\limsup_{m \rightarrow \infty} \mbox{FDR} \leq \alpha.$
\end{theorem}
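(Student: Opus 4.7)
The plan is to reduce the argument to an analysis involving the deterministic \emph{population-level} assessor $a^*_x$ constructed from $\{\theta^*, \beta^*\}$, by exploiting consistency of the beta-mixture MLE. Under \assumpref{cpct} together with the regularity and strong-law conditions \assumpsref{regularity}--\assumpssref{convgAssump}, classical M-estimation theory for misspecified models yields $\{\hat\theta, \hat\beta\} \to \{\theta^*, \beta^*\}$ almost surely as $m \to \infty$. Since the assessor in \eqref{assessor_form} is a continuous function of $(\theta,\beta)$ over the compact parameter space $\Theta \times \boldB$, this lifts to a uniform convergence $\hat a_x(u) \to a^*_x(u)$ in $(x,u)$, and hence $\hat T_i \to T^*_i \equiv a^*_{X_i}(U_i)$, $\hat c_i \to c^*_i$, and $\hat T_i^{\mathfrak m} \to T_i^{*\mathfrak m} \equiv c_i^{*-1}\bigl(1 - c_i^*(T_i^*)\bigr)$.

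Next I would introduce the population analogues of the FDP process. Writing
\[
F^*(t) \equiv P(T_i^* \leq t), \quad F^{*\mathfrak m}(t) \equiv P(T_i^{*\mathfrak m} \leq t), \quad V^*(t) \equiv P(T_i^* \leq t,\, H_i = 0),
\]
define the oracle and mirrored FDPs $\mbox{FDP}^*(t) = V^*(t)/F^*(t)$ and $\overline{\mbox{FDP}}^*(t) = F^{*\mathfrak m}(t)/F^*(t)$. The key observation is a null-symmetry: conditional on $H_i = 0$ and $X_i$, one has $U_i \sim \mbox{Unif}(0,1)$, so $T_i^*$ has conditional CDF $c_i^*$ (continuous and strictly increasing by the construction discussed in \secref{assessor}). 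Consequently $c_i^*(T_i^*) \sim \mbox{Unif}(0,1)$, and because $1 - c_i^*(T_i^*) \sim \mbox{Unif}(0,1)$ as well, the mirror statistic satisfies $T_i^{*\mathfrak m} \stackrel{d}{=} T_i^* \mid H_i = 0, X_i$. Marginalizing, $V^*(t) = P(T_i^{*\mathfrak m} \leq t,\, H_i = 0) \leq F^{*\mathfrak m}(t)$, so that $\mbox{FDP}^*(t) \leq \overline{\mbox{FDP}}^*(t)$ for every $t$.

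The third step is to promote the pointwise convergences to a uniform law of large numbers. Because the quantities in \eqref{FDP_est_zap_asymp} are empirical sums of indicators of the continuous functions $\hat a_x(\cdot)$, and the class $\{(x,u) \mapsto I(a_{x;\theta,\beta}(u) \leq t) : (\theta,\beta,t) \in \Theta \times \boldB \times [0,1]\}$ is Glivenko-Cantelli (by compactness of the parameter space and monotonicity in $t$), plugging in the consistent $\{\hat\theta,\hat\beta\}$ yields
\[
\sup_{t}\,\bigl|\widehat{\mbox{FDP}}_{asymp}(t) - \overline{\mbox{FDP}}^*(t)\bigr| \to 0
\]
almost surely on any $t$-region where $F^*(t)$ is bounded below. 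Letting $t^* \equiv \sup\{t : \overline{\mbox{FDP}}^*(t) \leq \alpha\}$, a standard argmax-continuity argument then delivers $\hat t_{asymp}(\alpha) \to t^*$ in probability, so that the realized FDP satisfies $\mbox{FDP}(\hat t_{asymp}) \to \mbox{FDP}^*(t^*) \leq \overline{\mbox{FDP}}^*(t^*) \leq \alpha$ almost surely. Since $\mbox{FDP} \in [0,1]$, bounded convergence gives $\limsup_m \mbox{FDR} \leq \alpha$.

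The main obstacle, I expect, is controlling the FDP in the boundary region where $F^*(t)$ is small: one cannot divide by a vanishing denominator, and the extra $+1$ in the numerator of $\widehat{\mbox{FDP}}_{asymp}$ plays a crucial role here by preventing the data-driven threshold from straying into a regime with negligibly many rejections. A second subtlety is that $\hat a_x$, $\hat c_i$, and $\hat T_i^{\mathfrak m}$ all share randomness through $\{\hat\theta,\hat\beta\}$, so the exchangeability that underlies the clean prototype Barber-Candes argument in \secref{assessor} breaks at the finite-sample level and is recovered only in the limit; navigating this dependence is exactly what forces the misspecification-MLE theory and ULLN machinery into the proof rather than a direct finite-sample calculation.
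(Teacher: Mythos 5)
Your proposal follows essentially the same route as the paper's proof: almost-sure convergence of the misspecified-model MLE to $\{\theta^*,\beta^*\}$, a uniform law of large numbers for the three empirical processes, the null mirror symmetry ($T_i^{*\mathfrak m}$ has the same conditional law as $T_i^*$ given $H_i=0$ and $X_i$, which is exactly the paper's inequality $\bar G_0(t)\ge \pi_0 G_0(t)$), localization of $\hat t_{asymp}(\alpha)$ inside a compact interval $[\underline t,\bar t]$ via \assumpref{convgAssump}, and reverse Fatou at the end. Your closing remarks about the small-denominator boundary and the shared randomness through $\{\hat\theta,\hat\beta\}$ correctly identify why a finite-sample Barber--Cand\`es argument does not go through.

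The one step you substantially under-justify is the Glivenko--Cantelli claim: asserting that the indicator class is GC ``by compactness of the parameter space and monotonicity in $t$'' skips precisely where the paper does its hardest work. The summands are independent but not identically distributed (their conditional laws depend on $X_i$), and uniform convergence of $\hat a_x(\cdot)$ to $a_x^*(\cdot)$ does not by itself control the probability mass of the symmetric difference between the level sets $\{u:\hat a_{X_i}(u)\le t\}$ and $\{u:a^*_{X_i}(u)\le t\}$ --- near the minimum of the reciprocal assessor a small perturbation of the function could in principle move the level set by a non-negligible amount. The paper closes this via the event-inclusion lemmas (\lemsref{critLem1} and \lemssref{critLem2}), which exploit the strict convexity and ``bowl shape'' of $b_i=1/a_i$ (\lemref{recipProp}, \lemref{uniformity}, including the quadratic minorant argument in \lemref{uniformity}$(iv)$) to show that nearby level sets differ by intervals of uniformly small Lebesgue measure, and then uses \assumpref{regularity}$(ii)$ to convert those Lebesgue bounds into probability bounds uniformly over $i$. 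Some such anti-concentration/uniformity argument is indispensable before you may plug $\{\hat\theta,\hat\beta\}$ into the empirical processes, so you should treat that step as the core of the proof rather than a routine ULLN. (A minor further point: you do not need $\hat t_{asymp}(\alpha)\to t^*$; it suffices, as in the paper, that the threshold eventually lies in $[\underline t,\bar t]$ where the uniform one-sided bound $\widehat{\mbox{FDP}}_{asymp}(t)\ge FDP(t)-o(1)$ holds.)
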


We highlight two aspects of this result. First, it doesn't require  the estimated assessor function to be a good proxy for $\text{CLfdr}_x(\cdot)$. Hence, its theory is  more attractive than that of \cite{cai2019covariate}, which requires consistent CLfdr estimates to ensure asymptotic FDR control. 
Second,  to establish Glivenko-Cantelli results (\lemref{GCthm}) for the following three empirical processes 
\begin{equation*}
m^{-1} \sum_{i = 1}^m I\left(\hat{T}_i \leq t\right), \quad
m^{-1} \sum_{i = 1}^m (1 - H_i) I\left( \hat{T}_i  \leq t\right), \quad
m^{-1}\sum_{i = 1}^m I\left\{\hat{S}_i \geq  1 - \hat{c}_i(t)\right\},
\end{equation*}
 typical of similar asymptotic analyses \citep{zhang2020covariate, storey2004strong}, we heavily utilize the 
convexity properties  (\lemref{recipProp}) of the functional form in \eqref{assessor_form} to uniformly control the deviations of the estimated assessors $\hat{a}_{X_i}(\cdot)$ from the assessors $a_{X_i}^*(\cdot)$ constructed with the population parameters $\{\theta^*, \beta^*\}$;
the delicate techniques involved may be of independent interest.

\subsection{Finite-sample ZAP} \label{sec:ZAPfinite}




This section introduces an alternative ZAP procedure that offers finite-sample control of the FDR. The operation again involves approximating the CLfdr statistics via an assessor function. However, the thresholding step is based on a more nuanced approach to  FDP estimation inspired by the p-value method AdaPT \citep{lei2016adapt}. In this approach,  multiple testing is conducted in an iterative manner, where data are initially partially masked and then gradually revealed at steps $t=0, 1, \ldots$, with the thresholds sequentially updated based on the revealed data at each step. In what follows, if $g_1(\cdot)$ and $g_2(\cdot)$ are two functions defined on the same space, $g_1 \preceq g_2$ means $g_1(x) \leq g_2(x)$ for all $x$ in that space. If $C$ is a constant, $g_1 \preceq C$ means $g_1(x) \leq C$ for all $x$. Similarly we can define $g_1 \succeq g_2$ and $g_1 \succeq C$.

Since the inherent convexity\footnote{Refer to \lemref{recipProp}} of the assessor functional form in \eqref{assessor_form} suggests that rejecting hypotheses with small values of $T_i = a_{X_i}(U_i)$ amounts to rejecting extreme $u$-values near 0 or 1, our iterative algorithm emulates this essential operational characteristic of the prototype procedure. We first divide the covariate values into a left and right group based on the observed $u$-values:
\[
\mathcal{X}_l = \{X_i: U_i \leq 0.5\} \text{ and } \mathcal{X}_r = \{X_i: U_i > 0.5\}.
\]
At each step $t = 0, 1, \dots$, let $s_{l,t}: \mathcal{X}_l \rightarrow [0, 0.25]$ and $s_{r,t}: \mathcal{X}_r \rightarrow [0.75, 1]$ denote two corresponding thresholding functions, and define the candidate rejection set $\mathcal{R}_t \equiv \mathcal{R}_{l,t} \cup   \mathcal{R}_{r,t}$, where
\begin{equation}\label{R_t}
\mathcal{R}_{l,t} \equiv \{i:  U_i  \leq  s_{l,t}(X_i) \wedge  0.5\} \text{ and }  \mathcal{R}_{r,t} \equiv \{i:   U_i \geq  s_{r,t}(X_i) \vee 0.5\}. 
\end{equation}
Let $\mathcal{A}_t \equiv \mathcal{A}_{l,t} \cup \mathcal{A}_{r,t}$ be the corresponding set of ``accepted" hypotheses, where
$$
\mathcal{A}_{l,t} \equiv \{i: 0.5 - s_{l,t}(X_i) \leq U_i \leq 0.5 \} \text{ and }
 \mathcal{A}_{r,t} \equiv \{i: 0.5 < U_i \leq1.5 - s_{r,t}(X_i) \}.
$$
Intuitively, $|\mathcal{A}_{l,t}|$ estimates the number of false rejections in the left candidate rejection set $\mathcal{R}_{l,t}$: Given $H_i = 0$ and $U_i \leq 0.5$,  the events $\{U_i < s_{l,t}(X_i)\}$ and $\{U_i > 0.5- s_{l,t}(X_i)\}$ are equally likely. The FDP of possibly rejecting $\mathcal{R}_{t}$ at step $t$ can then be estimated as
\begin{equation}\label{FDP_ZAP_finite}
\widehat{\mbox{FDP}}_{finite}(t)= \frac{ 1+ |\mathcal{A}_t|}{|\mathcal{R}_t |\vee 1}.
\end{equation}
If $\widehat{\text{FDP}}_{finite}(t)\leq\alpha$,  the algorithm terminates and the hypotheses in $\mathcal{R}_t$ are rejected. Otherwise, the algorithm proceeds to the next step $t+1$ and updates the two thresholding functions under two restrictions. First, it must be that $s_{l,t+1}\preceq s_{l,t}$ and $s_{r,t+1} \succeq s_{r,t}$; this ensures that $\mathcal{R}_t$ shrinks in size as $t$ increases. Second, $s_{l,t+1}$ and $s_{r,t+1}$ must be updated based on the knowledge of $|\mathcal{R}_t|$, $|\mathcal{A}_t|$ and the  partially masked data $\{\tilde{U}_{t, i}, X_i\}_{i = 1}^m$ only, where 
\begin{equation} \label{U_tilde}
\widetilde{U}_{t, i}   \equiv 
\begin{dcases*}
        U_i  & if $U_i \not\in  \mathcal{A}_t \cup  \mathcal{R}_t$\\
        \{ U_i , \widecheck{U}_i\} & if $U_i \in  \mathcal{A}_t \cup  \mathcal{R}_t$
        \end{dcases*}
\end{equation}
is a singleton or a two-element set depending on whether $i$ is in the ``masked" set $\mathcal{A}_t \cup \mathcal{R}_t$, and $\widecheck{U}_i$ is  the ``reflection" of $U_i$ about the  ``middle" axis at $u = 0.25$ or $u = 0.75$, depending on which group (left or right) $U_i$ belongs to:
$$
\widecheck{U}_i \equiv (1.5 - U_i)  I(U_i > 0.5) +(0.5 - U_i ) I(U_i \leq 0.5).
$$
For example, if the underlying $U_i$ is $0.1$ and $i$ is  masked  at step $t$, the algorithm can only update for $s_{l, t+1}$and $s_{r, t+1}$  with the partial knowledge that $U_i$ is either $0.1$ or its reflection value $0.4$.   \algref{updateThreshold} in \appref{EM_finite_update} describes one such updating scheme which first applies an  EM algorithm (\appref{EM_finite_ZAP}) acting only on the partially masked data  to estimate the beta-mixture model \eqref{hu-beta}. With the latter subsequently used to produce estimated $\text{CLfdr}_i$'s for comparing the significance of the  hypotheses in the masked set  $ \mathcal{A}_t \cup  \mathcal{R}_t$, the thresholding functions are updated in such a way that the masked $i$ deemed to be the least significant will have its $u$-value  revealed at step $t+1$. 
\figref{uvPlot} illustrates how the data $\{U_i, X_i\}_{i=1}^m$ are partitioned into $\mathcal{A}_t$, $\mathcal{R}_t$ and the unmasked set $\{1, \dots, m\} \backslash \{\mathcal{A}_t \cup \mathcal{R}_t\}$ at a given step $t$, based on Example 2.2 in \secref{infoLoss}.  In particular,  we remark that the algorithm cannot tell the true data point from a given red-pink (blue-cyan)  pair in the plot $(b)$ where the reflection points $\{\widecheck{U}_i\}_{i \in \mathcal{A}_t \cup \mathcal{R}_t}$ are also shown.

\begin{figure}[h]
\centering
\includegraphics[width= \textwidth]{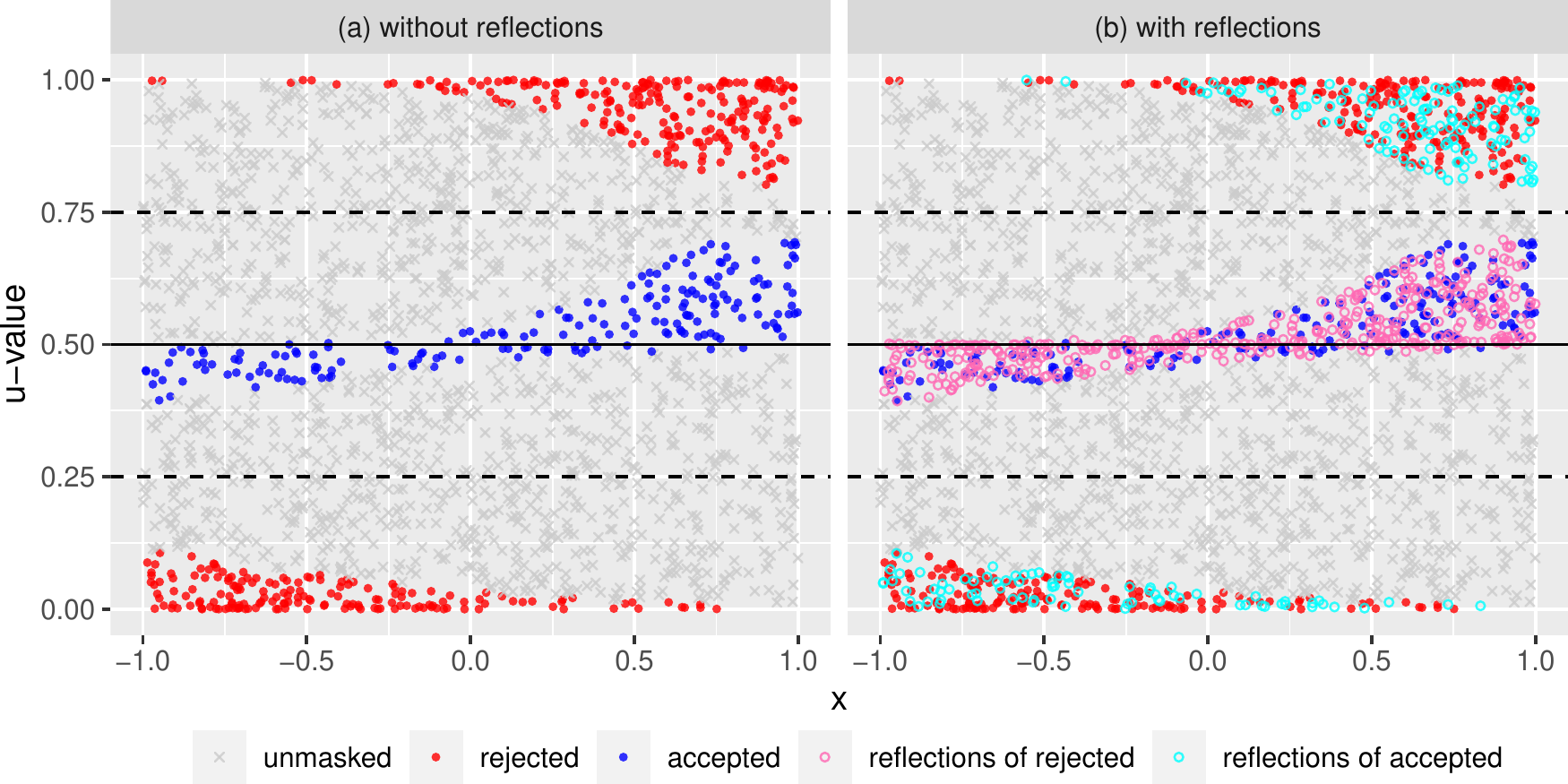}
\caption{ 
Illustration of  \algref{zapFinite} at a step $t$ based on Example 2.2, $m = 2000$. {\bf (a)}: The red, blue and grey are points in the respective sets $\mathcal{R}_t$, $\mathcal{A}_t$ and $\{1, \dots, m\}\backslash \{\mathcal{A}_t \cup \mathcal{R}_t\}$, where $\widehat{\text{FDP}}_{finite} (t) = 193/389 \approx 0.5$.  {\bf (b)}: The reflections of the points in $\mathcal{R}_t $ and $\mathcal{A}_t$ are respectively shown in pink and cyan. A pink $\widecheck{U}_i$ below (above) $0.5$ is the reflection of a red $U_i$ with the same covariate value about the middle axis at $u = 0.25$ ($u = 0.75$) of the left (right) group; the cyan are the reflections of the blue.  }
\label{fig:uvPlot}
\end{figure}

The steps described above are summarized in \algref{zapFinite}, whose  finite-sample FDR controlling property is stated in  \thmref{ZAPfiniteControl}.

\begin{algorithm}[h]
\caption{Finite-sample ZAP}
 \label{alg:zapFinite} 
\KwData{ $\{U_i, X_i \}_{i = 1}^m$}
\KwIn{FDR level $\alpha$, initial thresholding functions $s_{l,0} \preceq 0.5$ and $s_{r,0} \succeq 0.5$; }

\For{t= 0,1 \dots, }{
Compute $\widehat{\mbox{FDP}}_{finite}(t)$  in \eqref{FDP_ZAP_finite}\;
  \eIf{$\widehat{\mbox{FDP}}_{finite}(t) > \alpha$}{
   Update $s_{l,t+1}$ and $s_{r,t+1}$ while respecting the two conditions in \thmref{ZAPfiniteControl}. E.g. Apply \algref{updateThreshold} in \appref{EM_finite_update}\;
   }{
      Record $\mathcal{R}_t$; break\;
  }
 }
\KwOut{Reject all hypotheses in $\mathcal{R}_t$.}
\end{algorithm}

\begin{theorem}[Finite-sample FDR control]\label{thm:ZAPfiniteControl}
Under the conditions that
 \begin{inparaenum} 
 \item  $s_{r,t +1 } 	\succeq s_{r,t }$ and  $s_{l,t +1 } 	\preceq s_{l,t }$ and 
 \item $s_{r,t +1 }$, and $s_{l,t +1 }$ are updated based on $|\mathcal{R}_t|$, $|\mathcal{A}_t|$ and $\{\tilde{U}_{t, i}, X_i\}_{i = 1}^m$ only,
 \end{inparaenum} \algref{zapFinite} controls the FDR under $\alpha$ for finite samples. Specifically, we have
\[
\mathbb{E}\left[ FDP\Big|\{H_i, X_i\}_{i = 1}^m\right] \leq \alpha. 
\]
\end{theorem}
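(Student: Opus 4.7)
The plan is to adapt the AdaPT FDR-control argument of Lei and Fithian (2016) to our two-sided $u$-value setting, working throughout conditional on $\{H_i, X_i\}_{i=1}^m$. The crucial symmetry is that for each null $i$ in the left group, $U_i \sim \mathrm{Unif}(0, 0.5)$ conditional on side, and the reflection $u \mapsto 0.5 - u$ preserves this measure (analogously for the right group). Consequently, given the unordered pairs $\{\tilde{U}_{t,j}\}_j$ together with $\{H_i, X_i\}$, the indicators $B_i \equiv I(i \in \mathcal{R}_t)$ for masked nulls are conditionally iid $\mathrm{Bernoulli}(1/2)$. Define $\mathcal{F}_t$ as the $\sigma$-algebra generated by $\{H_i, X_i\}_{i=1}^m$, the thresholds $(s_{l,s}, s_{r,s})_{s \le t}$, the partially-masked data $\{\tilde{U}_{t,i}\}$, and the counts $|\mathcal{R}_s|, |\mathcal{A}_s|$ for $s \le t$; under conditions (i) and (ii), $s_{t+1}$ is $\mathcal{F}_t$-measurable and the termination time $\tau$ is a $\{\mathcal{F}_t\}$-stopping time.

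Set $V_t = |\mathcal{R}_t \cap \mathcal{H}_0|$, $A_t^0 = |\mathcal{A}_t \cap \mathcal{H}_0|$, $N_t^0 = V_t + A_t^0$, and $M_t = V_t/(1 + A_t^0)$, where $\mathcal{H}_0 = \{i : H_i = 0\}$. Because $|\mathcal{R}_t|, |\mathcal{A}_t|$ mix null and alternative contributions, neither $V_t$ nor $A_t^0$ need be $\mathcal{F}_t$-measurable; I therefore analyse the projected process $\tilde{M}_t \equiv \mathbb{E}[M_t \mid \mathcal{F}_t]$. The heart of the argument is the identity $\mathbb{E}[M_{t+1} \mid \mathcal{F}_t, V_t, A_t^0] = M_t$, which I would verify as follows: conditional on the enlarged $\sigma$-algebra $\mathcal{F}_t \vee \sigma(V_t, A_t^0)$, exchangeability of the masked null $B_i$'s forces the number $K$ of nulls transitioning from $\mathcal{R}_t$ into the unmasked set at step $t+1$ to be $\mathrm{Hypergeom}(N_t^0, V_t, k)$-distributed, where $k$ (the total number of nulls unmasked at that step) is $\mathcal{F}_t$-measurable; an inductive calculation---whose $k=1$ base case gives $(V_t/N_t^0)(V_t-1)/(1+A_t^0) + (A_t^0/N_t^0)\cdot V_t/A_t^0 = V_t(V_t+A_t^0)/[N_t^0(1+A_t^0)] = M_t$---then yields the identity for all $k$. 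Unmasking an alternative leaves $(V_t, A_t^0)$ unchanged, so the identity extends trivially to the mixed case. Taking expectation over $(V_t, A_t^0)$ given $\mathcal{F}_t$ and using the tower rule then yields $\mathbb{E}[\tilde{M}_{t+1} \mid \mathcal{F}_t] = \tilde{M}_t$.

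Optional stopping (in the universe conditional on $\{H_i, X_i\}$) yields $\mathbb{E}[\tilde{M}_\tau \mid \{H_i, X_i\}] = \mathbb{E}[M_0 \mid \{H_i, X_i\}] = 1 - \mathbb{E}[2^{-N_0^0} \mid \{H_i, X_i\}] \le 1$, since conditionally $V_0 \mid N_0^0 \sim \mathrm{Bin}(N_0^0, 1/2)$ by the same reflection symmetry. The stopping criterion $(1 + |\mathcal{A}_\tau|)/(|\mathcal{R}_\tau| \vee 1) \le \alpha$ combined with $A_\tau^0 \le |\mathcal{A}_\tau|$ gives
\[
\mathrm{FDP}(\tau) = \frac{V_\tau}{|\mathcal{R}_\tau| \vee 1} \le \alpha \cdot \frac{V_\tau}{1 + |\mathcal{A}_\tau|} \le \alpha \cdot \frac{V_\tau}{1 + A_\tau^0} = \alpha M_\tau,
\]
so taking the conditional expectation on $\{H_i, X_i\}$ concludes $\mathbb{E}[\mathrm{FDP} \mid \{H_i, X_i\}] \le \alpha$. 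The main obstacle is the martingale identity: because $|\mathcal{R}_t|, |\mathcal{A}_t|$ enter $\mathcal{F}_t$ as aggregates coupling null and alternative $B_i$'s, the masked nulls are not iid $\mathrm{Bernoulli}(1/2)$ given $\mathcal{F}_t$ alone; the detour through the enlarged filtration $\mathcal{F}_t \vee \sigma(V_t, A_t^0)$ to access the hypergeometric combinatorics---and verifying that it carries through the simultaneous two-sided (left/right) threshold updates of \algref{zapFinite}---is the main technical step.
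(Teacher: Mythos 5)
Your proposal is correct and takes essentially the same route as the paper: the paper reduces the theorem to Theorem 1 of Lei and Fithian (2016) by introducing the mirror value $m_i$, the fair-coin indicator $b_i = I(0.25 \leq U_i \leq 0.75)$ (your $1-B_i$), and an initial $\sigma$-algebra $\mathcal{G}_{-1}$ under which the masked null indicators are i.i.d.\ Bernoulli$(1/2)$, and then invokes their optional-stopping argument verbatim. The only difference is that you re-derive their key martingale lemma (the hypergeometric step on the enlarged filtration containing the null-only counts $V_t, A_t^0$) rather than citing it; your base-case computation and filtration-enlargement device are exactly what that lemma supplies.
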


Lastly, we highlight a crucial difference between \algref{zapFinite} and AdaPT in the present context.  Operating  on the two-sided $p$-values, AdaPT  proceeds iteratively with a single thresholding function $s_t$ defined on $\{X_i\}_{i=1}^m$ such that $s_t \preceq 0.5$, and the ratio $\frac{1 + |\{i: P_i \geq 1 - s_t(X_i)\}|}{ 1 \vee |\{i: P_i \leq s_t(X_i)\}|}$ is used as an FDP estimator for the candidate rejection set $\{i: P_i \leq s_t(X_i)\}$. 
It is easy to see that 
\begin{equation} \label{oper_char_adapt}
P_i \leq s_t(X_i)  \iff U_i \leq s_t(X_i)/2 \text{ or } U_i\geq 1 - s_t(X_i)/2. 
\end{equation}
Hence, on the $u$-value scale, AdaPT always adopts symmetric rejection regions about $u = 0.5$. By contrast, \algref{zapFinite} employs two different thresholding functions $s_{l,t}$ and $s_{r,t}$, which allow for asymmetric rejection regions, and therefore provides additional flexibility to fully capitialize on covariate information for two-sided tests.  As seen in \figref{uvPlot}, the pattern of the candidate rejection points in red agrees with the middle panel of \figref{oracle_rej_reg}; as the covariate increases from $-1$ to $1$, the algorithm's rejection priorities change from the $u$-values near $0$ to those near $1$.

\subsection{Implementation details}  \label{sec:beta_expr}
An \texttt{R} package \texttt{zap} for our two data-driven methods is available on  \url{https://github.com/dmhleung/zap}, and we shall  discuss further details of their implementation.

For both data-driven procedures, the shape parameters $\{\gamma_l, \gamma_r\}$ of the working model need to be pre-specified before running the EM algorithms.  While requiring $\gamma_l, \gamma_r > 2$ ensures a convex shape for  the three-component beta-mixture density (\lemref{convex}), we recommend choosing $(\gamma_l, \gamma_r) = (4, 4)$ as a default, which has yielded consistently good performance in our numerical studies. 

To illustrate the effectiveness of our recommendation, we simulate 8000 i.i.d. $z$-values $Z_1, \dots, Z_{8000}$ from the normal mixture model
\begin{equation} \label{toy}
0.78 f_0(z)+0.15 \phi(z+1.5)+0.07 \phi(z-2)
\end{equation}
without any covariates. The histogram of the corresponding $u$-values is plotted in \figref{shape}(a), overlaid with the true underlying density function, as well as estimated densities of  the beta mixture  \eqref{hu-beta} fitted with regression intercepts only, where $(\gamma_l,  \gamma_r)$ is respectively fixed at $(1, 1)$ and $(4, 4)$. When modeling p-values with a two-component beta mixture,  \citet{lei2016adapt} and \citet{zhang2020covariate} set an analogous shape parameter to be $1$, so $(\gamma_l,  \gamma_r) = (1, 1)$ would be a seemingly natural choice to extend  their model for two-sided tests. Both fitted densities visually coincide with the true density, attesting to the flexibility of beta mixtures for modeling data on the unit interval. However, the estimated component probabilities differ significantly for $(\gamma_l,  \gamma_r) = (1, 1)$ vs $(\gamma_l,  \gamma_r) = (4, 4)$. In \figref{shape}(b), we present the estimated quantities pertaining to the non-null components. We can see that setting $(\gamma_l,  \gamma_r) = (1, 1)$ has drastically overestimated the left and right non-null probabilities, whereas setting $(\gamma_l,  \gamma_r) = (4, 4)$ provides good approximations to the truths. 
\begin{figure}[t]
\centering
\includegraphics[width=\textwidth]{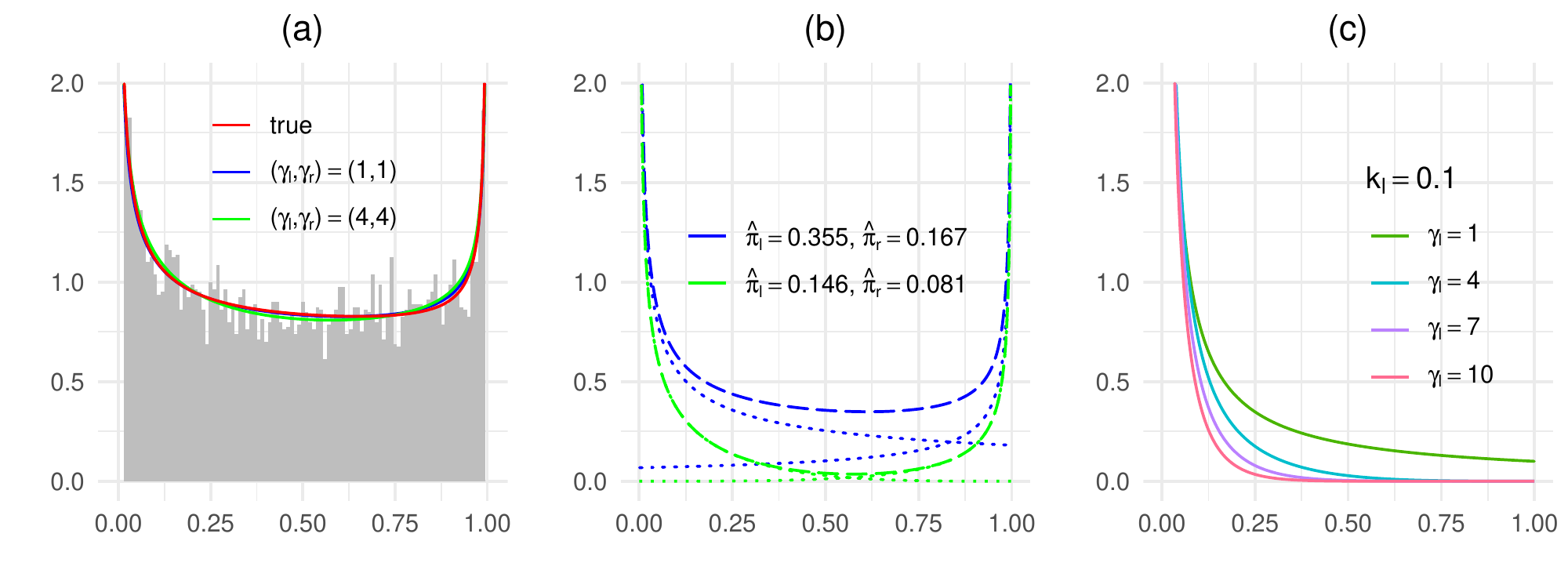}
\caption{
{\bf (a)}: Histogram of $U_i = \Phi(Z_i)$ generated by \eqref{toy}. The red curve is the true density; while the blue and green curves respectively correspond to the estimated densities of our beta-mixture model with $(\gamma_l, \gamma_r)$ set at $(1, 1)$ and $(4, 4)$. 
{\bf (b)}: The long dashed blue curve is the entire non-null component of the solid blue estimated density in $(a)$. The two dotted blue curves are the left-leaning and right-leaning non-null components that add up to the long dashed blue. The green curves are constructed analogously with respect to the solid green density in $(a)$. The legend shows the estimated probabilities  for the left and right-leaning components, where the subscript ``$x$" is omitted from $\hat{\pi}_l$ and $\hat{\pi}_r$ as the fit uses intercepts only. 
{\bf (c)}: Plot of the left-leaning beta density $B(k_l , \gamma_l)^{-1} u^{k_l - 1} (1 - u )^{\gamma_l- 1}$ for $k_l = 0.1$ and different values of $\gamma_l$. }
\label{fig:shape}
\end{figure}
To gain insight into why larger shape parameters are preferred, in \figref{shape}(c) we plot  the density of a left-leaning beta density 
\[
B(k_l , \gamma_l)^{-1} u^{k_l - 1} (1 - u )^{\gamma_l- 1}
\]
for different values of $\gamma_l$ and a fixed $k_l = 0.1$, which supposedly captures the negative effects in two-sided tests. We can see that small values of $\gamma_l$ tend to yield a density component that slants in the middle of the unit interval. As a result, when added to another right-leaning beta density for the positive effects with a similar but mirroring shape, it gives rise to an overall non-null density component with a large plateau in the middle of the interval $(0, 1)$ akin to the U-shaped blue curve in  \figref{shape}$(b)$. This inflates the non-null component probability estimates. In contrast, larger values of $\gamma_l$ and $\gamma_r$ effectively mitigate the issue by rendering sharply convex non-null component densities like the purple and pink curves in \figref{shape}(c), avoiding overestimation of the non-null probabilities and leading to a better estimated overall non-null density component like the U-shaped green curve in \figref{shape}$(b)$. 
More setups are experimented in \appref{betaModSim}; the careful choice of $(\gamma_l, \gamma_r) = (4,4)$  produces reasonable probability estimates throughout.

Other aspects of implementation are as follows. For the asymptotic method (\algref{zapAsymp}), since a large $N$ allows us to compute the mirror statistics up to arbitrary precision, we evaluate at $N = 50000$ uniform realizations by default. For the iterative finite-sample method (\algref{zapFinite}), we set the initial thresholding functions as $s_{l0} \equiv 0.2$ and $s_{r0} \equiv 0.8$, but other values close to $0.25$ and $0.75$ tend to be equally effective.  We also update the thresholding functions   every $\lceil m/100 \rceil$ steps. Ideally one would want to update at every step along the way to reveal the masked $u$-values sooner. However, it is more practical to carry out intermittent updates since the EM component involved in \algref{updateThreshold} is computationally costly. Lastly, one can also perform feature selection at any step if $X_i$ is multivariate, as long as it is done properly based on the masked data, akin to what was suggested by \citet[Section 4.2]{lei2016adapt}. We have not performed this step for simplicity.

\section{Numerical studies}\label{sec:numeric}

We conduct numerical studies to gauge the performance of ZAP alongside other methods on  both simulated and real data.  For expositional considerations, here we only limit the comparisons to  a  selection of representative FDR methods. This makes the ensuing graphs (\figsref{all_setup}-\figssref{real_data_plot}) less cluttered with lines and easier to read. Comparisons with more methods in the literature are included  in \appsref{more_simulation_results} and \appssref{real_data_append}, but the basic conclusions do not change. Here, we consider:
 \begin{enumerate}[(a)]
  \item ZAP (asymp): \algref{zapAsymp} with specifications described in \secref{beta_expr}.
 \item ZAP (finite):  \algref{zapFinite} with specifications described in \secref{beta_expr}.
  \item CAMT:  the covariate-adaptive multiple testing method  \citep{zhang2020covariate}.
   \item AdaPT:  the adaptive $p$-value thresholding method \citep{lei2016adapt}. Their working model is updated based on the EM algorithm for every $\lceil m/100 \rceil$ steps; other default specifications are chosen based on the R package \texttt{adaptMT}.

   \item IHW: Independent hypothesis weighting  \citep{ignatiadis2016data, ignatiadis2021covariate}, based on the default provided by the R package  \texttt{IHW}, version 1.16.0, which provides asymptotic FDR control  \citep[Proposition 1]{ignatiadis2021covariate}. This method only handles univariate covariates.
    \item FDRreg: false discovery rate regression method \citep{scott2015false}. The theoretical null $\mathcal N(0, 1)$ has been used.
    \item AdaPT-$\text{GMM}_g$: A  $z$-value based variant of AdaPT by \citet{chao2021adapt} which is also a  data masking procedure similar to ZAP (finite); unlike ZAP (finite), it instead employs a covariate-dependent Gaussian mixture working model for the $z$-values, which is updated with an EM algorithm for every $\lceil m/100 \rceil$ steps. Other default specifications are chosen as in the R package \texttt{AdaPTGMM}.  \end{enumerate}
Among them, ZAP, AdaPT-$\text{GMM}_g$, and FDRreg are $z$-valued based, while all other methods are $p$-value based.

\subsection{Simulated data}\label{sec:sim_data}

%
%


We simulate data  to test $m = 5000$ hypotheses. Two-dimensional covariates $X_i = (X_{1i}, X_{2i})^T$, $i =1, \ldots, m$, are independently generated from the bivariate normal distribution
$
 \mathcal N\left\{(\begin{smallmatrix}
0 \\
0
\end{smallmatrix}),(\begin{smallmatrix}
1/2 & 0\\
0 & 1/2
\end{smallmatrix})\right\}.
$
Conditional on $X_i = x \equiv (x_1, x_2)^T$, $Z_i$ is generated with  a normal mixture  density
\begin{equation} \label{normMix}
(1 - w_{l, x} - w_{r, x}) f_0(z)+w_{l, x} \phi\left(\frac{z-\mu_{l, x}}{\sigma}\right)+  w_{r, x} \phi\left(\frac{z-\mu_{r, x}}{\sigma}\right), 
\end{equation}
where $w_{l, x}$ and $w_{r, x}$ \footnote{These data generating probabilities $\{w_{l,x}, w_{r, x}\}$ should again be distinguished from $\{\pi_{l, x}, \pi_{r, x}\}$ in the working model \eqref{hu-beta}.} are probabilities that control the sparsity levels of negative and positive effects, $\mu_{l, x} <0$ and $\mu_{r, x} > 0$ are negative and positive non-null normal means, and $\sigma^2$ is the variance of the alternative components.  The covariate-adjusted overall non-null density is then given by
\begin{equation} \label{setup_cond_alt_density}
 f_{1, x}(z)  =  \frac{w_{l, x} \phi( \frac{z-\mu_{l, x}}{\sigma})+  w_{r, x} \phi(\frac{z-\mu_{r, x}}{\sigma})}{w_{l, x} + w_{r, x}} . 
\end{equation}
We shall allow $\{w_{l, x},  w_{r, x}, \mu_{l, x}, \mu_{r, x}\}$ to depend on $x$ in different ways to induce the simulation setups below, which can be considered as more realistic versions of the stylized examples in Section \ref{sec:infoLoss}. We fix $\sigma^2 = 1$ in this section, while other values of $\sigma^2 $ will be explored in \appref{more_simulation_results}, which also contains more extensive simulation studies as well as a comparison of the computational efficiency of the different methods. Note that the sum $X_{\bullet i}\equiv X_{1i} + X_{2i}$ of the covariate components is $\mathcal{N}(0, 1)$-distributed, and $x_{\bullet} \equiv x_1 + x_2$ will denote a realized value of it below. 
\begin{enumerate}[]
\item {\bf Setup 1} (Asymmetric alternatives). \label{set:S1}The quantities in \eqref{normMix} are 
 \begin{multline*}
 w_{r, x} =  \frac{1}{1 + \exp(-   \eta - 	\zeta x_{\bullet})},  \ \ 
\mu_{r, x} =   \frac{2\varepsilon}{1 + \exp ( - 	\zeta x_{\bullet} )}, \ \ w_{l, x} = 0, \ \ \mu_{l, x} =  0,
\end{multline*}
with the simulation parameters  ranging as 
\begin{equation*}
\zeta \in \{0, 0.5, 1\}, \quad \varepsilon \in \{1.3, 1.5, 1.7, 1.9, 2.1\} \text{ and  } \eta = -2.
\end{equation*}
Since $w_{l, x}  = 0$, all the non-null statistics come from the right centered alternative density $\phi(z-\mu_{r, x})$. We briefly explain the simulation parameters.  $\varepsilon$ is an effect size parameter. Generally, $\zeta$ controls the \emph{informativeness} of the covariates in relation to both the non-null probabilities  and alternative means: when $\zeta >0$, a greater value of $x_{\bullet}$ makes the signals denser and stronger (i.e. $w_{r, x}$ and $\mu_{r, x}$ become larger). The value of $\eta$ controls the sparsity levels. For example, when the covariates are non-informative at $\zeta = 0$, setting $\eta = -2$ yields a baseline signal proportion of roughly $12\%$, i.e. $w_{r, x}  = w_{l, x} + w_{r, x}  =  11.9\%$.  Note that $f_{1, x}$ in \eqref{setup_cond_alt_density} varies in $x$ given the dependence of $\mu_{r, x}$ on $x$, so  \eqref{FDRregAssumption} is an invalid assumption.


\medskip

\item {\bf Setup 2} (Unbalanced covariate effects on the non-null proportions). Let
\small{\begin{multline*}
 w_{r,x} =  \frac{\exp(  \zeta x_{\bullet})}{\exp( -\eta ) +  \exp( - \zeta x_{\bullet}) + \exp(\zeta x_{\bullet} )}, \; w_{l,x} =  \frac{\exp(  - \zeta x_{\bullet})}{\exp( -\eta ) +  \exp( - \zeta x_{\bullet} ) + \exp(\zeta x_{\bullet})},  
\end{multline*}}
\normalsize
$\mu_{rx} = \varepsilon$ and $\mu_{lx} = - \varepsilon$. We fix $\eta = -2.5$ and vary other parameters in the range
\[
\zeta \in \{0, 0.7, 1\}, \quad \varepsilon \in \{1.3, 1.5, 1.7, 1.9, 2.1\}.
\]
Only $w_{l,x}$  and $w_{r,x}$ depend on the covariate value: for $\zeta > 0$, $w_{r,x}$ increases and $w_{l,x}$ decreases as $x_{\bullet}$ increases, and vice versa as $x_{\bullet}$ decreases. In consideration of \eqref{setup_cond_alt_density}, the conditional non-null density $f_{1, x}(z)$ will change sharply in shape from concentrating on negative $z$-values to concentrating on positive $z$-values as $x_{\bullet}$ increases from being negative to positive. This relationship provides important structural information which can be leveraged for enhancing the power. However, if one collapses the $z$-values into two-sided $p$-values, then the analogous conditional $p$-value density is less likely to capture drastic changes in  $x_{\bullet}$, since both very negative and positive $x_{\bullet}$ can correspond to very small $p$-values, making the interactive relationship between the $p$-values and the covariates less pronounced. Intuitively, this would lead to power loss of $p$-value based methods. The choice of $\eta$ corresponds to a baseline signal proportion of roughly $14\%$ when $\zeta = 0$.
 
\medskip

 \item {\bf Setup 3} (Unbalanced covariate effects on the alternative means). Let
{\small
\begin{multline*}
 w_{r, x} =  \frac{1/2}{1 + \exp(-   \eta )},  w_{l, x} =  \frac{1/2}{1 + \exp(-   \eta)},
\mu_{r, x} =   \frac{2\varepsilon}{1 + \exp ( - 	\zeta x_{\bullet} )}, \mu_{l, x} =  \frac{- 2\varepsilon}{1 + \exp ( 	\zeta x_{\bullet} )}.
\end{multline*}
}
The simulation parameters range as
\[
\zeta \in \{0, 1.5, 3\}, \quad \varepsilon \in \{1.3, 1.5, 1.7, 1.9, 2.1\} \text{ and } \eta = -2.
\]
Our choice of $\eta$ corresponds to the a baseline signal proportion of roughly  $12\%$ when $\zeta = 0$. 
When the covariates are informative ($\zeta > 0$),   $\mu_{r, x}$ and $\mu_{l, x}$ respectively become more positive and less negative as $x_{\bullet}$ increases. Such a directional relationship can be potentially exploited by ZAP for power improvement. However, if one collapses the $z$-values into $p$-values, then under $H_i = 1$ both very positive and negative values of  $X_{\bullet i}$ can imply a small $P_i$, and the interactive relationship between the main statistic $P_i$ and auxiliary statistic $X_i$ will be much weakened.  
%

\end{enumerate}

\begin{figure}[t]
\centering
\includegraphics[width=\textwidth]{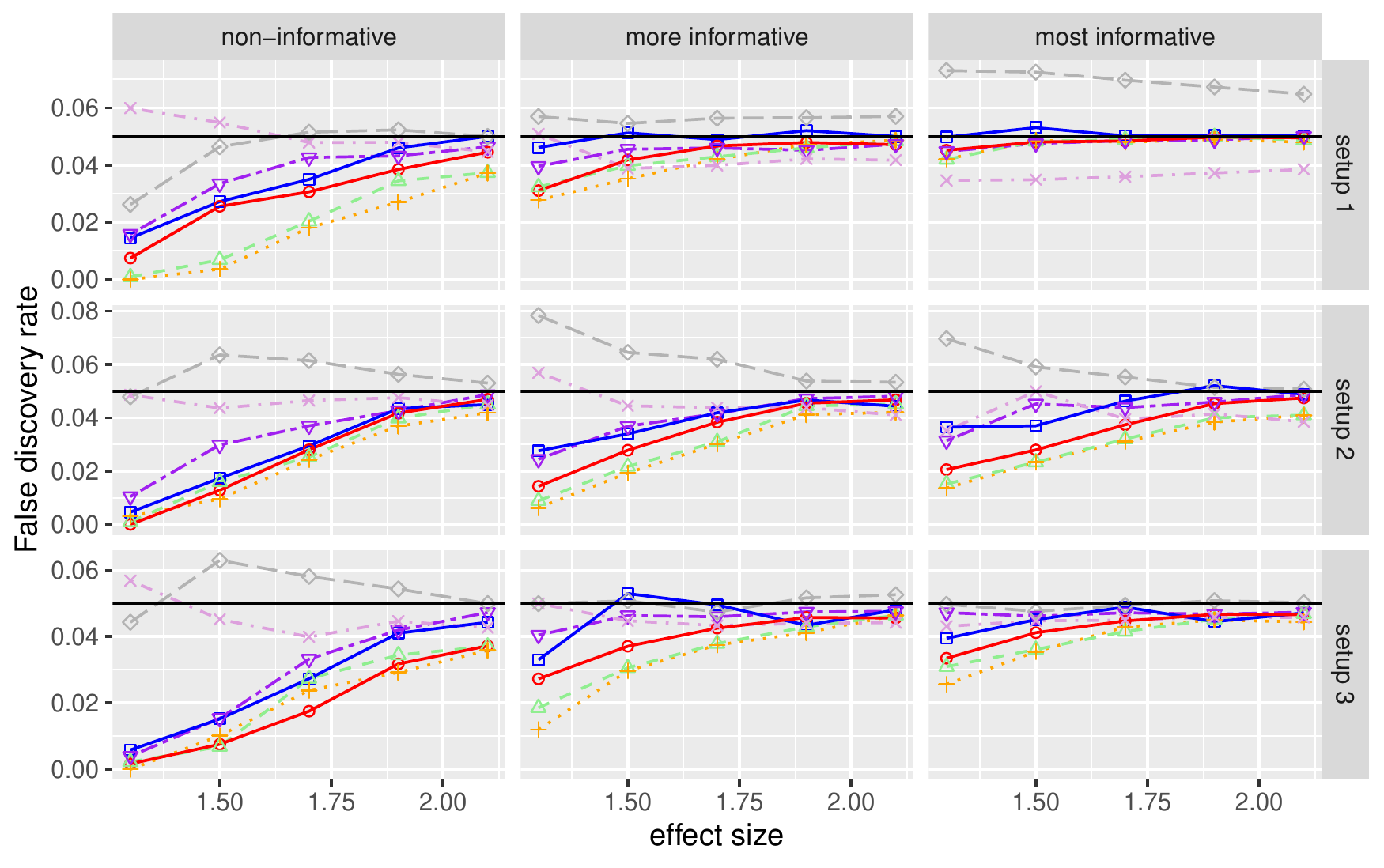}
\includegraphics[width=\textwidth]{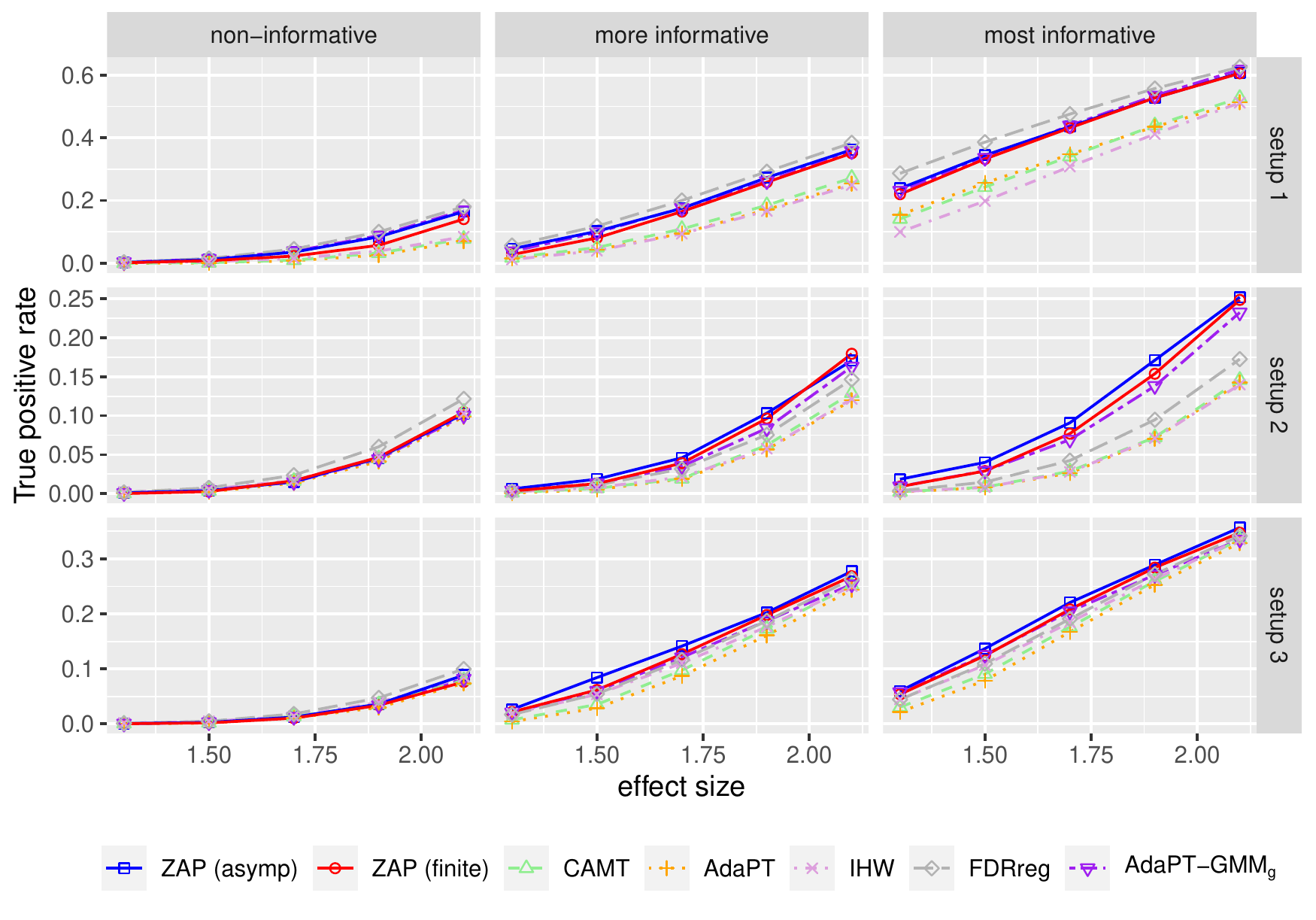}
\caption{
FDR and TPR performances of different methods under {\bf Setup 1 - 3} . All methods are applied at a targeted FDR level of $0.05$ (the horizontal black lines). The x-axes show the values of $\varepsilon$. \texttt{non-informative}, \texttt{more informative} and \texttt{most informative} correspond to different values of $\zeta$ from the smallest to the largest.
 }
\label{fig:all_setup}
\end{figure}

We apply the seven methods   at the nominal FDR level 0.05.  Since IHW can only handle univariate  covariates, it is applied with $X_{\bullet i}$, which is an effective summary covariate in all three setups. 
The simulation results are reported in \figref{all_setup}, where the empirical FDR and TPR levels of different methods are computed based on 150 repetitions. The following observations can be made:

\begin{enumerate}[(a)]

\item Asymptotic ZAP, depicted in blue, is in general more powerful than finite-sample ZAP, depicted in red. This is likely attributable to the latter's information loss from the ``$u$-value masking'' step, which can be perceived as a necessary trade-off for the strong finite-sample FDR controlling property. Even so, the asymptotic ZAP  has its FDR controlled under or around the nominal level $0.05$.

\item Both asymptotic and finite-sample ZAP methods  demonstrate superior performances over the $p$-value based methods (CAMT, AdaPT and IHW). The gains in power become more substantial when the covariates become more informative. The $z$-value based method AdaPT-$\text{GMM}_g$ has quite comparable power to the finite-sample ZAP in these setups too, and we will provide more discussion on the comparison between the them in \secref{discuss}.

\item The covariate-adjusted non-null density \eqref{setup_cond_alt_density} depends on $x$ for all three setups, so FDRreg, which makes the conflicting assumption in \eqref{FDRregAssumption}, is possibly invalid for FDR control. This is indeed observed for a number of settings. Moreover, it can't match the power of ZAP in Setup 2, likely because the assumption $f_{1,x}(z)\equiv f_1(z)$ itself obstructs the interactive information between the $z$-values and the covariates to be utilized.

\item In Setup 3, ZAP only has slightly discernible power advantage over the other methods when the covariates are informative. In fact, no current FDR methods can demonstrate near-optimal power under this setup, as shown in \appref{more_simulation_results}.  In \secref{discuss}, we discuss possible future work that might address this.
\end{enumerate}

\subsection{Real data} \label{sec:real_data}

This section investigates the performance of ZAP  using several publicly available real datasets  summarized in \tabref{DatTab}.  Three datasets (\texttt{bottomly}, \texttt{airway}, \texttt{hippo}) are generated by RNA sequencing (RNA-Seq) experiments for detecting differential expressions in transcriptomes, where the primary statistic $Z_i$ measures the observed difference in the expression level of a gene under two experimental conditions. Meanwhile, an auxiliary covariate, the average normalized read count for each gene, is collected alongside the primary data. The datasets \texttt{bottomly} and \texttt{airway} have been analyzed by the works of \citet{ignatiadis2016data, lei2016adapt, zhang2020covariate} with the methods IHW, AdaPT and CAMT respectively.  The more recent data set \texttt{hippo} \citep{harris2019hippocampal} is generated by the cutting-edge single-cell RNA (sc-RNA) sequencing technology to study differential expressions in mouse  hippocampus. For all datasets above, we have adopted the standard data pre-processing step, which filters out genes with excessively low read counts across samples before further downstream analyses \citep{chen2016reads} such as model fitting and multiple testing. This is a common practice among bioinformaticians for a number of reasons; see \appref{rnaseq_processing} for more discussion. The fourth dataset is based on the experiments in \citet{smith2008spatial} and \citet{kelly2010local}, where each $Z_i$ is a  normalized test statistic  that, for a given pair of neurons  in the primary visual cortex, measures how synchronous  their spike trains are, and \citet{scott2015false} has applied FDRreg to it for detecting neural interactions. Correspondingly, each such hypothesis has two covariates: the distance and the correlation of the ``tuning curves" between the two activated neurons. We have named this dataset \texttt{scott} for short.

 \begin{table}[t]
    \begin{tabular}{ | c |  c | p{10cm} |}
    \hline
     Name & \# tests&\hfil  Brief description \\ \hline
   \texttt{bottomly}& 11484 & DE in striatum for  the two mouse strains C57BL/6J(B6) and DBA/2J(D2); bulk  RNA-seq \citep{bottomly2011evaluating}. \\ \hline
 \texttt{airway} & 20941 & DE in human airway smooth muscle cell lines in response to dexamethasone; bulk RNA-seq \citep{himes2014rna}.\\ \hline
   \texttt{hippo} & 15000&  DE in mouse hippocampus in response to enzymatic dissociation in comparison to standard tissue homogenization; scRNA-seq \citep{harris2019hippocampal}. \\ \hline
\texttt{scott}&   7004 &  Synchronous firing of pairs of neurons, based on neuron recordings in the primary visual cortex of an anesthetized monkey in response to visual stimuli \citep{scott2015false}. \\ 
    \hline   
    \end{tabular}
    \caption{Description of four real datasets. ``\# tests" shows the number of tests for each dataset after any necessary data pre-processing.  DE = Differential Expression.   } 
  \label{tab:DatTab}
 \end{table} 
\begin{figure}[t]
\centering
\includegraphics[width=\textwidth]{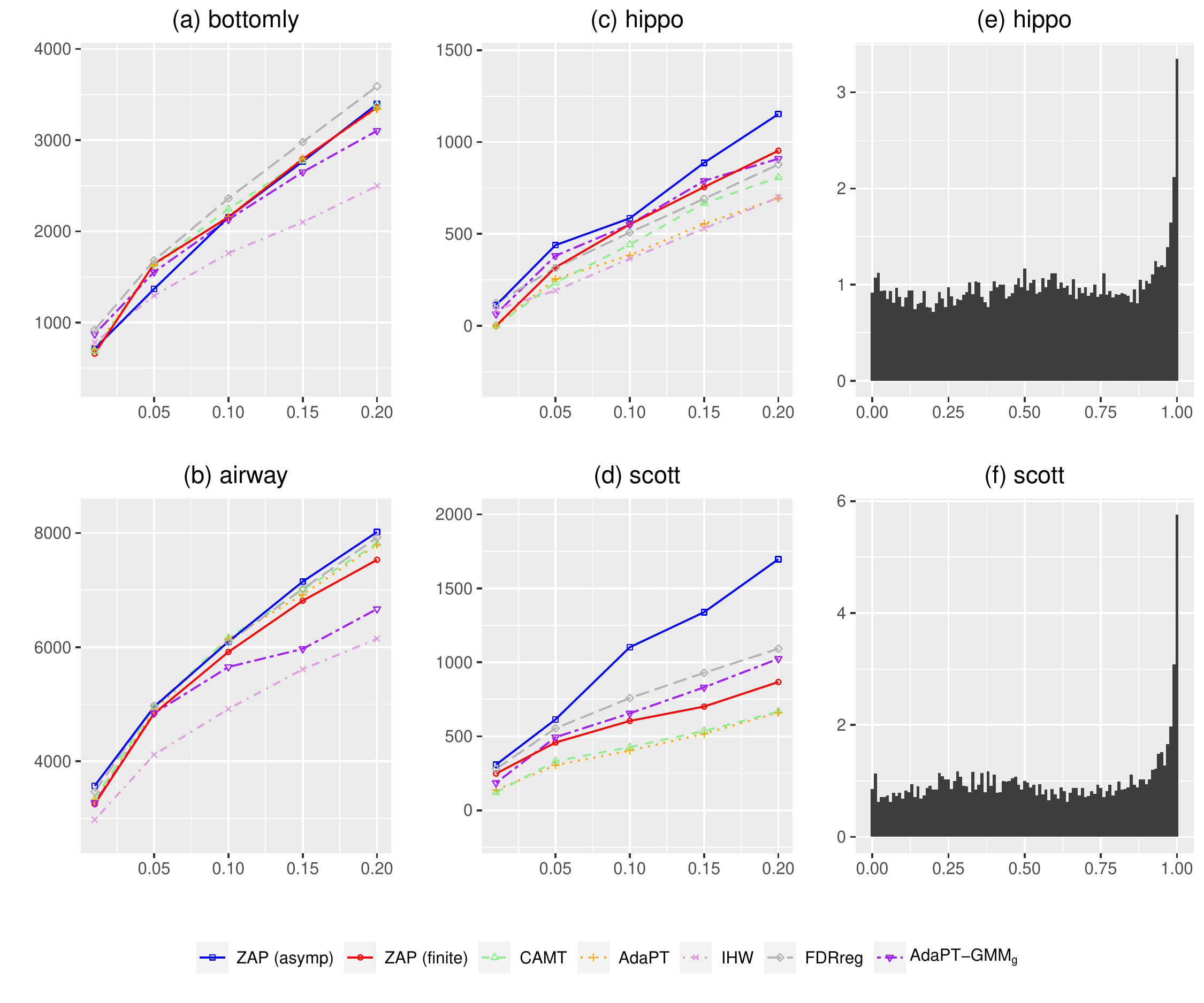}
\caption{{$\bf (a)-(d)$} plot the numbers of rejections for different methods across datasets, against targeted FDR level at $0.01, 0.05, 0.1, 0.15, 0.2$. $\bf (e)$ and $\bf (f)$ are respectively the histograms of the ``$u$-values" for the \texttt{hippo} and \texttt{scott} datasets.
 }
\label{fig:real_data_plot}
\end{figure}

%

For the RNA-seq datasets, all the methods that accommodate multivariate covariates (CAMT, AdaPT, FDRreg, AdaPT-$\text{GMM}_g$ and the two methods of ZAP) are applied with the log mean normalized read count expanded by a natural cubic spline basis with 4 interior knots, using the \texttt{ns} function in the \texttt{R} package \texttt{splines} (with its \texttt{df} argument set to $6$), and  IHW is applied with the original log mean normalized read count as it can only handle a univariate covariate. As there are two covariates for the neural dataset \texttt{scott},  IHW is not applied, and following what was done in \citet{scott2015false}, the multivariate methods  are applied with each of the two covariates expanded by  a B-spline basis using the \texttt{bs} function in \texttt{R} with its argument \texttt{df} set to 3,  which results in six expanded covariates in total.

The number of rejections for the various methods are shown in \figref{real_data_plot}$(a)$-$(d)$, and ZAP has attained top power performances in general. For the dataset \texttt{bottomly}, FDRreg shows moderately more rejections than other methods, but the power gain may be due to overflow in FDR resulting from making the assumption in \eqref{FDRregAssumption}. For the dataset \texttt{airway}, it is surprising to see the power of AdaPT-$\text{GMM}_g$ drops quite a bit beyond the target FDR level $0.1$.  The histogram plots in \figref{real_data_histograms_unfiltered} of \appref{real_data_append} show that the $u$-values are almost symmetrically distributed for these two datasets, which suggests that $p$-value and $z$-value based methods tend to have comparable power, unless the reduction to $p$-values fails to capture the interactive information between the $z$-values and covariates. For the datasets \texttt{hippo} and \texttt{scott}, the histograms, which are shown in \figref{real_data_plot}$(e)$ and $(f)$, show that the $u$-value distribution is asymmetric. This can explain why the $z$-value based methods
exhibits conspicuous power improvement over the $p$-value methods in \figref{real_data_plot} $(c)$ and $(d)$, which are in agreement with the intuition that the former are more capable of exploiting the distributional asymmetry. Similar to what we observed in the simulation studies, the asymptotic ZAP tends to reject more hypotheses than the finite-sample ZAP, whose FDR controlling validity is based on fewer assumptions (\thmref{ZAPfiniteControl}). For the \texttt{scott} dataset, the increase in power of the finite-sample ZAP trails off a bit beyond the target FDR level of $0.15$. Overall, the real data analyses affirm that $z$-value based approaches to covariate-adaptive testing promise to better exploit the full data $\{Z_i, X_i\}_{i = 1}^m$ to boost testing power. In \secref{discuss}, we will offer a general comment on the power of data masking procedures.

\section{Discussion} \label{sec:discuss}

We have introduced ZAP, which is a $z$-value based covariate-adaptive testing framework  that offers control of the FDR under minimal assumptions. The  main thrust  of our proposal is to avoid the common data reduction step of forming two-sided $p$-values used by most other covariate-adaptive methods in the recent literature, so as to preserve the intact structural information in the data as a starting point to devise more powerful procedures. While there is no ``one-size-fits-all" solution to all FDR analysis problems, as is seen in the extensive simulation studies in the recent paper of {\citet{korthauer2019practical}, we believe  the current form of ZAP is a competitive choice  in many covariate-adaptive testing situations.

Although  the finite-sample ZAP offers exact FDR control (\thmref{ZAPfiniteControl}}), the requisite data masking machinery incurs some information loss as a trade-off. This often injects extra instability into its power performance; for one thing,  a working model estimated based on the partially masked data does not necessarily fit the original full data as well, particularly when it is  flexible enough to overfit the partially masked data. On the other hand, a reasonably flexible working model is needed to better capture the true underlying data generating mechanism, which will be translated into power. Prior works on data masking have also mentioned similar power-instability issues; see \citet[Appendix A.3]{lei2016adapt} and \citet[Section 3.2]{lei2017star} for instance. 
Hence, when the number of hypotheses $m$  is large enough (e.g. $m \geq 5000$) to justify the large-sample theory, 
we generally advocate the use of the asymptotic version of ZAP.

We have chosen to operate on the $u$-value scale, leveraging a covariate-dependent beta mixture as our working model. A natural alternative choice is to employ some form of Gaussian mixture models defined on the original $z$-value scale, which is also popular among researchers for performing FDR testing in different applications \citep{mclachlan2006simple, nguyen2018false}. Regardless of the model choice, 
since the asymptotic ZAP aims to directly mimic the optimal oracle procedure, for it to be powerful  the assessor function $a_x(\cdot)$ should be a reasonably  accurate proxy for the true conditional local false discovery rate function $\text{CLfdr}_x(\cdot)$. We have had success in this regard with  our carefully calibrated beta-mixture model; as explained in  \secref{beta_expr}, setting $(\gamma_l, \gamma_r) = (4, 4)$ generally renders reasonably close estimates for the (non-)null probabilities, which are critical for constructing a more accurate assessor function with the formula in \eqref{assessor_form} with respect to the working model. In unreported simulations,  we have also experimented with the normal mixture model, but with discouraging results; namely, it is very hard to consistently obtain good estimates of the mixing probabilities, even with the full  data. This is not a surprise since Gaussian deconvolution is known to be a very difficult problem; see \citet{fan1991optimal} for a theoretical discussion, or  \citet[Section 2.1]{scott2015false} for an illustration with a simple simulation setup.

\citet{chao2021adapt} appeared at around the same time as our present paper. (We thank an anonymous referee for pointing out this work.) Being most comparable to our finite-sample ZAP method, their AdaPT-$\text{GMM}_g$ is  a $z$-value based data masking algorithm that does leverage a working Gaussian mixture density with $K$ components of the form
\begin{equation} \label{Chao_Gauss_mixture}
Z_i | X_i = x\sim  \sum_{k = 1}^K\pi_{k, x} \phi\left(\frac{z - \mu_k}{\sigma_k}\right)
\end{equation}
  to iteratively reveal the masked $z$-values, where each $\pi_{k, x}$ is a component probability depending on $x$ based on a model of choice (e.g. a neural network), and $\mu_k$ and $\sigma_k$ are component mean and standard deviation to be estimated. Unlike our current version of the finite-sample ZAP which uses estimated $\text{CLfdr}_i$'s as the measuring scores  to decide which one of the masked $i$'s is the least promising and  to be revealed in the next step (\algref{updateThreshold} in the \appref{EM_finite_update}),  AdaPT-$\text{GMM}_g$, in our style of notation, uses conditional probabilities of the type
\begin{equation} \label{chao_q}
P\Bigl(|\widecheck{Z}_i| > |Z_i| \;\ \Big|   \;\ X_i, \{ Z_i, \widecheck{Z}_i\} \Bigr)
\end{equation}
as scores to compare the masked $i$'s, where $\{ Z_i, \widecheck{Z}_i\}$ is a masked two-element set containing the value of $Z_i$ and its reflection that is defined similarly to $\{ U_i , \widecheck{U}_i\}$ in \eqref{U_tilde} but on the $z$-value scale. Intuitively, a masked $i$ with the largest such conditional probability is considered the least promising to contain a signal and should be revealed in the next step. Here, a Gaussian mixture model is still a sensible working model to use because  the quantity in \eqref{chao_q} can be reasonably estimated without relying on accurate deconvolution; see \citet{chao2021adapt} for more details. In this vein, it will be interesting to explore  a mixture-of-experts working model \citep{chamroukhi2019regularized, nguyen2016universal}, which is essentially \eqref{Chao_Gauss_mixture} but with the means $\mu_k$ also depending on covariates,  particularly for a situation like  Setup 3 in \secref{sim_data} where the covariate is informative via influencing the alternative means. We leave this to future research that may be  opportune in other instances.

\bibliographystyle{ba}
\bibliography{beta3}

\clearpage
\appendix 
%

\section{Oracle procedures}

\subsection{Optimality of $\pmb\delta^\mathcal{P}$ and $\pmb\delta^\mathcal{Z}$} \label{app:oracleOptim}

In this section we briefly review  the optimality properties of the oracle procedures $\pmb\delta^\mathcal{P}$ and $\pmb\delta^\mathcal{Z}$ in \secref{infoLoss} and how their thresholds $t_\mathcal{P}$ and $t_\mathcal{Z}$ are determined. To streamline the discussion, we will let $\{M_i\}_{i =1}^m$ denote the set of \emph{main statistics}, where it can either be that $M_i = P_i$ or  $M_i = Z_i$ for all $i$, depending on whether $p$-value or $z$-value based methods are considered. Given the data $\{M_i, X_i\}_{i=1}^m$, it is well-known that \emph{optimal} procedures, which aim to maximize true discoveries subject to false discovery constraints, should operate by rejecting $i$ if its corresponding posterior probability $P(H_i =0| M_i, X_i)$ falls below a data-dependent threshold $t_\mathcal{M}$.  We will use $\pmb\delta^\mathcal{M}$ (in a similar way as  $\pmb\delta^\mathcal{P}$ or $\pmb\delta^\mathcal{Z}$) to denote the procedure that thresholds the quantities  $\{P(H_i =0| M_i, X_i)\}_{i=1}^m$ with $t_\mathcal{M}$.

There are subtly different ways to define ``optimality", depending on the particular false discovery (e.g. FDR, mFDR, pFDR) and power (e.g. TPR, ETD, mFNR) measures used, which may  lead to different ways of setting $t_\mathcal{M}$. A most recent result of  \citet[Theorem 3.1]{heller2021optimal} suggests that among all the testing procedures  that are functions of $\{M_i, X_i\}_{i=1}^m$, the $t_\mathcal{M}$ that renders an ETD-maximing $\pmb\delta^\mathcal{M}$ with $FDR \leq \alpha$ can be  found by solving an integer optimization problem \citep[Theorem 3.1]{heller2021optimal}. For our purpose, by letting $L_{(1)} < \dots <L_{(m)}$ be the order statistics of the posterior probabilities $\{P(H_i= 0| M_i, X_i)\}_{i =1}^m$,
we have considered the computationally simpler optimal procedure first proposed in \citet{sun2007oracle} which takes $t_\mathcal{M} = L_{(j)}$, 
where 
  \begin{equation}\label{oraclej}
j \equiv \max \left\{ i' \in \{1, \dots, m\} :  \frac{\sum_{i = 1}^{i'}  L_{(i)} }{i'}  \leq \alpha \right\}.
\end{equation}
This procedure has $FDR \leq \alpha$ because for any procedure that produces a rejection set $\mathcal{R}$ based on $\{M_i, X_i\}_{i=1}^m$, its FDR can be written as  
  \begin{equation*} \label{genericFDRexpr}
FDR
= \mathbb{E}\left[ 
\underbrace{\mathbb{E}\left[  \frac{\sum_{i \in \mathcal{R}} (1 - H_i) }{|\mathcal{R} | \vee 1} \Big| \{M_i, X_i\}_{i=1}^m\right]}_{\mathbb{E} [ FDP | \{M_i, X_i\}_{i=1}^m]}
 \right] 
 = \mathbb{E} \left[ \frac{ \sum_{ i \in \mathcal{R}} P(H_i =0| M_i, X_i) }{  |\mathcal{R}|\vee 1} \right].
\end{equation*}
Conditional on any  instance of the  data $\{M_i, X_i\}_{i=1}^m$,   $\pmb\delta^\mathcal{M}$  prioritizes rejections of the hypotheses that are least likely to be true nulls, all the while controlling the conditional FDR
 $
 \mathbb{E}[ FDP |\{M_i, X_i\}_{i=1}^m ] 
 $
   below $\alpha$ by setting $t_\mathcal{M} = L_{(j)}$, as the ratio $ \sum_{i = 1}^{i'}  L_{(i)}/i'$ in \eqref{oraclej} is precisely  the conditional FDR of rejecting the most promising $i'$  hypotheses. As a result, its controls the FDR under $\alpha$ since the conditional version $\mathbb{E}[ FDP | \{M_i, X_i\}_{i=1}^m]$ is always not larger than $\alpha$.

We now give a more precise account  of the optimal property of the prior procedure. Another popular measure of type 1 errors  is the marginal FDR (mFDR), which for any rejection set  $\mathcal{R}$ is the ratio
\[
mFDR \equiv \frac{\mathbb{E}[V]}{\mathbb{E}[R]},
\]
where  $V$ and $R$ are defined  as in the main text. For $\alpha \in (0,1)$, it is known that among all the procedures based on $\{M_i, X_i\}_{i=1}^m$ with $mFDR\leq \alpha$, the ETD-maximizing procedure is the one given by $\pmb\delta^\mathcal{M}$ that sets $t_\mathcal{M}=\lambda^*_\mathcal{M}$, where
\begin{equation} \label{mFDR_fix_threshold}
\lambda^*_\mathcal{M} \equiv \sup\left\{ \lambda \in (0, 1] : 
\frac{ \sum_{i} \mathbb{E}[(1 -  H_i) I( P(H_i= 0| M_i, X_i) \leq \lambda ) ]}
{\sum_i  \mathbb{E}[ I(P(H_i= 0| M_i, X_i) \leq \lambda )]} \leq \alpha
 \right\}; 
\end{equation}
see \citet{sun2007oracle} and \citet{heller2021optimal}. In practice, since $\lambda^*_\mathcal{M}$ could be tricky to obtain even with oracle knowledge, this optimal procedure  for mFDR control is often approximated by our computationally handy version with $t_\mathcal{M}=L_{(j)}$ above when $m$ is large, as is the case with most FDR analyses. Their asymptotic equivalence can be shown by standard arguments, such as those in \citet[Section 4]{sun2007oracle}. The aforementioned references provide  a more detailed exposition. 

\subsection{Rejection regions for Examples 2.1-2.3}\label{app:rejRegion}

Consider the p-value conditional mixture density 
  \begin{equation} \label{mixtureDensP}
 P_i |X_i  = x\sim g_x (p) \equiv g(p|x) = (1  - w_x) g_0 (p) +w_x g_{1, x} (p) , 
 \end{equation}
  induced by \eqref{mixtureDens}, where $g_0 \equiv 1$ is the uniform null density of $P_i$, and $g_{1, x} (p)\equiv g(p|H_i = 1, X_i = x)$ is the conditional alternative density of $P_i$. 
 The rejection regions $\mathcal{S}^\mathcal{P}(x)$ and $\mathcal{S}^\mathcal{Z}(x)$ in \figref{oracle_rej_reg}  are derived based on the threshold $\lambda^*_\mathcal{M}$ in  \eqref{mFDR_fix_threshold}, where 
 \[
\mathcal{S}^\mathcal{Z} (x) \equiv \left\{z : P(H_i = 0|Z_i = z, X_i = x) \leq \lambda^*_\mathcal{Z} \right \} = \left\{z: \frac{w_x f_{1, x}(z)}{(1 - w_x) f_0(z)} \leq \frac{1 - \lambda^*_\mathcal{Z}}{  \lambda^*_\mathcal{Z}}\right\}
\]
and 
\[
\mathcal{S}^\mathcal{P}(x) \equiv\left\{z : P(H_i = 0|P_i = p, X_i = x) \leq  \lambda^*_\mathcal{P} \right \} = \left\{z: \frac{w_x g_{1, x}(2\Phi(- |z|))}{(1 - w_x) g_0( 2\Phi(- |z|))} \leq \frac{1 -  \lambda^*_\mathcal{P} }{  \lambda^*_\mathcal{P} }\right\}.
\]
 $\lambda^*_\mathcal{P} $ and $\lambda^*_\mathcal{Z} $ are  $\lambda^*_\mathcal{M} $ defined with $M_i = P_i$ and $M_i = Z_i$ for all $i$ respectively. 
Since the examples are relatively simple, these regions can be found by numerical means, and 
we will derive $\mathcal{S}^\mathcal{Z} (x)$ in \exref{S2} as an illustration: It is the set
\[
\left\{z:  \frac{0.2 \left[ \frac{1 - x}{2}  \exp( - \frac{(z + \mu)^2}{2}) + \frac{1 + x}{2}  \exp(- \frac{(z - \mu)^2}{2}) \right]}{ 0.8 \exp( - \frac{z^2}{2})}\leq \frac{1 - \lambda^*_{\mathcal{Z}}}{\lambda^*_{\mathcal{Z}}} \right\}, 
\]
where $\mu \equiv 1.5$.   By setting $\frac{w_x f_{1, x}(z)}{(1 - w_x) f_0(z)} =  \frac{1 - \lambda^*_\mathcal{Z}}{  \lambda^*_\mathcal{Z}}$, one arrives at the equation
\[
\frac{1  + x}{2} \exp( 2 \mu z) - 4\frac{1 - \lambda^*_\mathcal{Z}}{ \lambda^*_\mathcal{Z}} \exp\left(\frac{\mu^2}{2}\right) \exp(\mu z) + \frac{1 - x}{2} = 0.
\]
in $z$. 
To solve for a solution $z^*$, we can apply the formula for the solutions of a quadratic equation to get
\[
\exp( \mu z^* ) = \frac{4 \frac{1 - \lambda_\mathcal{Z}^*}{\lambda_\mathcal{Z}^*} \exp(\frac{\mu^2}{2}) \pm
\sqrt{  16 \left(\frac{1 - \lambda_\mathcal{Z}^*}{\lambda_\mathcal{Z}^*}\right)^2 \exp(\mu^2 )- (1 - x) (1+ x) } }{1 + x},
\]
which in turn implies the two boundary points
\[
z^* = \frac{1}{\mu} \log \left\{   \frac{4 \frac{1 - \lambda_\mathcal{Z}^*}{\lambda_\mathcal{Z}^*} \exp(\frac{\mu^2}{2}) \pm
\sqrt{  16 \left(\frac{1 - \lambda_\mathcal{Z}^*}{\lambda_\mathcal{Z}^*}\right)^2 \exp(\mu^2 )- (1 - x) (1+ x) } }{1 + x} \right \}
\]
for the red regions in the middle panel of  \figref{oracle_rej_reg}  as a function of $x$.

We now explain why $\mathcal{S}^\mathcal{P}(x)$ doesn't change with $x$  in Examples 2.2 and 2.3. From the conditional $p$-value density \eqref{mixtureDensP}, one can see that 
\[
P(H_i = 0| P_i, X_i) = \frac{(1 - w_{X_i}) g_0(P_i)}{g_{X_i} (P_i) } = \frac{(1 - w_{X_i}) g_0(P_i)}{(1 - w_{X_i}) g_0(P_i) + w_{X_i} g_{1, X_i} (P_i) }.
\]
Hence, if $w_x$ and $g_{1,x}$ do not depend on $x$, it is apparent that $\mathcal{S}_x^\mathcal{P}$ will not vary in $x$. Simple calculations can show that $w_x = 0.2$ and $w_x = 0.1$ for  Examples 2.2 and 2.3 respectively, and 
\[
g_{1, x}(p)= \frac{\phi( \Phi^{-1}( p/2) - 1.5) + \phi( -\Phi^{-1}( p/2) -1.5) }{2 \phi( -\Phi^{-1}( p/2))}  = \frac{\phi(-|z| - 1.5) + \phi( |z| - 1.5) }{2 \phi(|z|)}
\]
for both examples; all of these quantities do not depend on $x$.

\section{Proof for the prototype method} \label{app:Pf_ideal}


We will prove the FDR validity of the prototype procedure in \secref{assessor}. The false discovery proportion of the  prototype testing procedure, which thresholds the test statistics $T_i$'s with the threshold  $\hat{t}_\alpha \equiv \hat{t}(\alpha)$, can be written as
\begin{align*}
FDP &= \frac{  \#\{ i \text{ null: } T_i  \leq \hat{t}_\alpha \}}{1  \vee \#\{  T_i  \leq \hat{t}_\alpha\}} \\
& = \frac{ \#\{ i \text{ null: } T_i  \leq \hat{t}_\alpha \} }{1 +  \#\{ i \text{ null: } S_i  \geq 1 - c_i(\hat{t}_\alpha) \} }
\underbrace{\frac{ 1 +  \#\{ i \text{ null: } S_i  \geq 1 -  c_i( \hat{t}_\alpha) \} }{1  \vee \#\{  T_i  \leq \hat{t}_\alpha\} } }_{ \leq \alpha \text{ by the defintion of our procedure}}\\
&\leq \alpha  \frac{ \#\{ i \text{ null: } T_i \leq \hat{t}_\alpha \} }{1 +  \#\{ i \text{ null: } S_i  \geq 1 -  c_i( \hat{t}_\alpha) \} }. 
\end{align*}
We only have to show that 
$
\mathbb{E}\left[
 \frac{ \#\{ i \text{ null: } T_i \leq \hat{t}_\alpha \} }{1 +  \#\{ i \text{ null: } S_i  \geq 1 -  c_i( \hat{t}_\alpha) \} }
\right]
$
is bounded by 1  using the stopping time argument from \citep{barber2019knockoff}.

Without loss of generality we will assume the true nulls are the first $m_0$ hypotheses.  For each $i \in \{1, \dots, m_0\}$, define
\[
\widecheck{S}_i = \begin{dcases*}
        S_i  & when $S_i \leq 0.5$ \\
        1 - S_i & when $S_i >  0.5$
        \end{dcases*}
\]
and  $\widecheck{T}_i \equiv c_i^{-1}(\widecheck{S}_i )$. Using the order statistics $\widecheck{T}_{(1)} \leq \dots \leq \widecheck{T}_{(m_0)}$ of $\widecheck{T}_1, \dots, \widecheck{T}_{m_0}$, 
we moreover let $B_i \equiv I( S_{(i)} >   0.5 )$ for $i =1, \dots, m_0$, where the order of $S_{(i)}$'s here is inherited from the order of the $\widecheck{T}_{(i)}$'s, rather than the magnitudes of the $S_i$'s themselves. Let $1\leq J \leq m_0$ be the index such that 
\[
\widecheck{T}_{(1)} \leq \dots \leq   \widecheck{T}_{(J)} \leq \hat{t}_\alpha <  \widecheck{T}_{(J+1)} \leq \dots \leq \widecheck{T}_{(m_0)},
\]
 we then have 
 \begin{multline*}
 \frac{ \#\{ i \text{ null: } T_i  \leq \hat{t}_\alpha \} }{1 +  \#\{ i \text{ null: } S_i  \geq 1 - c_i(\hat{t}_\alpha) \} } 
 =\frac{ \#\{ i \text{ null: } T_i  \leq \hat{t}_\alpha \} }{1 +  \#\{ i \text{ null: }   c_i^{-1}(1 - S_i) \leq    \hat{t}_\alpha  \} }  \\
 = \frac{(1 - B_1) + \dots + (1 - B_J)}{1 + B_1 + \dots + B_J} =  \frac{1 + J}{1 + B_1 + \dots + B_J}- 1, 
 \end{multline*}
 considering that $\hat{t}_\alpha$ must be less than  $t_{\max} := \max\{t:  c_i(t) \leq 0.5 \text{ for all }  i  \}$.
Hence it amounts to showing $E[\frac{1 + J}{1 + B_1 + \dots + B_J}] \leq  2$. This final step can be shown by applying \citet[Lemma 1]{barber2019knockoff}, since \emph{conditional on} \begin{inparaenum}\item $\widecheck{T}_{(1)} \dots \widecheck{T}_{(m_0)}$ and  \item $\{T_i : i \text{ is non null}\}$\end{inparaenum}, $B_1, \dots, B_{m_0}$ are independent Bernoulli$(0.5)$ random variables, and 
$J$ can be seen as a stopping time in reverse time with respect to the filtrations $\{\mathcal{F}_j\}_{j = 1}^{m_0}$, where 
$
\mathcal{F}_j \equiv \{B_1 + \dots + B_j, B_{j+1}, \dots, B_{m_0}\}
$.

\section{Properties of the working model} \label{app:workModProp}

In this section we will develop some properties of the beta-mixture model in \secref{betaMod} and the assessor functions it induces. To simplify notation, we will use $\pi_{li}, \pi_{ri}, k_{li}, k_{ri}, h_{li}, h_{ri}$ to respectively denote the quantities and functions $\pi_{l, X_i}$, $\pi_{r, X_i}$, $k_{l, X_i}$, $k_{r, X_i}$, $h_{l, X_i}$, $h_{r, X_i}$  from Model \eqref{hu-beta} when the observed  covariate $X_i$ is used, where the underlying parameters $\{\theta_l, \theta_r, \beta_l, \beta_r\}$ are unspecified but common for all $i = 1, \dots, m$. Likewise, we also use 
\begin{equation} \label{accessor_i}
a_i(u) \equiv  a_{X_i}(u)= \frac{1 - \pi_{li} - \pi_{ri}}{(1 -  \pi_{li} - \pi_{ri}) + \pi_{li}h_{li} (u) + \pi_{ri} h_{ri} (u)  }
\end{equation}
to denote the assessor function constructed with them, and  $T_i$, $S_i$ and $c_i(\cdot)$ will  denote the test statistics and null distribution function based on $a_i(\cdot)$ as in \secref{assessor}.

First, the following lemma states properties concerning the left and right alternative functions $h_{li}$ and $h_{ri}$.

\begin{lemma}[Properties of the non-null component densities]  \label{lem:convex}
For $\gamma_l > 2$, $h_{li}(\cdot)$ is a strictly convex and strictly decreasing function with the properties
\[
\lim_{u \rightarrow 0}h_{li}(u) = \infty\text{ and }\lim_{u \rightarrow 1}h_{li}(u) = 0.\]
 Similarly, for $\gamma_r > 2$, $h_{r_i}(\cdot)$ is a strictly convex and strictly increasing function with the properties
\[\lim_{u \rightarrow 0}h_{ri}(u) = 0 \text{ and } \lim_{u \rightarrow 1}h_{ri}(u) = \infty.\]

\end{lemma}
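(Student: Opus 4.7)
The plan is to write $h_{li}(u) = C \cdot u^{a}(1-u)^{b}$ with $C = 1/B(k_{li},\gamma_l) > 0$, $a = k_{li} - 1$, and $b = \gamma_l - 1$, and to exploit the basic fact that on an open interval the product of two positive, strictly decreasing, strictly convex functions is itself strictly decreasing and strictly convex. For the left-leaning case, note that $k_{li} = \{1 + \exp(-\tilde{X}_i^T\beta_l)\}^{-1} \in (0,1)$, so $a \in (-1,0)$, and the assumption $\gamma_l > 2$ gives $b > 1$.

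First I would handle the two boundary limits directly. As $u \to 0^+$, $u^{a} \to \infty$ because $a < 0$, while $(1-u)^{b} \to 1$, which gives $h_{li}(u) \to \infty$. As $u \to 1^-$, $u^{a} \to 1$ while $(1-u)^{b} \to 0$ because $b > 0$, which gives $h_{li}(u) \to 0$. This disposes of the limit claims cleanly.

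Next I would separately verify the monotonicity and convexity of the two factors on $(0,1)$. For $\varphi(u) := u^{a}$ with $a \in (-1,0)$, differentiation gives $\varphi'(u) = a\, u^{a-1} < 0$ and $\varphi''(u) = a(a-1)\, u^{a-2} > 0$, where the latter is positive because $a$ and $a-1$ are both negative. For $\psi(u) := (1-u)^{b}$ with $b > 1$, differentiation gives $\psi'(u) = -b(1-u)^{b-1} < 0$ and $\psi''(u) = b(b-1)(1-u)^{b-2} > 0$, where the key role of $\gamma_l > 2$ is that $b - 1 > 0$ yields \emph{strict} positivity of $\psi''$. Both $\varphi$ and $\psi$ are therefore positive, strictly decreasing, and strictly convex on $(0,1)$.

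Finally, combining the two by the Leibniz rule gives $(\varphi\psi)'' = \varphi''\psi + 2\varphi'\psi' + \varphi\psi''$. Each of the three summands is strictly positive on $(0,1)$: $\varphi''\psi$ and $\varphi\psi''$ by positivity and strict convexity of the factors, and $2\varphi'\psi'$ because the product of two negatives is positive. Hence $h_{li}'' > 0$ on $(0,1)$, giving strict convexity. Strict monotonicity follows from $(\varphi\psi)' = \varphi'\psi + \varphi\psi' < 0$, which is a sum of two negative terms. The right-leaning claims for $h_{ri}$ follow by the same argument applied to $v := 1-u$ under $\gamma_r > 2$, or equivalently by noting that $h_{ri}(u) = h^\circ(1-u)$ for a function $h^\circ$ of the same form as $h_{li}$ with $(k_{li},\gamma_l)$ replaced by $(k_{ri},\gamma_r)$, so the behaviour is reflected about $u = 1/2$. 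There is no genuine obstacle; the only subtle point is that the threshold $\gamma_l > 2$ (rather than $\gamma_l \geq 2$) is exactly what is needed to make $\psi''$ strictly positive throughout $(0,1)$ and thereby propagate strict convexity to the product.
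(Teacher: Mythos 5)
Your proof is correct and follows essentially the same route as the paper's: a direct sign check of the first and second derivatives, where your three Leibniz terms $\varphi''\psi + 2\varphi'\psi' + \varphi\psi''$ are precisely the three terms in the paper's expanded bracket for $h_{li}''$. The only differences are cosmetic---you package the computation as ``a product of positive, strictly decreasing, strictly convex functions is again such,'' and you spell out the boundary limits, which the paper's proof treats as immediate.
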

\begin{proof}[Proof of \lemref{convex}]

It suffices to show the facts for $h_{li}$ since  those for $h_{ri}$ can be proven exactly analogously. Recall that 
\[ 
h_{li}(u):= B(k_{li}, \gamma_l)^{-1}u^{k_{li} - 1} (1 - u )^{\gamma_l - 1},
 \]
where for brevity we have suppressed the dependence on $X_i$ in notations. Differentiating with respect to $u$ we get
\begin{equation} \label{hli_prime}
h_{li}'(u) = B(k_{li}, \gamma_l)^{-1} [(k_{li} - 1) u^{k_{li} - 2} (1 - u)^{\gamma_l - 1} - ({\gamma_l - 1}) (1 - u)^{\gamma_l - 2} u^{k_{li} - 1}],
\end{equation}
which, given $\gamma_l > 2$ (actually $\gamma_l > 1$ is sufficient),  can be seen to be always negative for any $u \in (0,1)$ and hence proves that $h_{li}$ is strictly decreasing. 
For convexity, we differentiate one more time to get
\begin{align*}
 h_{li}''(u) 
& = B(k_l, \gamma_l) (k_{li} - 1)  \left[ (k_{li} - 2)u^{k_{li} - 3} (1 - u)^{\gamma_l - 1}    -  u^{k_{li} - 2} (\gamma_l - 1) (1 - u)^{\gamma_l - 2}\right]\\
 &\quad  - B(k_l, \gamma_l)  (\gamma_l - 1)\left[  (2 - \gamma_l) (1 - u)^{\gamma_l-3} u^{k_{li} - 1} + (k_{li} - 1) u^{k_{li} - 2} (1 - u)^{\gamma_l-2}\right] \\
 &=B(k_l, \gamma_l)  \underbrace{ u^{k_{li}- 3} (1 - u)^{\gamma_{l}-3 } }_{ > 0 } \times \\
 &\qquad
 \left[\underbrace{(k_{li}-1)(k_{li}- 2)(1 - u)^2}_{ > 0 }   - \underbrace{ 2(k_{li} - 1)(\gamma_l - 1) u (1 - u)  }_{> 0 }+ (\gamma_l - 1) (\gamma_l - 2)u^2  
  \right],
\end{align*}
where in the last equality, the positive terms are positive since $0 < k_{li} < 1$. As such, $h_{li}''(u)$ is strictly positive for all $u \in (0, 1)$ as long as $\gamma_l > 2$, which proves the strict convexity of $h_{li}$. 
\end{proof}

A closer inspection of the  proof above will reveal that for $\gamma_l \in (1, 2)$, $h_{li}$ may not even be convex, and the same is  true for $h_{ri}$. Hence we have required that $\gamma_l, \gamma_r >2$ in our model.
To facilitate the proof in later sections we will also define the \emph{reciprocal} assessor function
\begin{equation} \label{assessor_recip}
b_i(u) \equiv 1/a_i(u).
\end{equation} 
By the properties of $h_{li}$ and $h_{ri}$ in \lemref{convex}, one can readily conclude the following lemma, which will help us develop some useful facts later:
\begin{lemma}[Properties of the reciprocal assessor] \label{lem:recipProp}
The reciprocal assessor function defined in \eqref{assessor_recip} (for $\gamma_l,\gamma_r > 2$) is strictly convex and smooth, with the property that
\begin{equation} \label{recipProp_display}
\lim_{u \rightarrow 0} b_i(u) = \lim_{u \rightarrow 1} b_i(u) = \infty. 
\end{equation}
Hence, there exists a unique minimal $\underline{u}_i$ such that
\[
 b_i(\underline{u}_i) <   b_i(u)  \text{ for all } u \in (0, 1). 
\] 
\end{lemma}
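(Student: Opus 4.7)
The plan is to exploit that the reciprocal assessor is essentially an affine combination of the two beta densities $h_{li}$ and $h_{ri}$, so that the desired properties transfer directly from \lemref{convex}.

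First, I would rewrite
\[
b_i(u) \;=\; \frac{1}{a_i(u)} \;=\; 1 \;+\; \frac{\pi_{li}}{1-\pi_{li}-\pi_{ri}}\, h_{li}(u) \;+\; \frac{\pi_{ri}}{1-\pi_{li}-\pi_{ri}}\, h_{ri}(u),
\]
using the explicit form \eqref{accessor_i} of $a_i$. Assuming $\pi_{li},\pi_{ri}>0$ and $1-\pi_{li}-\pi_{ri}>0$ (the nontrivial three-component case), both coefficients on $h_{li}$ and $h_{ri}$ are strictly positive.

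Next, I would invoke \lemref{convex}: for $\gamma_l,\gamma_r>2$ the densities $h_{li}$ and $h_{ri}$ are each strictly convex and $C^\infty$ on $(0,1)$. A positive linear combination of strictly convex smooth functions is itself strictly convex and smooth, so $b_i$ is strictly convex and smooth on $(0,1)$. The boundary behaviour \eqref{recipProp_display} follows from the endpoint limits in \lemref{convex}: as $u\to 0$, the $h_{li}$ term blows up while the other term stays bounded (tending to $0$), and symmetrically as $u\to 1$ via $h_{ri}$; hence $b_i(u)\to\infty$ at both endpoints.

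Finally, existence and uniqueness of the minimizer $\underline{u}_i$ follow in one short step. Because $b_i$ is continuous and tends to $+\infty$ at both endpoints, one can pick any $u_0\in(0,1)$ and choose $0<\varepsilon<1-\varepsilon'<1$ so that $b_i>b_i(u_0)$ outside $[\varepsilon,1-\varepsilon']$; then $b_i$ attains its minimum on the compact interval $[\varepsilon,1-\varepsilon']$, and this is a global minimum on $(0,1)$. Strict convexity rules out two distinct minimizers, giving the unique $\underline{u}_i$. I do not anticipate a genuine obstacle here; the only point requiring mild care is flagging the degenerate edge cases where $\pi_{li}=0$ or $\pi_{ri}=0$ (in which case $b_i$ is monotone and the ``minimum'' sits at a boundary in the limit), but these can be excluded by the working-model convention that the three-component mixture is nondegenerate.
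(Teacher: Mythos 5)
Your proposal is correct and follows exactly the route the paper intends: the paper gives no explicit proof, stating only that the lemma is "readily concluded" from the properties of $h_{li}$ and $h_{ri}$ in \lemref{convex}, which is precisely the affine-combination argument you spell out. Your added remark on the nondegeneracy of $\pi_{li},\pi_{ri}$ is harmless and in fact automatic under the working model's multinomial-logit parametrization, which keeps both probabilities strictly positive.
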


$T_i \equiv a_i(U_i)$ as a random variable has the range $(0, a_i(\underline{u}_i) ]$ in light of \lemref{recipProp}. By construction, $a_i(\cdot)$'s level sets can only be of Lebesgue measure 0, so $c_i(\cdot)$ is continuous under the uniform null distribution of $U_i$. With the strict convexity of $b_i(\cdot)$, one can also conclude that $c_i(\cdot)$, its null distribution function,  is invertible (or equivalently, strictly increasing), since  no interval in the range $(0, a_i(\underline{u}_i) ]$ will have zero measure under the law of  $a_i(U_i)$ induced by the uniform null distribution of $U_i$ by the intermediate value theorem.  The smooth ``bowl" shape of  $b_i(\cdot)$ also implies that, for any $t \in (0, a_i(\underline{u}_i) ]$, the event $\{T_i > t\}$ is equivalent to $U_i$ taking values in a certain sub-interval of $(0, 1)$. One can define two smooth functions to describe this fact:

\begin{definition}[Expression for  the event $\{T_i > t\}$] \label{def:event_T}
For each $i$, $\omega_{iL}: (0, a_i(\underline{u}_i)] \rightarrow (0, \underline{u}_i]$ and $\omega_{iR}: (0, a_i(\underline{u}_i)] \rightarrow [ \underline{u}_i, 1)$ are respectively two smooth functions such that 
for any $t \in (0, a_i(\underline{u}_i)] $, 
\[
\{T_i > t\} = \{\omega_{iL} (t)  < U_i <  \omega_{iR} (t)  \} = \{S_i > c_i(t)\}  ,
\]
with $\omega_{iL}(\cdot)$ and  $\omega_{iR}(\cdot)$ being strictly increasing and strictly decreasing, respectively. 
Since  $S_i$ and $U_i$ are uniformly distributed when $H_i = 0$, \[\omega_{iR} (t) - \omega_{iL} (t)  = 1 - c_i(t).\] 
Moreover, $b_i( \omega_{iL} (t)) =  b_i( \omega_{iR} (t)) = 1/t$. 
\end{definition}

Of course the variable $S_i = c_i(T_i)$ has the range $(0, 1]$, and we can define functions to describe events of the form $\{S_i > s\}$ similar to \defref{event_T}:
\begin{definition} [Expression for the event $\{S_i > s\}$]\label{def:event_S}
For each $i$, $\psi_{iL}: (0, 1] \rightarrow (0, \underline{u}_i]$ and $\psi_{iR}: (0, 1] \rightarrow [ \underline{u}_i, 1)$ are respectively two smooth functions such that 
for any $s \in (0, 1] $, 
\[
 \{S_i > s\}= \{  \psi_{iL} (s)< U_i < \psi_{iR} (s) \} =  \{T_i >  c_i^{-1}(s)\} , 
\]
with $\psi_{iL}(\cdot)$ and  $\psi_{iR}(\cdot)$ being strictly increasing and strictly decreasing, respectively. Since  $S_i$ and $U_i$ are uniformly distributed when $H_i = 0$, 
\begin{equation} \label{lengthOfSetS}
\psi_{iR} (s) - \psi_{iL} (s) = 1 - s.
\end{equation}
Moreover,  $b_i( \psi_{iL} (s)) =  b_i( \psi_{iR} (s)) = 1/c_i^{-1}(s)$. 
\end{definition}

%

\section{Proof for the asymptotic method} \label{app:zap_asymp_pf}

Before proving \thmref{ZAPasympControl}, we remark that the theorem is established by assuming that the mirror statistic $\hat{T}^\mathfrak{m}_i$ is the \emph{exact} reflection of $\hat{T}_i$ under the null distribution $\hat{c}_i$. In practice, $\hat{T}^\mathfrak{m}_i$ can be determined up to arbitrary precision in  \algref{zapAsymp} as long as the number of uniform realizations $N$ is set to be  very large, as  recommended in \secref{beta_expr}. 

We will make heavy use of the notation and results  in \appref{workModProp}.  We will also use $a^*_i(\cdot)$, $b^*_i(\cdot)$, $c^*_i(\cdot)$, $T_i^*$ and $S_i^*$ to denote the respective functions and statistics when $\{\theta, \beta\}$ is taken to be the pair $\{\theta^*, \beta^*\}$ to construct $a_i(\cdot)$, $b_i(\cdot)$, $c_i(\cdot)$, $T_i$ and $S_i$. Similarly, the  quantities and functions appearing in  \lemref{recipProp} and \defsref{event_T} and \defssref{event_S} all have their ``star" versions: $\underline{u}^*_i$, $\omega^*_{iL}(\cdot)$, $\omega^*_{iR}(\cdot)$, $\psi^*_{iL}(\cdot)$ and $\psi^*_{iR}(\cdot)$. 
Generally speaking,  $C, c >0$  will denote unspecified universal constants required for the asymptotic arguments in this section.

\subsection{Additional assumptions for \thmref{ZAPasympControl}} \label{app:add_assume}

\begin{assumption}[Regularity conditions]  \label{assump:regularity}
\ \

\begin{enumerate}
\item $\max_i \|X_i\|_\infty \leq C$ almost surely for some universal constant $C>0$, where $\|\cdot\|_\infty$ indicates the sup norm.

\item 
Let  $[\delta_1, \delta_2]$ be any fixed compact interval in  $(0, 1)$. 
For each $i$, let $V_{1i}$ and $V_{2i}$ be two measurable subsets in $(\delta_1, \delta_2)$. Then for large enough $m$,
\begin{equation*}
\left| \frac{1}{m} \sum_{i = 1}^m P(U_i \in V_{1i}|X_i) - P(U_i \in V_{2i} | X_i) \right| 
\leq C(\delta_1, \delta_2)\max_{1 \leq i \leq m} \lambda\left( V_{i1} \Delta V_{i2}\right) ,
\end{equation*}
where $\lambda(\cdot)$ is the Lebesgue measure, $C(\delta_1, \delta_2)$ is a constant that may depends on $\delta_1$ and $\delta_2$,  and $ \mathcal{S}_1\Delta \mathcal{S}_2$ is their symmetric difference for any two sets $\mathcal{S}_1, \mathcal{S}_2 \subset \mathbb{R}$.

 \item $\mathbb{E}[\log (U_i)]$ and $\mathbb{E}[\log(1 - U_i)]$ are finite. 

\end{enumerate}
\end{assumption}

\begin{assumption} [Strong laws of large numbers] \label{assump:convgAssump}
Let $\pi_0 \equiv P(H_i = 0) > 0$.  For any $t \in (0,  1]$, it holds that 
\begin{multline}
m^{-1} \sum_{i = 1}^m I(T_i^* \leq t) \longrightarrow G(t), \\
m^{-1} \sum_{i = 1}^m (1 - H_i) I( T_i^*  \leq t)  \longrightarrow \pi_0 G_0(t)  \text{ and } \\
m^{-1}\sum_{i = 1}^m I(S_i^*  \geq  1 - c^*_i( t )) \longrightarrow  \bar{G}_0(t)
\end{multline}
almost surely, where $G(t)$, $G_0(t)$ and $\bar{G}_0(t)$ are positive continuous functions in $t$. Moreover, 
$\bar{G}_0(t_0)/G(t_0)< \alpha$ for some $t_0 > 0$, and the  limiting threshold 
\[
t_\alpha^\infty \equiv \sup \left\{t \in (0, 1]: \frac{\bar{G}_0(t)}{G(t)} \leq \alpha \right\}
\]
is such that $\max_{i \in \mathbb{N}}c_i^*(t_\alpha^\infty)< \bar{s}$ for $\bar{s} < 1$, or equivalently, $t_\alpha^\infty <  \min_i {c_i^*}^{-1}( \bar{s})$. Note that the strong laws above, as well as the marginal probability $\pi_0$, are with respect to the joint law of $\{H_i, Z_i, X_i\}$.

\end{assumption}

\assumpref{regularity} regulates the tail behaviors of the random variables $X_i$ and $U_i$; in particular, $(ii)$ implies that conditional on $X_i$, the density of $U_i$ can be unbounded at the two tails, which is natural for multiple testing as it provides room for non-null tail behaviors.  \assumpref{convgAssump} states properties of the strong law limits involved. For technical reasons,  that $\sup_{i \in \mathbb{N}}c_i^*(t_\alpha^\infty)$ is bounded away from 1  ensures that our result won't rely on the strong law limits for very large values of $t$,  which is hardly restrictive in practice: any sensible multiple testing procedure should only consider rejecting $S_i^*$, which is uniformly distributed under the null,  if  it is much less than the typically small target FDR level $\alpha$.  Similar assumptions have also appeared in the works of \citet{zhang2020covariate, storey2004strong}.

We remark that our current assumptions for \thmref{ZAPasympControl} are no stronger than those in \citet{zhang2020covariate} in any essential way, and can conceivably be further relaxed; for example, if \assumpref{regularity}$(i)$ is phrased as a probabilistic bound, one can still likely establish a version of \thmref{ZAPasympControl} which says that the FDR is less than $\alpha$ with probability approaching 1. Moreover, we have assume, as stated in \secref{problemState},  that $\{H_i, Z_i, X_i\}$ are independent across $i$, which can be further relaxed to a generic weak dependence condition under which the strong laws in \assumpref{convgAssump} hold. In fact it is possible to prove an FDR bound in terms of the conditional expectation $\mathbb{E}[\cdot|H_1, \dots, H_m]$, treating the hypotheses as fixed. These embellishments have not been pursued here for a more streamlined presentation.

\subsection{Technical lemmas} \label{sec:technical}
Under \assumpref{cpct} and \assumpref{regularity}$(i)$, all $k_{ri} = k_r(X_i)$, $k_{li} = k_l(X_i)$, $\pi_{ri} =  \pi_r(X_i)$, $\pi_{li} = \pi_l(X_i)$ are bounded away from one and zero, i.e. 
$k_{ri}, k_{li} \in [ \underline{k} , \bar{k}]$, $\pi_{ri}, \pi_{li} \in [ \underline{\pi} , \bar{\pi}]$ 
for some compact intervals $[ \underline{k} , \bar{k}], [\underline{\pi} , \bar{\pi}] \subset (0, 1)$, by the compactness of $\Theta \times {\bf B}$.  
Moreover, by continuity of the beta functions we can define 
\begin{equation*}
B_{\max} :=\max_{k \in  [ \underline{k} , \bar{k}]} (B(k , \gamma_l )\vee (B(\gamma_r, k ) )\text{ and }
B_{\min} :=\max_{k \in  [ \underline{k} , \bar{k}]} (B(k , \gamma_l )\vee (B(\gamma_r, k ) ),
\end{equation*}
which are both positive numbers.  

The following ``uniformity" properties  will be heavily relied on later: 

\begin{lemma}[Uniformity properties] \label{lem:uniformity}
Under  \assumpref{cpct} and \assumpref{regularity}$(i)$,  the  following   are true  for any $(\theta, \beta) \in \Theta \times \boldB$: 
\begin{enumerate}

\item  For any $t_0 >  0$, there exists a $u_0 = u_0(t_0) > 0$ not depending on $(\theta, \beta)$ such that for all $t \geq t_0$, $(\omega_{iL} (t), \omega_{iR} (t)) \subset [u_0 , 1 - u_0]$ for all $i$. 

\item For any $s_0 > 0$, there exists a $u_0 = u_0(s_0) > 0$ not depending on $(\theta, \beta)$ such that for all $s \geq s_0$, $(\psi_{iL} (s), \psi_{iR} (s)) \subset [u_0 , 1 - u_0]$ for all $i$.

\item There exists a small positive constant $ u_0 \in (0, 0.5)$ not depending on $(\theta, \beta)$ such that, 
\begin{equation*} \label{claimclaim}
\underline{u}_i \in [u_0, 1 - u_0]
\end{equation*}
for all $i$, where  $\underline{u}_i$ is as in  \lemref{recipProp}. 
\item
\[
\lim_{\epsilon \rightarrow 0} \max_{i \in \mathbb{N}, \theta \in \Theta, \beta \in \boldB} \lambda  \Bigl(  {b_i}^{-1} \bigl( [b_i (\underline{u}_i), b_i (\underline{u}_i)+ \epsilon  \bigr) \Bigr) = 0,
\]
where $\lambda$ is the Lebesgue measure, $\underline{u}_i$  is as in \lemref{recipProp} and 
\[
{b_i}^{-1} (\mathcal{T}) \equiv \{u: {b_i}(u) \in \mathcal{T}\}.
\]
for any interval $\mathcal{T}$ in $\mathbb{R}$. 
\end{enumerate}

\end{lemma}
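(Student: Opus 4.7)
All four claims rest on two uniform estimates on the reciprocal assessor $b_i=1/a_i$: (a) a uniform blow-up at the endpoints, and (b) a uniform upper bound on every compact subinterval of $(0,1)$. Under \assumpref{cpct} together with \assumpref{regularity}(i), one has $\pi_{li},\pi_{ri}\in[\underline{\pi},\bar{\pi}]$ with $1-\pi_{li}-\pi_{ri}\ge \pi_{\min}>0$, $k_{li},k_{ri}\in[\underline k,\bar k]\subset(0,1)$, and $B(k_{li},\gamma_l),B(\gamma_r,k_{ri})\in[B_{\min},B_{\max}]$. The explicit beta formulas then yield, uniformly in $i,\theta,\beta$, $h_{li}(u)\ge c_0 u^{\bar k-1}$ on $(0,1/2]$ and $h_{ri}(u)\ge c_0(1-u)^{\bar k-1}$ on $[1/2,1)$, while $h_{li}(u),h_{ri}(u)\le M(\epsilon)$ on $[\epsilon,1-\epsilon]$. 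Hence $b_i(u)\to\infty$ uniformly as $u\to 0$ or $u\to 1$, and $b_i$ is uniformly bounded by some $\widetilde M(\epsilon)$ on $[\epsilon,1-\epsilon]$.

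\textbf{Parts (i) and (ii).} Part (i) follows directly from the defining identity $b_i(\omega_{iL}(t))=b_i(\omega_{iR}(t))=1/t\le 1/t_0$ for $t\ge t_0$: the uniform blow-up picks a universal $u_0(t_0)>0$ on which $b_i$ exceeds $1/t_0$, forcing $(\omega_{iL}(t),\omega_{iR}(t))\subset[u_0(t_0),1-u_0(t_0)]$. For (ii), the extra ingredient is a uniform lower bound $c_i^{-1}(s)\ge\tau(s_0)>0$ for $s\ge s_0$. Since $c_i(t)=1-(\omega_{iR}(t)-\omega_{iL}(t))$, this reduces to $c_i(t)\to 0$ uniformly as $t\to 0$; but for $t<1/\widetilde M(\epsilon)$ the upper bound above forces $\omega_{iL}(t)<\epsilon$ and $\omega_{iR}(t)>1-\epsilon$, giving $c_i(t)<2\epsilon$. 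Setting $\tau(s_0)=1/\widetilde M(s_0/2)$ yields $b_i(\psi_{iL}(s))=1/c_i^{-1}(s)\le 1/\tau(s_0)$ for $s\ge s_0$, after which the Part (i) argument transfers verbatim.

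\textbf{Part (iii).} The strategy is to show $b_i'<0$ uniformly on some $(0,u_0)$ and $b_i'>0$ on $(1-u_0,1)$, pinning the unique minimizer $\underline u_i$ in $[u_0,1-u_0]$. From the derivative formula derived in the proof of \lemref{convex}, the dominant term in $h_{li}'(u)$ near $0$ is $(k_{li}-1)u^{k_{li}-2}(1-u)^{\gamma_l-1}/B(k_{li},\gamma_l)$, producing the uniform estimate $|h_{li}'(u)|\ge c_1 u^{\bar k-2}$ on $(0,1/2]$, while $h_{ri}'(u)\le c_2 u^{\gamma_r-2}$ there. Because $\gamma_r>2>\bar k$, the ratio $|h_{li}'(u)|/h_{ri}'(u)$ diverges uniformly as $u\to 0$, so $\pi_{li}h_{li}'(u)+\pi_{ri}h_{ri}'(u)<0$ on a universal $(0,u_0)$, i.e.\ $b_i'(u)<0$. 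A symmetric analysis near $u=1$ completes the argument.

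\textbf{Part (iv) --- the main obstacle.} The target bound $\lambda\bigl(b_i^{-1}([b_i(\underline u_i),b_i(\underline u_i)+\epsilon))\bigr)=O(\sqrt\epsilon)$ should come from a second-order Taylor expansion at $\underline u_i$: since $b_i'(\underline u_i)=0$, strict convexity yields $b_i(u)-b_i(\underline u_i)\ge (c_*/2)(u-\underline u_i)^2$ provided $b_i''\ge c_*>0$ on a uniform compact neighborhood of $\underline u_i$. The delicate step is certifying this uniform second-derivative lower bound: by Part (iii), $\underline u_i$ lies in the interior of $K\equiv[u_0/2,1-u_0/2]$, and the explicit formulas for $h_{li}'',h_{ri}''$ (both strictly positive by \lemref{convex}) are continuous on the compact set $K\times[\underline k,\bar k]$, so they attain a positive minimum, giving $b_i''\ge c_*>0$ uniformly on $K$. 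To rule out the super-level set escaping $K$, convexity together with the uniform blow-up from the overview shows $b_i(u)\ge b_i(\underline u_i)+c_*(u_0/2)^2/2$ for $u\notin K$; thus whenever $\epsilon<c_* u_0^2/8$, the set $\{u:b_i(u)<b_i(\underline u_i)+\epsilon\}$ is confined to $K$ and the Taylor estimate bounds its Lebesgue measure by $2\sqrt{2\epsilon/c_*}$, which vanishes uniformly as $\epsilon\to 0$.
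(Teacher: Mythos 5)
Your proof is correct and, for parts (i), (iii) and (iv), follows essentially the same route as the paper: the uniform lower envelope forcing $b_i(u)\to\infty$ at the endpoints uniformly over $i,\theta,\beta$ gives (i); a uniform comparison of the divergence rate of $h_{li}'$ (order $u^{\bar k-2}$) against the vanishing rate of $h_{ri}'$ (order $u^{\gamma_r-2}$) near $u=0$, and symmetrically near $u=1$, gives (iii); and a uniform second-derivative bound $b_i''\ge c_*$ on a compact neighborhood of the minimizers, obtained by compactness of $\Theta\times\boldB$ and the bounded covariates, yields the quadratic minorant and the $O(\sqrt{\epsilon})$ measure bound in (iv). Two genuine differences are worth noting. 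For (ii) the paper argues by contradiction along a subsequence with $\psi_{i_jL}(s_0)\to 0$, using $\psi_{iR}(s)-\psi_{iL}(s)=1-s$ to force $\psi_{i_jR}(s_0)\to 1-s_0$ and then contrasting a bounded $b_{i_j}(\psi_{i_jR}(s_0))$ with a divergent $b_{i_j}(\psi_{i_jL}(s_0))$; your argument is direct and constructive --- a uniform cap $\widetilde M(\epsilon)$ on $b_i$ over $[\epsilon,1-\epsilon]$ forces $c_i(t)<2\epsilon$ for $t<1/\widetilde M(\epsilon)$, hence a uniform lower bound on $c_i^{-1}(s)$ and a uniform cap on $b_i(\psi_{iL}(s))=1/c_i^{-1}(s)$, after which (i) applies --- and it has the side benefit of producing an explicit $\tau(s_0)$. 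In (iv) you also supply a step the paper leaves implicit, namely that for $\epsilon<c_*u_0^2/8$ the sublevel set $\{b_i<b_i(\underline{u}_i)+\epsilon\}$ cannot escape the compact interval on which the quadratic comparison $b_i\ge f_i$ is valid; this is a worthwhile tightening rather than a gap in your argument.
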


\begin{proof}

$(i)$: Note  that 
\begin{equation} \label{twoBdds}
b_i(u) \geq  
\max\left(
\frac{\underline{\pi} u^{\bar{k} - 1} (1 - u)^{\gamma_l - 1}}{(1 - 2 \underline{\pi})B_{\max} }  ,
  \frac{\underline{\pi} u^{ \gamma_r- 1} (1 - u)^{\bar{k}- 1} }{ (1 - 2 \underline{\pi}) B_{\max}}  
  \right) \text{ for all } i,
\end{equation}
which implies $\lim_{u \rightarrow 0} \min_i  b_i(u)= \lim_{u \rightarrow 1}  \min_ib_i(u) = \infty$, since the right hand side of \eqref{twoBdds} tends to $\infty$ as $u$ tends to 0 or 1. Hence one must be able to find a small enough $u_0 > 0$ such that $\min(b_i(u_0), b_i(1 - u_0) )> 1/t_0$ for all $i$, which implies that $(\omega_{iL}(t) ,\omega_{iR}(t)) \subset [u_0, 1 - u_0] $ for all $i$ by \defref{event_T}. This proves $(i)$. 

$(ii)$: Suppose towards a contradiction, such a $u_0$ doesn't exist. Without loss of generality, we assume there is a subsequence $\{i_1, i_2, \dots\}$ such that $\lim_{j \rightarrow \infty}  \psi_{i_j L} (s_0) = 0$.  As such,  $\lim_{j \rightarrow \infty}  \psi_{i_j R} (s_0) = 1 - s_0$ by the property stated in \defref{event_S}, which implies 
\begin{equation} \label{bdbdbd1}
  \limsup_jb_{i_j}(  \psi_{i_j R} (s_0)  )  
  \leq 1  + \frac{\bar{\pi} (1 - s_0)^{\underline{k} - 1} ( s_0)^{\gamma_l - 1} }{ (1 - 2 \bar{\pi})B_{\min}}
+ \frac{\bar{\pi} ( 1 - s_0)^{\gamma_r- 1} (s_0)^{\underline{k} - 1}  }{ (1 - 2 \bar{\pi})B_{\min}}. 
\end{equation}
On the other hand, 
\begin{equation}\label{bdbdbd2}
\lim_jb_{i_j}^* (\psi^*_{i_j L} (s_0))   = \infty
\end{equation}
in consideration of  \eqref{twoBdds} and $\lim_{j \rightarrow \infty}  \psi^*_{i_j L} (s_0) = 0$.  \eqref{bdbdbd1} and \eqref{bdbdbd2} together reach  a contradiction since it must be that $\lim_jb_{i_j}^* (\psi^*_{i_j L} (s_0))   =   \lim_jb_{i_j}^*(  \psi^*_{i_j R} (s_0)  ) $ as $b_{i_j}^* (\psi^*_{i_j L} (s_0)) = b_{i_j}^*(  \psi^*_{i_j R} (s_0)  )$by \defref{event_S}. 

$(iii)$: By the fact that that $b_i' (b_i(\underline{u}_i)) = 0$ for all $i$, it suffices to show that 
 \begin{equation} \label{uniformInftyLimit}
 \lim_{u \rightarrow 0}\max_{i, \theta, \beta} b_i'(u) =  - \infty \text{ and }  \lim_{u \rightarrow 1}\min_{i, \theta, \beta}  b'_i(u) = \infty.
 \end{equation}
 
Note that  
\[
b_i'( u)  =  \frac{\pi_{li}}{1 - \pi_{ri} - \pi_{li}} h'_{li} (u) +  \frac{\pi_{ri}}{1 - \pi_{ri} - \pi_{li}} h'_{ri}(u),
\]
where $h'_{li}$ has the form
\[
h_{li}'(u) = B(k_{li}, \gamma_l)^{-1} [(k_{li} - 1) u^{k_{li} - 2} (1 - u)^{\gamma_l - 1} - ({\gamma_l - 1}) (1 - u)^{\gamma_l - 2} u^{k_{li} - 1}]
\]
as shown in the proof of \lemref{convex}.  
Define, for $u  \in (0, 1)$, the functions
\[
\bar{h}'_l (u)= B_{\max}^{-1} [ (\bar{k} - 1) u^{\bar{k}  - 2} (1 - u)^ {\gamma_l - 1}-  (\gamma_l - 1) (1 - u)^{\gamma_l - 2} u^{\bar{k} - 1} ],
\]
\[
\underline{h}'_l (u)=  B_{\min}^{-1} [(\underline{k} - 1) u^{\underline{k}  - 2} (1 - u)^ {\gamma_l - 1}-  (\gamma_l - 1) (1 - u)^{\gamma_l - 2} u^{\underline{k} - 1} ]
\]
so that $ \underline{h}'_l (u) \leq  h'_{li }(u)  \leq \bar{h}'_l (u) < 0 $ for all 
$i$. Note that 
\[
 \lim_{u \rightarrow 0} \bar{h}'_l (u)=  - \infty \text{ and }   \lim_{u \rightarrow 1 } \underline{h}'_l (u)=  0.
 \]
  One can similarly define functions $\bar{h}_r'$ and $\underline{h}_r'$ on $(0, 1)$ such that $ 0 < \underline{h}'_r (u) \leq  h'_{ri }(u)  \leq \bar{h}'_r (u)$ for all $i$ and 
\[
 \lim_{u \rightarrow 0} \bar{h}'_r (u)=   0 \text{ and }   \lim_{u \rightarrow 1 } \underline{h}'_r (u)=  \infty.
 \]
The fact that $b_i'(u) \leq \frac{\bar{\pi}}{ 1 -  2 \bar{\pi}} \bar{h}_l'(u)  +  \frac{\bar{\pi}}{ 1 -  2 \bar{\pi}} \bar{h}_r'(u)$, together with two of the limit results above, has shown the first limit in \eqref{uniformInftyLimit}. Similarly, that $b_i'(u)  \geq  \frac{\underline{\pi}}{ 1 -  2 \underline{\pi}} \underline{h}_l'(u)  +  \frac{\underline{\pi}}{ 1 -  2 \underline{\pi}} \underline{h}_r'(u)$, together with the other two limit results above, has shown the second limit in \eqref{uniformInftyLimit}.

$(iv)$: By \lemref{uniformity}$(iii)$, pick $ u_0 \in (0, 0.5)$ such that 
$
\underline{u}_i \in [u_0, 1 - u_0]
$
for all $i$. Note that
\[
b_i'' (u)= \frac{\pi_{li}}{1 - \pi_{li} - \pi_{ri} }   h_{li}''(u)+ \frac{\pi_{ri}}{1 - \pi_{li} - \pi_{ri} }   h_{ri}''(u),
\]
where the dependence on $X_i$ has been suppressed in notations for brevity. By \lemref{convex} ,  $b_i''$ is always positive, hence there exists a universal positive number $c > 0$ such that 
\[
b_i''(u) > c
\]
for all $i$ and all $u \in [u_0/2, 1 - u_0/2 ]$, considering that $\Theta$, $B$ and $[u_0/2, 1 - u_0/2]$ are all compact. Now  for each $i$ consider the quadratic function 
\[
f_i(u) = \frac{c(u - \underline{u}_i )^2}{2} + b_i(\underline{u}_i )
\]
defined on $[u_0/2, 1  - u_0/2]$. Then on the interval  $[u_0/2, 1 - u_0/2]$ , $b_i \geq f_i$  since $g_i = b_i - f_i$ is strictly convex  with $  g_i'  (\underline{u}_i )=   g_i  (\underline{u}_i ) = 0 $. Then 
\[
\lambda \left( b_i^{-1} ([ m_i,m_i + \epsilon  )) \right) \leq \lambda \left( f_i^{-1} ([ m_i,m_i + \epsilon  )) \right)  = \sqrt{\frac{8 \epsilon}{c}}
\]
where the right hand side obviously converges to zero as $\epsilon \rightarrow 0$.

\end{proof}


%

We will now state  two crucial ``event inclusion" lemmas that involve the most delicate proofs in this paper, and may be skipped at first reading. To state them, we conveniently define the long vectors $\Pi = \Pi(\theta) = (\pi_{li}, \pi_{ri})_{i =1}^m$ and 
$K = K(\beta) = (k_{li}, k_{ri})_{i =1}^m$ with $2m$ components. Note that they implicitly depend on the unspecified parameters $\{\theta, \beta\}$. As such, we can also define $\Pi^* = \Pi(\theta^*)$ and $K^* = K(\beta^*)$ to be the versions evaluated at $\theta^*$ and  $\beta^*$. 
\begin{lemma} [First event inclusion lemma] \label{lem:critLem1}
 Suppose \assumpsref{cpct} and  \assumpssref{regularity}$(i)$ are true and let $(\theta, \beta) \in \Theta \times B$.  
For given $\underline{t} > 0$ and $\epsilon > 0$, there exists a $\delta >0$ such that whenever  $\| \Pi - \Pi^*\|_\infty \vee \|K- K^*\|_\infty < \delta$,

\[
\left\{ S_i^* \leq c_i^*(t) -  \epsilon \right\} \subset \left\{  S_i \leq c_i (t)\right\}   \subset \left\{ S_i^* \leq c_i^*(t) + \epsilon\right\} \text{ for all } i \text{ and all } t > \underline{t}. 
\]

\end{lemma}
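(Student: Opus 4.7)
The plan is to convert the set-inclusions into pointwise comparisons of the endpoints of the two rejection intervals, and then to dominate the parameter-perturbation-induced movements of these endpoints by the $\epsilon$-buffer built into the statement. By \defref{event_T} and \defref{event_S}, both events admit a union-of-extremal-intervals description on the $u$-value scale:
\[
\{S_i \le c_i(t)\} = \{U_i \le \omega_{iL}(t)\} \cup \{U_i \ge \omega_{iR}(t)\},
\]
and with $t'_- \equiv {c_i^*}^{-1}(c_i^*(t) - \epsilon) < t$,
\[
\{S_i^* \le c_i^*(t) - \epsilon\} = \{U_i \le \omega_{iL}^*(t'_-)\} \cup \{U_i \ge \omega_{iR}^*(t'_-)\}.
\]
The first inclusion therefore reduces to the pair $\omega_{iL}^*(t'_-) \le \omega_{iL}(t)$ and $\omega_{iR}^*(t'_-) \ge \omega_{iR}(t)$; the second inclusion is reduced analogously with $t'_+ \equiv {c_i^*}^{-1}(c_i^*(t) + \epsilon)$ in place of $t'_-$. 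Using the identity $c_i^*(s) = \omega_{iL}^*(s) + 1 - \omega_{iR}^*(s)$, the two buffer gaps $G_L \equiv \omega_{iL}^*(t) - \omega_{iL}^*(t'_-)$ and $G_R \equiv \omega_{iR}^*(t'_-) - \omega_{iR}^*(t)$ satisfy $G_L + G_R = \epsilon$, so the task becomes controlling $|\omega_{iL}(t) - \omega_{iL}^*(t)| \le G_L$ and $|\omega_{iR}(t) - \omega_{iR}^*(t)| \le G_R$ uniformly in $i$ and $t > \underline{t}$ once $\delta$ is small enough.

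Next I would establish a uniform Lipschitz estimate on the reciprocal assessor $b_i = 1/a_i$ in the parameters, on a common compact subinterval of $(0,1)$. By \lemref{uniformity}$(i)$ applied with $t_0 = \underline{t}$, there is $u_0 = u_0(\underline{t}) \in (0, 1/2)$ such that all four endpoints $\omega_{iL}(t), \omega_{iR}(t), \omega_{iL}^*(t), \omega_{iR}^*(t)$ lie in the fixed compact interval $[u_0, 1 - u_0]$ uniformly over $i$, over $t > \underline{t}$, and over $(\theta, \beta) \in \Theta \times \boldB$. On this interval, the explicit forms \eqref{assessor_form} and \eqref{assessor_recip} depend smoothly jointly on $(u, \Pi_i, K_i)$; combined with the a.s.\ bound $\|X_i\|_\infty \le C$ from \assumpref{regularity}$(i)$ and the compactness of $\Theta \times \boldB$ from \assumpref{cpct}, this furnishes a uniform Lipschitz estimate
\[
\sup_i \sup_{u \in [u_0, 1-u_0]} |b_i(u) - b_i^*(u)| \le L(\underline{t})\, \delta.
\]
Since $b_i(\omega_{iL}(t)) = b_i^*(\omega_{iL}^*(t)) = 1/t$, we get $|b_i^*(\omega_{iL}(t)) - b_i^*(\omega_{iL}^*(t))| \le L\delta$ and analogously on the right; a mean-value argument using the strict convexity of $b_i^*$ (\lemref{recipProp}) then produces a pointwise bound of the form $|\omega_{iL}(t) - \omega_{iL}^*(t)| \le L\delta / |b_i^{*'}(\xi_{iL})|$, with $\xi_{iL}$ on the relevant segment, and likewise on the right.

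The final step is to match these perturbation shifts against the buffer gaps via the couplings
\[
G_L = \int_{t'_-}^{t} \frac{ds}{s^2\, |b_i^{*'}(\omega_{iL}^*(s))|}, \qquad G_R = \int_{t'_-}^{t} \frac{ds}{s^2\, |b_i^{*'}(\omega_{iR}^*(s))|},
\]
so that the shifts and the gaps on each side scale together as $1/|b_i^{*'}|$ evaluated near the corresponding endpoint. Away from the critical point $\underline{u}_i^*$, $|b_i^{*'}|$ is uniformly bounded below and a direct comparison selects $\delta$ small enough as a function of $\epsilon$ and $\underline{t}$ to close the argument. The main obstacle is the critical regime where $\omega_{iL}^*(t)$ or $\omega_{iR}^*(t)$ approaches $\underline{u}_i^*$: there $b_i^{*'}$ vanishes and a plain mean-value bound is degenerate. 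To resolve this, I would invoke \lemref{uniformity}$(iv)$ together with the uniform lower bound $b_i^{*''}(u) \ge c > 0$ near $\underline{u}_i^*$ that follows from $\gamma_l, \gamma_r > 2$ (see the proof of \lemref{convex}), giving a Taylor expansion $b_i^*(u) \ge b_i^*(\underline{u}_i^*) + c(u - \underline{u}_i^*)^2$ uniform in $i$. This matched convex-quadratic control keeps the perturbation shifts and the buffer gaps of comparable order as $t$ approaches the top of $T_i^*$'s range, allowing a uniform choice $\delta = \delta(\underline{t}, \epsilon)$ that secures both pairs of inequalities; the second inclusion is handled by the identical argument with $t'_+$ in place of $t'_-$.
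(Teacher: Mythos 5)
Your first two paragraphs track the paper's own argument closely: the reduction to the interval inclusion \eqref{inclusion1} on the $u$-scale, the uniform gradient bound on $b_i-b_i^*$ over a common compact subinterval via \assumpref{cpct} and \assumpref{regularity}$(i)$, and the identity $b_i(\omega_{iL}(t))=b_i^*(\omega_{iL}^*(t))=1/t$ are exactly the paper's ingredients. The divergence is in how the comparison is closed, and that is where your proposal has a genuine gap. The paper never inverts $b_i^*$: it uses \lemref{uniformity}$(iv)$ once to convert the width-$\epsilon$ buffer into a \emph{height} buffer, $1/t_{i,-\epsilon}-1/t\ge\tilde\epsilon$ (its \eqref{anotherLowerBdd}), and then simply observes that $b_i$ evaluated at the widened starred endpoints is at least $1/t_{i,-\epsilon}-\tilde\epsilon\ge 1/t$, which already forces the inclusion of the sublevel sets. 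No derivative of $b_i^*$ is ever divided by, so the critical point $\underline{u}_i^*$ causes no trouble.

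Your route compares endpoint displacements to the buffer gaps, and the assertion that ``the shifts and the gaps on each side scale together as $1/|{b_i^*}'|$'' is the step that fails. Near $\underline{u}_i^*$ the gap $G_L$ stays of order $\epsilon$ (both gaps are comparable there by the two-sided quadratic bounds on ${b_i^*}''$), while the mean-value bound $L\delta/|{b_i^*}'(\xi_{iL})|$ on the shift degenerates; the correct uniform control of the shift in that regime is of order $\sqrt{L\delta}$ when $1/t$ is within $O(L\delta)$ of $\min_u b_i^*(u)$ and of order $L\delta/\sqrt{1/t-\min_u b_i^*(u)}$ otherwise, i.e.\ it requires a case analysis that your quadratic-Taylor remark gestures at but does not carry out (the mismatched constants in the upper and lower bounds on ${b_i^*}''$ mean a naive triangle inequality between $|\omega_{iL}(t)-\underline{u}_i^*|$ and $|\omega_{iL}^*(t)-\underline{u}_i^*|$ does not close). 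Separately, even in the non-critical regime your ``direct comparison'' needs the uniform lower bound $1/t'_- - 1/t\ge\tilde\epsilon(\epsilon)$ --- without it the level change producing a width-$\epsilon$ widening could be arbitrarily small --- and this bound, which is precisely where \lemref{uniformity}$(iv)$ enters the paper's proof, is invoked in your write-up only for the critical regime. The plan is completable, but the decisive estimates are asserted rather than proved; the cleanest repair is to drop the endpoint-displacement bookkeeping in favor of the paper's level-set comparison.
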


\begin{proof} [Proof of \lemref{critLem1}]
By \defref{event_T}, we will show, equivalently, that there exists $\delta > 0$ such that whenever $\| \Pi - \Pi^*\|_\infty \vee \|K - K^*\|_\infty < \delta$, 
\begin{equation} \label{inclusion1}
\{U_i \in   \underbrace{(\omega^*_{iL} (t_{i, +\epsilon}) , \omega^*_{iR} (t_{i, +\epsilon}) )}_{\text{length}= 1  - c_i^*(t)  - \epsilon} 
\}\subset \{U_i \in  \underbrace{ (\omega_{iL} (t) , \omega_{iR} (t) )}_{\text{length}  = 1 - c_i(t)}\} 
\subset \{U_i  \in \underbrace{(\omega^*_{iL} (t_{i, - \epsilon}) , \omega^*_{iR} (t_{i, - \epsilon}) )}_{ \text{length} = 1 - c_i^*(t) + \epsilon}\}, 
\end{equation}
where we define $ t_{i, -\epsilon}  :=    {c_i^*}^{-1} (c_i^*(t) - \epsilon)$ and $t_{i, + \epsilon} :=   {c_i^*}^{-1} (c_i^*(t) + \epsilon)$ that are respectively less and greater than $t$. 
In particular, we will first focus on showing the second inclusion in in \eqref{inclusion1} ,  which amounts to showing
\begin{equation} \label{newGoal}
b_i(\omega_{iL}^{ *}(t_{i, -\epsilon})) \wedge  b_i(\omega_{iR}^{ *}(t_{i, -\epsilon})) \geq 1/t
\end{equation}
whenever  $\|\Pi - \Pi^*\|_\infty \vee \|K - K^*\|_\infty < \delta$, in light of the fact that 
\begin{equation} \label{horizontalLevel}
b_i^*(\omega_{iL}^*( t_{i, -\epsilon}) ) =  b_i^*(\omega_{iR}^*(t_{i, -\epsilon}) )=1/ t_{i, -\epsilon}, 
\end{equation}
by the definition of $\omega^*_{iL} (\cdot)$ and  $\omega^*_{iR} (\cdot)$ in \defref{event_T} and properties of $b_i$ from \lemref{recipProp}.  

Let $u_0 > 0$ be a small positive number such that $(\omega^*_{iL} (\underline{t}) , \omega^*_{iR} (\underline{t})) \subset [u_0, 1 - u_0]$ by \lemref{uniformity}(i),  and consider the even larger compact interval $[u_0/2, 1 - u_0/2]$. Consider each $b_i (u) = b (u; \pi_i, k_i )$ as  a function in $(u, \pi_i, k_i)$, and let 
\[
\nabla_{\pi, k} b_i(u)\equiv 
\left( {\frac{\partial}{ \partial \pi_i}b(u; \pi_i, k_i) }^T,  {\frac{\partial}{ \partial k_i}b(u; \pi_i, k_i) }^T \right)^T
\]
 be the gradient of $b_i$ with respect to  $(\pi_i, k_i)$ evaluated at $u$. Using the compactness of  $[u_0/2, 1 - u_0/2]\times \Theta \times \mathcal{B}$ and \assumpref{regularity}$(i)$ again, one can find a universal constant
 $C(u_0 )> 0$ such that the gradient bounds
\begin{equation} \label{gradBound_inclusion1}
\|\nabla_{\pi, k} b_i(u)\|_1  < C(u_0) \text{ for all } i, \text{ all } u \in \left[ \frac{u_0}{2}, 1 - \frac{u_0}{2}\right], \text{ all } (\theta, \beta) \in \Theta \times \boldB.
\end{equation}
On the other hand, without loss of generality, we will let 
\begin{equation} \label{smallerEp}
\epsilon <  u_0/2
\end{equation}
and, with \lemref{uniformity}$(iv)$, 
take $\tilde{\epsilon} > 0 $ be a small enough constant such that 
\begin{equation} \label{tildeEpCond}
  \lambda(  {b^*_i}^{-1} ( [b^*_i(\underline{u}_i^*), b^*_i(\underline{u}_i^*)+ \tilde{\epsilon} )  ) < \epsilon \text{ for all } i . 
\end{equation}
By the  mean-value theorem and the gradient bound \eqref{gradBound_inclusion1}, one can then find $\delta > 0$ such that when $\|\Pi - \Pi^*\|_\infty \vee \|K - K^*\|_\infty < \delta$
\begin{equation} \label{tilde_ep_bound_inclusion1}
|b_i (u) - b_i^* (u )| < \tilde{\epsilon} \text{ for all } u \in \left[\frac{u_0}{2}, 1 - \frac{u_0}{2} \right]. 
\end{equation}

By the construction of $\tilde{\epsilon}$ in \eqref{tildeEpCond} and convexity properties from  
\lemref{recipProp}, one must have for all $i$
\begin{equation*}
 \label{mustbeLowerBdd}
b_i^*(\omega_{iL}^{ *}(t_{i, -\epsilon}) )- b_i^*(\omega_{iL}^{ *}(t)) = b_i^*(\omega_{i
R}^{ *}(t_{i, -\epsilon}) )- b_i^*(\omega_{iR}^{ *}(t))\geq \tilde{\epsilon},
\end{equation*}
which implies 
\begin{equation} \label{anotherLowerBdd}
1/t_{i, -\epsilon} \geq \tilde{\epsilon} + 1/t. 
\end{equation}
by the last property in  \defref{event_T}. 
Since $\{S_i^* >  c^*_i(t) -  \epsilon\}  \supset \{S_i^* >  c^*_i(t) \} $, from the property \eqref{lengthOfSetS} in \defref{event_S} both 
\begin{equation*} \label{both}
\omega_{iL}^{ *}(t_{i, -\epsilon}) \in (\omega_{iL}^{ *}(t)  - \epsilon, \omega_{iL}^{ *}(t)) \text{ and } \omega_{iR}^{ *}(t_{i, -\epsilon}) \in  (\omega_{iR}^{ *}(t),  \omega_{iR}^{ *}(t)+ \epsilon)
\end{equation*}
 are true, 
which implies
\[
[\omega_{iL}^{ *}(t_{i, -\epsilon}), \omega_{iR}^{ *}(t_{i, -\epsilon})]\subset [u_0/2, 1 - u_0/2],
\]
 considering \eqref{smallerEp} and $ \omega_{iL}^{ *}( t ), \omega_{iR}^{ *}( t )\in [u_0, 1- u_0]$ (as $t  \geq \underline{t}$). 
 Therefore by \eqref{tilde_ep_bound_inclusion1}, we must have
\begin{equation} \label{anotherLowerBdd2}
b_i(\omega_{iL}^{ *}(t_{i, -\epsilon})) \wedge  b_i(\omega_{iR}^{ *}(t_{i, -\epsilon}))
\geq
 1/ t_{i, - \epsilon}- \tilde{\epsilon} 
 \end{equation}
 given \eqref{horizontalLevel}.
Combining \eqref{anotherLowerBdd}  and \eqref{anotherLowerBdd2} gives \eqref{newGoal}. 

The proof for the first inclusion in \eqref{inclusion1} follows an analogous argument but is with less resistance, since $[\omega_{iL}^{ *}(t_{i, +\epsilon}), \omega_{iR}^{ *}(t_{i, +\epsilon})]\subset [u_0, 1 - u_0]$ for all $i$. We leave it to the reader. 

\end{proof}

\begin{lemma} [Second event inclusion lemma] \label{lem:critLem2}  Suppose \assumpsref{cpct} and  \assumpssref{regularity}$(i)$ are true and let $(\theta, \beta) \in \Theta \times B$.  
For any fixed $ \bar{t} < \min_i {c_i^*}^{-1}(\bar{s})$ with $\bar{s} < 1$ and any ${\epsilon} > 0$, there exists a $\delta >0$ such that whenever $\| \Pi- \Pi^*\|_\infty \vee \|K- K^*\|_\infty < \delta$,

\[
\{S_i^* >  1 - c^*_{i} ( t) - \epsilon \} \supset
 \{ S_i > 1 - c_i(t)\}  \supset
\{ S_i^* >  1 - c^*_{i} ( t) + \epsilon\}, 
\]
for all $i$ and all $t \leq \bar{t}$. 
\end{lemma}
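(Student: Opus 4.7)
The plan is to mirror the argument used for Lemma C.2, with the key adaptation that the relevant events are now rephrased via Definition B.5 rather than Definition B.4. Specifically, the event $\{S_i > 1 - c_i(t)\}$ translates into $\{U_i \in (\psi_{iL}(1 - c_i(t)), \psi_{iR}(1 - c_i(t)))\}$, a subinterval of $(0,1)$ containing $\underline{u}_i$ with Lebesgue measure exactly $c_i(t)$. The two target inclusions thus become a sandwich of three such intervals corresponding to $s := 1-c_i(t)$ and $s^{\pm} := 1 - c^*_i(t) \pm \epsilon$; equivalently, using the identity $b_i(\psi_{iL}(s)) = b_i(\psi_{iR}(s)) = 1/c_i^{-1}(s)$, they become a sandwich of sublevel-set inclusions for $b_i$ and $b^*_i$ whose common thresholds live on the reflected scale $1/{c^*_i}^{-1}(\cdot)$.

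The first step is to localize the intervals in a uniform compact subset of $(0,1)$. Since $t \leq \bar{t} < \min_i {c^*_i}^{-1}(\bar{s})$ forces $c^*_i(t) \leq \bar{s} < 1$, one may shrink $\epsilon$ at the outset so that $s^{\pm} \geq (1-\bar{s})/2 > 0$; Lemma C.1(ii) then supplies a universal $u_0 > 0$ with $[\psi^*_{iL}(s^{\pm}), \psi^*_{iR}(s^{\pm})] \subset [u_0, 1-u_0]$ uniformly in $i$ and $(\theta,\beta)$, and a similar bound for the unstarred interval follows once $\delta$ is small enough to make $c_i$ close to $c^*_i$. On the enlarged window $[u_0/2, 1-u_0/2]$, the same gradient bound used to derive (C.5) in the proof of Lemma C.2 yields a universal constant $C(u_0)$ such that $|b_i(u) - b^*_i(u)| \leq C(u_0)\delta$ uniformly over $u$, $i$, and $(\theta,\beta)$ whenever $\|\Pi - \Pi^*\|_\infty \vee \|K - K^*\|_\infty < \delta$.

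The central task is then to convert the prescribed $\epsilon$-gap on the $s$-scale into a uniformly positive gap on the threshold-scale $1/{c^*_i}^{-1}(\cdot)$, which when compared against the $C(u_0)\delta$ closeness of $b_i$ to $b^*_i$ delivers both inclusions. Concretely, the strict convexity of $b^*_i$ and the uniform location $\underline{u}^*_i \in [u_0, 1-u_0]$ from Lemma C.1(iii), together with the uniform sublevel-set modulus in Lemma C.1(iv), imply that $s \mapsto 1/{c^*_i}^{-1}(s)$ is strictly monotone with a quantitative modulus uniform in $i$ and $(\theta,\beta)$. This produces $\tilde{\epsilon} = \tilde{\epsilon}(\epsilon, \bar{s}) > 0$ with
\[ 1/{c^*_i}^{-1}(s^+) + \tilde{\epsilon} \leq 1/c_i^{-1}(s) \leq 1/{c^*_i}^{-1}(s^-) - \tilde{\epsilon}, \]
using the closeness of $c_i$ to $c^*_i$ from Step 2 to pass from $1/{c^*_i}^{-1}(1-c^*_i(t))$ to $1/c_i^{-1}(1-c_i(t))$ at the cost of $\tilde{\epsilon}/2$. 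Choosing $\delta$ further so that $C(u_0)\delta < \tilde{\epsilon}/2$ then gives both inclusions, the second directly and the first by the symmetric argument.

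The main obstacle lies in the quantitative uniformization of the last step: converting the $\epsilon$-shift on the $s$-axis into a uniform positive gap on the threshold-axis as $c^*_i(t)$ can shrink to $0$ for small $t$. Unlike in the proof of Lemma C.2, where only the starred interval could shrink, here both starred and unstarred intervals can degenerate simultaneously as $t \to 0$, so the use of Lemma C.1(iv) to control sublevel-set moduli must be exercised with additional care to ensure the gap $\tilde{\epsilon}$ does not collapse with $t$.
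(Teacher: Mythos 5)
Your proposal is correct and follows essentially the same route as the paper: both reduce the events to interval/sublevel-set inclusions for $b_i$ versus $b_i^*$ via \defref{event_S}, localize them in a compact subinterval using \lemref{uniformity}$(ii)$, bound $|b_i-b_i^*|$ uniformly by a gradient/mean-value argument, and convert the $\epsilon$-gap on the probability scale into a uniform level-gap via \lemref{uniformity}$(iv)$. The only organizational difference is that the paper first invokes \lemref{critLem1} to align the $s$-levels (reducing to $\{S_i>s\}$ versus $\{S_i^*>s\pm\epsilon\}$ for $s\geq\underline{s}$) and then argues through a sup-characterization $s(i)$ with a boundary case analysis, whereas you compare thresholds directly; and the degeneracy you flag as $t\to 0$ is not an issue, since $s=1-c_i^*(t)$ stays bounded away from $0$ and the band-measure-to-level-gap modulus supplied by \lemref{uniformity}$(iv)$ is uniform in $i$ and independent of $t$.
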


\begin{proof} [Proof of \lemref{critLem2}]
%

Note that from \eqref{lengthOfSetS} in \defref{event_S} and \lemref{critLem1} we can conclude there exists a $\delta_1 >0$ such that 
whenever $\| \Pi - \Pi^*\|_\infty \vee\|K - K^*\|_\infty < \delta_1$, 
\begin{equation} \label{firstLengthBdd}
 1 - c^*_{i} ( t) -  \epsilon\leq 1 - c_i(t) \leq  1 - c^*_{i} ( t) + \epsilon. 
\end{equation}
Based on \eqref{firstLengthBdd}, it suffices to show that there exists a $\delta_2 > 0$ such that whenever $\| \Pi - \Pi^*\|_\infty \vee \|K - K^*\|_\infty < \delta_2$, 
\begin{equation} \label{lowerBddprime}
\{S_i > 1 - c^*_{i} ( t) -   \epsilon\} \subset \{S_i^* >  1 - c^*_{i} ( t) -  2\epsilon\}, 
\end{equation}
and
\begin{equation} \label{upperBddprime}
\{S_i > 1 - c^*_{i} ( t) +  \epsilon\} \supset \{S_i^* >  1 - c^*_{i} ( t) +  2\epsilon\}
\end{equation}
which conclude the lemma by taking $\delta = \delta_1\vee \delta_2$ and replacing $\epsilon$ with $\epsilon/2$. In fact, since $t \leq \bar{t}$ and $c_i^*(\bar{t})$ are bounded away from 1, we will show the more general statement: For a given $\underline{s} > 0$, there exists $\delta = \delta(\underline{s}) > 0$ such that whenever $\| \Pi - \Pi^*\|_\infty \vee \|K - K^*\|_\infty < \delta$, 
\begin{equation} \label{lowerBdd}
\{S_i >  s\} \subset \{S_i^* >  s - \epsilon\}
\end{equation}
and
\begin{equation} \label{upperBdd}
\{S_i > s\} \supset \{S_i^* >  s + \epsilon\}
\end{equation}
for all $s \geq \underline{s}$ and all $i$.
This will necessitate \eqref{lowerBddprime} and \eqref{upperBddprime}  for $t \leq \bar{t}$.


We will first show \eqref{lowerBdd} which amounts to
\begin{equation} \label{lowerBdd1prime}
( \psi_{iL}( s)  , \psi_{iR}( s )   )
 \subset 
 ( \psi_{iL}^*( s - \epsilon)  , \psi_{iR}^*( s - \epsilon)   )
\end{equation}
in light of \defref{event_S}. In particular, it suffices to only consider the case where $\epsilon< \underline{s}$, since if $s - \epsilon \leq  0$,  $\{S_i^* > s - \epsilon\} = \{S_i^* \geq 0\}$ becomes the whole underlying probability space which makes \eqref{lowerBdd1prime} trivially true. Now for each $i$,  let 
\[
s(i):= \sup\{s' \in (0, 1):  ( \psi_{iL}( s)  , \psi_{iR}( s )   )
 \subset ( \psi_{iL}^*( s')  , \psi_{iR}^*( s')   )\}.
\]
By \defref{event_S} it must be the case that 
\begin{equation} \label{SiEqualsS}
s(i) = 1 -\psi_{iR}^*( s(i)) + \psi_{iL}^*( s(i))\leq  1 - \psi_{iR}( s ) + \psi_{iL}( s) = s,
\end{equation}
and only one of the following possibilities can be true:
\begin{enumerate}
\item $\psi_{iL}^* (s(i)) = \psi_{iL} (s) \text{ and } \psi_{iR}(s) < \psi_{iR}^* (s(i))$, 
\item $\psi_{iL}^* (s(i)) < \psi_{iL} (s) \text{ and } \psi_{iR}(s) = \psi_{iR}^* (s(i))$,
\item $\psi_{iL}^* (s(i)) = \psi_{iL} (s) \text{ and } \psi_{iR}(s) = \psi_{iR}^* (s(i))$.
\end{enumerate}
In light of the monotone properties in \defref{event_S},  it suffices to show that 
\begin{equation}\label{sufficetoshow}
s - s(i) < \epsilon,
\end{equation}  which will then imply \eqref{lowerBdd1prime}. Obviously, if  $(iii)$ is true then \eqref{sufficetoshow} must be true in light of \eqref{SiEqualsS}. We will focus on showing \eqref{sufficetoshow} in the case of $(i)$ since the proof for the case of $(ii)$ follows a parallel argument.

By \lemref{uniformity}$(ii)$, there exists a  $u_0 = u_0(\underline{s})> 0$ such that
 \begin{equation}
( \psi_{iL} (s) ,  \psi_{iR} (s))    \subset [u_0, 1 - u_0] \text{ for all } i \in \mathbb{N} \text{ and } s \geq \underline{s}. 
 \end{equation}
Consider each $b_i (u) = b (u; \pi_i, k_i )$ as  a function in $(u, \pi_i, k_i)$, and let 
\[
\nabla_{\pi, k} b_i(u): = 
\left( {\frac{\partial}{ \partial \pi_i}b(u; \pi_i, k_i) }^T,  {\frac{\partial}{ \partial k_i}b(u; \pi_i, k_i) }^T \right)^T
\]
 be the gradient of $b_i$ with respect to  $(\pi_i, k_i)$ evaluated at $u$. Using the compactness of  $[u_0, 1 - u_0] \times \Theta \times \mathcal{B}$ and \assumpref{regularity}$(i)$ again, one can find a  constant
 $C(u_0 )> 0$ such that the gradient bounds
\begin{equation} \label{gradBound}
\|\nabla_{\pi, k} b_i(u)\|_1  < C(u_0) \text{ for all } i \text{ and for all } u \in \left[ \frac{u_0}{2}, 1 - \frac{u_0}{2}\right].
\end{equation}
On the other hand, without loss of generality, with \lemref{uniformity}$(iv)$, 
let $\tilde{\epsilon} > 0 $ be a small enough constant such that 
\begin{equation}\label{tildeep2}
  \lambda(  {b^*_i}^{-1} ( [m^*_i, m^*_i+ \tilde{\epsilon} )  ) < \frac{\epsilon}{2} \text{ for all } i . 
\end{equation}
By the  mean-value theorem and the gradient bound \eqref{gradBound}, one can then find $\delta > 0$ such that when $\|\Pi - \Pi^*\|_\infty \vee \|K - K^*\|_\infty < \delta$
\begin{equation} \label{tilde_ep_bound}
|b_i (u) - b_i^* (u )| < \tilde{\epsilon} \text{ for all } u \in \left[u_0, 1 - u_0\right]. 
\end{equation}
Since  $\psi_{iL}^*(s(i)) = \psi_{iL}(s) \in [u_0, 1 - u_0]$,  \eqref{tilde_ep_bound} and the last property in \defref{event_S} suggest that
\begin{equation} \label{bistar1}
b_i^*\Bigl( \psi_{iR}^*(s(i))\Bigr) = b_i^*\Bigl( \psi_{iL}^*(s(i)) \Bigr)<  b_i\Bigl( \psi_{iL}(s)\Bigr) + \tilde{\epsilon}    = b_i \Bigl(\psi_{iR}(s) \Bigr)  + \tilde{\epsilon}.
\end{equation}
But since $\psi_{iR}(s)$ is also in the interval $[u_0, 1 - u_0]$, we must have
\begin{equation} \label{bistar2}
b_i^*\Bigl( \psi_{iR}(s) \Bigr) > b_i \Bigl(\psi_{iR}(s) \Bigr)  - \tilde{\epsilon}.
\end{equation}
Combining \eqref{bistar1} and \eqref{bistar2}, we get that $2 \tilde{\epsilon} >  b_i^*\Bigl( \psi_{iR}^*(s(i))\Bigr) - b_i^*\Bigl( \psi_{iR}(s) \Bigr)$ which in light of the construction of $\tilde{\epsilon}$  in \eqref{tildeep2} and convexity properties from  
\lemref{recipProp} gives that
\[
\psi_{iR}^*(s(i)) - \psi_{iR}(s) < \epsilon,
\]
which in turn implies \eqref{sufficetoshow} by the property \eqref{lengthOfSetS} in \defref{event_S}.
The  proof of \eqref{upperBdd} is similar. It amounts to showing
\[
 ( \psi_{iL}^*( s + \epsilon)  , \psi_{iR}^*( s + \epsilon)   )
 \subset
( \psi_{iL}( s)  , \psi_{iR}( s )   ). 
\]
We will alternatively define 
\[
\tilde{s}(i):= \inf\{s' \in (0, 1):   ( \psi_{iL}^*( s')  , \psi_{iR}^*( s')   ) 
\subset( \psi_{iL}( s)  , \psi_{iR}( s )   )
\}.
\]
then show $\tilde{s}(i) - s < \epsilon$. We leave the details to the reader. 
\end{proof}

\subsection{A Glivenko-Cantelli theorem}


\begin{lemma} [Pre-Glivenko-Cantelli theorem] \label{lem:pre-GCthm}
Under \assumpsref{cpct}-\assumpssref{convgAssump} , for any $\epsilon >0$ and positive numbers $0 < \underline{t} < \bar{t} < \min_i {c_i^*}^{-1} (\bar{s}) $, there exists $\delta = \delta(\epsilon) >  0$  such that, for sufficiently large $m$, 
\begin{align}
\sup_{ \substack{\max(\|K - K^*\| ,\|\Pi - \Pi^*\|  )< \delta\\   \underline{t} \leq t \leq \bar{t}}}\left|   \frac{1}{m} \sum_{i = 1}^m I(T_i\leq  t)  - G (t)\right| 
\leq  \epsilon  \label{denominatorConvg}\\
\sup_{ \substack{\max(\|K - K^*\| ,\|\Pi - \Pi^*\|  )< \delta\\   \underline{t} \leq t \leq \bar{t}}} \left|\frac{1}{m} \sum_{i = 1}^m (1- H_i) I( T_i  \leq  t )  -  \pi_0{G}_0(t)\right| \leq \epsilon ,\label{PropernumeratorConvg}\\
\sup_{ \substack{\max(\|K - K^*\| ,\|\Pi - \Pi^*\|  )< \delta\\   \underline{t} \leq t \leq \bar{t}}} \left|\frac{1}{m}\sum_{i = 1}^m  I( S_i  \geq  1 - c_i( t ))  -  \bar{G}_0(t)\right| \leq  \epsilon
 \label{numeratorConvg} 
\end{align}
with probability 1. 
\end{lemma}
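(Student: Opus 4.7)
The plan is to combine three ingredients: (i) the sandwich inequalities of Lemmas~\ref{lem:critLem1} and \ref{lem:critLem2}, which convert averages involving the perturbed $(\Pi, K)$ into ones involving the reference $(\Pi^*, K^*)$ with slightly shifted cutoffs; (ii) the pointwise strong laws of Assumption~\ref{assump:convgAssump} for the starred quantities, upgraded to uniform convergence over $t \in [\underline t, \bar t]$; and (iii) the uniform symmetric-difference bound of Assumption~\ref{assump:regularity}(ii), which shows that the $i$-dependent shift in cutoffs contributes at most $O(\epsilon)$ to the averages. The hypotheses $\underline t > 0$ and $\bar t < \min_i (c_i^*)^{-1}(\bar s)$ of the lemma are exactly what Lemmas~\ref{lem:critLem1} and \ref{lem:critLem2} need.

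First, I would fix $\epsilon > 0$, apply Lemma~\ref{lem:critLem1} to \eqref{denominatorConvg} and \eqref{PropernumeratorConvg}, and apply Lemma~\ref{lem:critLem2} to \eqref{numeratorConvg}, so as to choose a single $\delta > 0$ such that, uniformly over all $i$ and $t \in [\underline t, \bar t]$,
\[
I(T_i^* \leq t_{i,-\epsilon}^*) \;\leq\; I(T_i \leq t) \;\leq\; I(T_i^* \leq t_{i,+\epsilon}^*), \qquad t_{i,\pm\epsilon}^* \equiv (c_i^*)^{-1}(c_i^*(t) \pm \epsilon),
\]
with the analogous sandwich multiplied by $(1-H_i)$ handling the null-restricted average, and the mirror version from Lemma~\ref{lem:critLem2} handling \eqref{numeratorConvg}. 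This reduces the problem to controlling averages of the form $m^{-1}\sum I(T_i^* \leq t_{i,\pm\epsilon}^*)$, $m^{-1}\sum (1-H_i) I(T_i^* \leq t_{i,\pm\epsilon}^*)$, and their mirror counterparts, all involving the fixed $(\Pi^*, K^*)$.

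Next, I would decompose
\[
\frac{1}{m}\sum_{i=1}^m I(T_i^* \leq t_{i,+\epsilon}^*) = \frac{1}{m}\sum_{i=1}^m I(T_i^* \leq t) + \frac{1}{m}\sum_{i=1}^m I(U_i \in W_i(t,\epsilon)),
\]
where $W_i(t, \epsilon) \subset (0,1)$ is the set of $u$'s corresponding to $\{t < T_i^* \leq t_{i,+\epsilon}^*\}$. By Definitions~\ref{def:event_T}--\ref{def:event_S} and the uniform null distribution of $S_i^*$, $W_i(t,\epsilon)$ has Lebesgue measure exactly $\epsilon$, and Lemma~\ref{lem:uniformity}(i)--(ii) places it inside a fixed compact interval $[u_0, 1-u_0]$ uniformly in $i$ and $t \in [\underline t, \bar t]$. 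Assumption~\ref{assump:regularity}(ii) (with one set empty) then yields $\frac{1}{m}\sum_i P(U_i \in W_i(t,\epsilon) \mid X_i) \leq C(u_0)\epsilon$ for large $m$. Conditioning on $\{X_i\}$ and applying the strong law to the independent bounded summands $I(U_i \in W_i(t,\epsilon)) - P(U_i \in W_i(t,\epsilon) \mid X_i)$, promoted to uniformity in $t$ via a finite $\epsilon$-grid together with the monotonicity of $t \mapsto t_{i,+\epsilon}^*$, I would conclude that $\frac{1}{m}\sum I(U_i \in W_i(t,\epsilon)) \leq C(u_0)\epsilon + o(1)$ almost surely, uniformly in $t$. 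The same step is carried out for $t_{i,-\epsilon}^*$, for the null-restricted version (the factor $1-H_i$ passes through with $\pi_0$ appearing in the limit), and for the mirror-statistic version.

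Finally, the pointwise strong laws of Assumption~\ref{assump:convgAssump} extend to uniform convergence on $[\underline t, \bar t]$ by a standard P\'olya-type argument: each empirical sum is monotone in $t$ and the respective limits $G$, $\pi_0 G_0$, $\bar G_0$ are continuous, so almost-sure convergence on a countable dense subset forces uniform a.s.~convergence on the compact interval. Combined with the $O(\epsilon)$ shift bound and the sandwich, this yields the three conclusions with $\epsilon$ replaced by $C(u_0)\epsilon$ on the right, and a final rescaling gives the statement as written. The main obstacle is the middle step: because the shifted cutoff $t_{i,+\epsilon}^*$ genuinely depends on $i$, one cannot simply invoke continuity of $G$ to absorb the shift. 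The delicate interplay between Lemma~\ref{lem:uniformity} (keeping the relevant $u$-intervals away from $0$ and $1$ uniformly in $(\theta,\beta) \in \Theta \times \boldB$) and Assumption~\ref{assump:regularity}(ii) (translating a uniform Lebesgue-measure bound into a uniform conditional-probability bound) is essential, and the promotion from the in-expectation bound to an almost-sure bound that is simultaneously uniform in $t$ and in the perturbation $(\Pi, K)$ is the most careful part of the argument.
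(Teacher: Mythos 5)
Your proposal is correct and follows essentially the same route as the paper's proof: both use the event-inclusion Lemmas~\ref{lem:critLem1} and \ref{lem:critLem2} to sandwich the perturbed indicators between starred indicators at $\epsilon$-shifted cutoffs, control the shift via Lemma~\ref{lem:uniformity} together with \assumpref{regularity}(ii) (your set $W_i(t,\epsilon)$ of Lebesgue measure $\epsilon$ is exactly the symmetric difference the paper bounds in its terms $(A)$ and $(B)$), invoke the strong laws of \assumpref{convgAssump} for the starred processes, and upgrade to uniformity in $t$ by a P\'olya-type finite-grid argument using monotonicity and continuity of the limits. The only differences are presentational, e.g.\ you center the shift indicators directly by their conditional probabilities where the paper folds this into remainder terms $Q_v^{\pm}$.
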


\begin{proof}[Proof of \lemref{pre-GCthm}]

In this proof, for any function $F(\cdot)$,  $F(t-)$ denotes the left limit at the point $t$.

Proof of \eqref{denominatorConvg}: Let 
\[
G_{\theta, \beta} (t)= \frac{1}{m} \sum_{i = 1 }^m I( T_i  \leq t ), 
\]
where the subscript emphasizes that the $T_i$'s are defined with an unspecified $(\theta, \beta)$, to distinguish from $G$ in \assumpref{convgAssump}. 
Let $n$ be large enough such that $1/n<  \epsilon/2$ and consider $G^{\leftarrow}  (1/n) \leq \dots \leq G^{\leftarrow}  (n/n)$. If we define
\[
n' : = \min\left\{ i:  \underline{t} <   G^{\leftarrow} (i/n) < \bar{t}, i = 1, \dots, n \right\}, 
\]
\[
d' : = |\left\{ i:  \underline{t} <   G^{\leftarrow} (i/n) <  \bar{t}, i = 1, \dots, n \right\}|. 
\]
Define $t_1:=  G^{\leftarrow} (n'/n), t_2 := G^{\leftarrow} (  (n' +1)/n), \dots, t_{d'} := G^{\leftarrow} (  (n' + d'-1)/n)$, as well as $t_0 = \underline{t}$ and $t_d = \bar{t}$ with $d = d' +1$. Following the proof of the Glivenko-Cantelli theorem in \citet[p.224]{resnick2019probability},  we have
\begin{multline} \label{resnick}
\sup_{\underline{t} \leq t \leq \bar{t}} \left| G_{\theta, \beta}(t)  - G(t) \right|  
\leq  \left(\bigvee_{v = 0}^d  |G_{\theta, \beta} (t_{v}) - G(t_v) |\vee| G_{\theta, \beta} (t_v-) - G (t_v - )|\right)  +  1/n
\end{multline}

We will first bound the terms of the form $|G_{\theta, \beta}(t_{v}) - G(t_v) |$ in \eqref{resnick}. The  strong law of large numbers for $G(\cdot)$ in \assumpref{convgAssump} suggests that 
\begin{equation} \label{ABcanbdd}
|G_{\theta, \beta} (t_{v}) - G(t_v) | \leq  
  \left|m^{-1}\sum_{i = 1 }^m (I( T_i  \leq t_v ) - I(T_i^*  \leq t_v))\right| + R_v, 
\end{equation}
where the remainder term $R_v \longrightarrow 0$ almost surely.  
 Now, realizing $\{T_i^*  \leq t_v\} = \{S_i^*  \leq c_i^*(t_v)\}$, by \lemref{critLem1} and $\underline{t} >0$, pick $\delta >0$ such that 
 \begin{multline} \label{useInclusionLemma}
 \left|\frac{1}{m}\sum_{i = 1 }^m (I( T_i  \leq t_v ) - I(T_i^*  \leq t_v))\right| \leq  \\
 \underbrace{\left|  m^{-1}\sum_{i = 1 }^m \left[ I(S^*_i  \leq c_i^*( t_v) + \epsilon) - I(S_i^*  \leq c_i^*(t_v))\right]\right|}_{(A)} \vee  \underbrace{  \left|  m^{-1}\sum_{i = 1 }^m \left[I(S^*_i  \leq c_i^*( t_v)  - \epsilon) -I(S_i^*  \leq c_i^*(t_v))\right]\right|}_{(B)}.
 \end{multline}
for $\max(\|\Pi - \Pi^*\|, \|K - K^*\|) < \delta$. 
 This is because  $\sum_{i = 1 }^m \left[ I(S^*_i  \leq c_i^*( t_v) + \epsilon) - I(S_i^*  \leq c_i^*(t_v))\right] \geq \sum_{i = 1 }^m (I( T_i  \leq t_v ) - I(T_i^*  \leq t_v)) $ if the latter term is greater than $0$; likewise,
 $\sum_{i = 1 }^m \left[ I(S^*_i  \leq c_i^*( t_v) - \epsilon) - I(S_i^*  \leq c_i^*(t_v))\right] \leq \sum_{i = 1 }^m (I( T_i  \leq t_v ) - I(T_i^*  \leq t_v)) $ if the latter is less than $0$.

 We will first develop a bound for term $(A)$. One have
  \begin{align}
 (A) &
  \leq \mathbb{E} \Bigl|        \frac{1}{m}\sum_{i = 1}^m  [ P(S^*_i  >  c_i^*( t_v) |X_i ) - P(S^*_i  > c_i^*( t_v) + \epsilon | X_i)] \Bigr| + Q_v^+\notag\\
 &=  \mathbb{E} \Bigl|\frac{1}{m}\sum_{i = 1}^m \bigl[ P\bigl( U_i \in (\psi^*_{iL} (c_i^*( t_v)) ,  \psi^*_{iR} (c_i^*( t_v) )|X_i \bigr)  - \notag\\
 & \hspace{3cm} P(  U_i \in (\psi^*_{iL} (c_i^*( t_v) + \epsilon ) ,  \psi^*_{iR} (c_i^*( t_v) + \epsilon) | X_i)\bigr]\Bigr| + Q_v^+ \notag\\
 &\leq C(\underline{t}) \epsilon + Q_v^+ \label{bddfortermA}
 \end{align}
 where  $Q_v^+ = o_{a.s.}(1)$ is a remainder term coming from the strong law of $G(\cdot)$ in \assumpref{convgAssump}. The second equality comes from \defref{event_S}. Note that the intervals 
 \[
 (\psi^*_{iL} (c_i^*( t_v)) ,  \psi^*_{iR} (c_i^*( t_v) ), i =1, \dots, m
 \]
can be equivalently represented as 
\[
 (\omega^*_{iL} (t_v) ,  \omega^*_{iR} (t_v)), i =1, \dots, m
\]
by \defref{event_T}, which all belong to a compact sub-interval in $(0, 1)$ by \lemref{uniformity}$(i)$ and the fact that $\underline{t} > 0$. As such, \assumpref{regularity}$(ii)$ can be applied to give the last inequality \eqref{bddfortermA}. By realizing, from \defsref{event_T} and \defssref{event_S}, that the event
\[\{S^*_i  >  c_i^*( t_v)  - \epsilon \}\] 
is equivalent to $U_i$ belonging to an interval that is $ (\omega^*_{iL} (t_v) ,  \omega^*_{iR} (t_v))$ expanded by a further $\epsilon$ width, it is obvious that one can analogously develop the bound
\begin{equation} \label{bddfortermB}
(B) \leq C(\underline{t}) \epsilon+ Q_v^{-} 
\end{equation}
for a constant $C(\underline{t}) $ and $Q_v^{-} = o_{a.s.} (1)$. Combining \eqref{ABcanbdd},  \eqref{bddfortermA} and \eqref{bddfortermB} gives (by appropriately adjusting $\epsilon$)
\begin{equation} \label{g_theta_beta_first_bound}
|G_{\theta, \beta} (t_{v}) - G(t_v) | \leq  \epsilon/2 \text{ a.s.}, 
\end{equation}
 for sufficiently large $m$. A similar bound
\begin{equation} \label{g_theta_beta_second_bound}
|G_{\theta, \beta} (t_{v}-) - G(t_v-) | \leq  \epsilon/2 \text{ a.s.}, 
\end{equation}
can be derived in much the same way with no difficulty by first writing 
\[G_{\theta, \beta} (t_{v}-) - G(t_v-) =  m^{-1} \sum_{i =1}^mI(T_i < t_v)- G(t_v) 
\] using the continuity of $G(\cdot)$, and the proof is omitted for brevity. Combining \eqref{resnick}, \eqref{g_theta_beta_first_bound} and \eqref{g_theta_beta_second_bound} give \eqref{denominatorConvg}. \eqref{PropernumeratorConvg} can be proved in the same way by first writing
\[
 \sum_{i = 1}^m (1- H_i) I( T_i  \leq  t )   = \sum_{H_i = 0} I( T_i  \leq  t ),
\]
and noting that $U_i$ is uniformly distributed given  $H_i =0$ and $X_i$,  and thus omitted.

%

Proving \eqref{numeratorConvg} is similar. One proceed by developing the bound 
\begin{align}
 & \left|  \frac{1}{m}\sum_i[ I( S_i  >   1 - c_i( t_v )) - P(S_i^* > 1 - c_i^*(t_v))] \right|  \notag \\
 & \leq \left|  \frac{1}{m}\sum_i [ I( S_i^*  >   1 - c_i^*( t_v ) - \epsilon) - P(S_i^* >  1 - c_i^*(t_v))] \right|  \vee  \notag\\
 & \hspace{4cm} \left|  \frac{1}{m}\sum_i[ I( S^*_i  >   1 - c^*_i( t_v ) + \epsilon) - P(S_i^* >  1 - c_i^*(t_v))] \right| \notag 
  \end{align}
with  \lemref{critLem2}
for $t_v$ that are now quantiles of $\bar{G}_0$, for sufficiently small $\delta >0$ and $\max(\|\Pi - \Pi^*\|, \|K - K^*\|) < \delta$.  From \defref{event_S}, the events $\{S_i^*  >   1 - c_i^*( t_v ) \}$ have the form
\[
\{U_i \in (\psi^*_{iL}(1 - c_i^*( t_v ) ) , \psi^*_{iR}(1 - c_i^*( t_v ) ))\},
\]
and to show that the intervals $(\psi^*_{iL}(1 - c_i^*( t_v ) ) , \psi^*_{iR}(1 - c_i^*( t_v ) )$ can be placed in a compact sub-interval of $(0, 1)$ by \lemref{uniformity}$(ii)$ to apply \assumpref{regularity}$(ii)$, one need to show that $c_1^*( t_v ) , \dots, c_m^*( t_v ) $ are bounded away from 1. This is true because  $c_i^*(t_v)\leq c_i^*(\bar{t}) \leq \bar{s} < 1$ for all $i$ by definition. The same proof rundown goes through, again, by realizing that $(\psi^*_{iL}(1 - c_i^*( t_v ) - \epsilon) , \psi^*_{iR}(1 - c_i^*( t_v ) - \epsilon)$ are just $\epsilon$-expansion of $(\psi^*_{iL}(1 - c_i^*( t_v ) ) , \psi^*_{iR}(1 - c_i^*( t_v ) )$ from \defref{event_S}. The rest of the proof goes thru with no resistance.

\ \

\ \

\ \

\end{proof}

\begin{lemma}[Glivenko-Cantelli Theorems] \label{lem:GCthm}
Under \assumpsref{cpct}-\assumpssref{convgAssump}, for any $0 < \underline{t} < \bar{t} \leq \min_i {c_i^*}^{-1} (\bar{s}) $, we have
\begin{align}
&\sup_{ \underline{t}  \leq t \leq \bar{t}} \left|\frac{1}{m} \sum_{i = 1}^mI( \hat{T}_i  \leq t)  - G(t)\right|  \longrightarrow 0, \label{GCdenom}\\
& \sup_{ \underline{t}  \leq t \leq \bar{t}} \left|\frac{1}{m} \sum_{i = 1}^m (1- H_i)I( \hat{T}_i  \leq t)  - \pi_0G_0(t)\right|  \longrightarrow 0, \label{GCnumer}\\
& \sup_{ \underline{t}  \leq t \leq \bar{t}} \left|\frac{1}{m} \sum_{i = 1}^mI( \hat{S}_i  \geq 1 - \hat{c}_i(t))  - \bar{G}_0(t)\right|  \longrightarrow 0,  \label{GCnumerbar}
\end{align}
almost surely.
\end{lemma}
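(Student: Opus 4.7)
The plan is to bootstrap \lemref{pre-GCthm} into the stated conclusions by establishing that the random parameter pair $(\hat\theta,\hat\beta)$ used to construct $\hat T_i$, $\hat S_i$ and $\hat c_i$ eventually enters the deterministic $\delta$-neighborhood around $(\theta^*,\beta^*)$ over which the pre-Glivenko-Cantelli bounds hold uniformly. Concretely, for each $\epsilon>0$, I would first invoke \lemref{pre-GCthm} to produce $\delta=\delta(\epsilon)>0$ governing the three sup-bounds; then argue that, almost surely, $(\hat\theta,\hat\beta)$ lies in the open ball of radius $\delta'$ around $(\theta^*,\beta^*)$ for all sufficiently large $m$, where $\delta'$ is chosen small enough that $\max(\|\hat K - K^*\|_\infty,\|\hat \Pi - \Pi^*\|_\infty)<\delta$.

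The crucial intermediate step is the consistency $(\hat\theta,\hat\beta)\to(\theta^*,\beta^*)$ almost surely. This is a classical Wald/White-style argument for MLE under misspecification: \assumpref{cpct} furnishes compactness of $\Theta\times\boldB$ together with a unique maximizer of the population log-likelihood, while the regularity conditions of \appref{add_assume} provide the uniform strong law of large numbers for the sample log-likelihood needed to transfer the argmax from the population to the sample objective. Given consistency in $(\theta,\beta)$, I would pass to consistency in $(\Pi,K)$ via a uniform-in-$i$ Lipschitz argument: since $\pi_{l,X_i}$, $\pi_{r,X_i}$, $k_{l,X_i}$, $k_{r,X_i}$ are smooth functions of $\tilde X_i^T\theta_l, \tilde X_i^T\theta_r, \tilde X_i^T\beta_l, \tilde X_i^T\beta_r$, and $\|X_i\|_\infty\le C$ almost surely by \assumpref{regularity}$(i)$, there is a universal Lipschitz constant $L$ for which
\[
\max(\|\hat K - K^*\|_\infty,\|\hat \Pi - \Pi^*\|_\infty)\;\le\; L\,\bigl(\|\hat\theta-\theta^*\|+\|\hat\beta-\beta^*\|\bigr).
\]
Hence the right-hand side tends to zero almost surely, so eventually it is below $\delta$.

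With these pieces in place, for a.e.\ sample path and all sufficiently large $m$, $(\hat\theta,\hat\beta)$ falls in the regime where \lemref{pre-GCthm} applies. Because the bounds there are taken as suprema over the entire $\delta$-ball, they in particular bound the realized quantities with $(\theta,\beta)=(\hat\theta,\hat\beta)$, which are precisely $m^{-1}\sum I(\hat T_i\le t)$, $m^{-1}\sum(1-H_i)I(\hat T_i\le t)$ and $m^{-1}\sum I(\hat S_i\ge 1-\hat c_i(t))$. This yields each of \eqref{GCdenom}, \eqref{GCnumer} and \eqref{GCnumerbar}, up to the mild technical point that \lemref{pre-GCthm} is stated for $\bar t<\min_i c_i^{*\,-1}(\bar s)$ while the current statement allows equality; this is handled by applying \lemref{pre-GCthm} with $\bar t$ replaced by $\bar t-\eta$ and using continuity of the limits in $t$ to let $\eta\downarrow 0$.

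The main obstacle is the MLE consistency step: although standard in principle, verifying it here requires the uniform law of large numbers for the beta-mixture log-likelihood over $\Theta\times\boldB$, which in turn uses \assumpref{regularity}$(iii)$ (finiteness of $\mathbb{E}\log U_i$ and $\mathbb{E}\log(1-U_i)$) to dominate $\log h_{X_i}(U_i)$ and $(i)$ to keep the $k_{l,X_i},k_{r,X_i}$ bounded away from $0$ and $1$. Once consistency is in hand, the transition to uniform-in-$i$ control of $(\hat K,\hat\Pi)$ is routine Lipschitz bookkeeping, and the remainder is a direct appeal to \lemref{pre-GCthm}.
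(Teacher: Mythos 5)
Your proposal is correct and follows essentially the same route as the paper: establish a.s.\ consistency of the MLE $(\hat\theta,\hat\beta)$ by dominating $|\log h_{X_i}(U_i)|$ with an integrable envelope (using \assumpref{regularity}$(i)$, $(iii)$ and the compactness/unique-maximizer content of \assumpref{cpct}, in the style of White's misspecified-MLE theory), transfer this to $\|\hat\Pi-\Pi^*\|_\infty\vee\|\hat K-K^*\|_\infty\to 0$ via a mean-value/Lipschitz argument on the link functions, and then intersect the two probability-one events so that \lemref{pre-GCthm} applies along almost every sample path for large $m$. Your additional remark about reconciling $\bar t\le\min_i {c_i^*}^{-1}(\bar s)$ here with the strict inequality in \lemref{pre-GCthm} is a minor refinement the paper passes over, but it does not change the argument.
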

\begin{proof} [Proof of \lemref{GCthm}]
Let $\hat{\theta} = (\hat{\theta}_l^T, \hat{\theta}_r^T)^T, \hat{\beta} = (\hat{\beta}_l^T, \hat{\beta}_r^T)^T$, and define $\hat{\Pi} = \Pi(\hat{\theta})$ and $\hat{K} = K(\hat{\beta})$.
We will first show that 
\begin{equation} \label{PiKconvg}
\|\hat{\Pi} - \Pi^*\|_\infty \vee \|\hat{K} - K^*\|_\infty \longrightarrow 0 \text{ a.s.},
\end{equation}
which is a consequence of 
\begin{equation}\label{thetabetaconvg}
\hat{\theta}  \longrightarrow \theta^* \text{ and } \hat{\beta} \longrightarrow  \beta^* \ \  a.s.
\end{equation}
by the mean value theorem, the compactness of $\Theta \times \boldB$ and \assumpref{regularity}$(i)$. To show \eqref{thetabetaconvg},  it suffices to bound
\begin{equation}\label{absloglike}
\left|\log [ (1 - \pi_{li} - \pi_{ri}) + \pi_{li}h_{li} (U_i) + \pi_{ri} h_{ri} (U_i) ]\right| = |\log h_{X_i} (U_i)|
\end{equation}
by an integrable function in $U_i$ that doesn't depend on $(\theta, \beta)$ \citep[Theorem 2.1]{white1981consequences}. We first let $u_0 >0$ be as in \lemref{uniformity}$(iii)$. By the compactness of $\Theta \times \boldB \times [u_0, 1 - u_0]$ and \assumpref{regularity}$(i)$, there exists  universal constants $C, c >0$ such that $\log(c) < 0$, $\log(C) >0$, and
\[
C \geq h_{X_i} (U_i) \geq c
\]
for all  $(\theta, \beta, X_i)$ whenever $U_i \in [u_0, 1 - u_0]$. Note that \lemref{recipProp} also implies that 
\[
h_{X_i} (U_i) \geq c \text{ for } U_i \in (0, u_0)\cup (1 - u_0, 1).
\]
Moreover, for $U_i \in (0, u_0)$, $\pi_{ri}h_{ri} (U_i) \leq C$ by \lemref{convex}, hence borrowing notations from  \secref{technical}, 
\begin{align*}
\log h_{X_i} (U_i) &\leq \log\left(1 + C +  \bar{\pi}\frac{{(1 - U_i)}^{\gamma_l - 1}U_i^{\underline{k}  - 1}}{B_{\min} }  \right) \\
&\leq \log(2 + C)  + \log  \left( \tilde{C}\frac{ (1- U_i)^{\gamma_l - 1}U_i^{\underline{k}  - 1}}{(1- u_0)^{\gamma_l - 1}u_0^{\underline{k}  - 1}}  \right)  \equiv \underbrace{m_l(U_i)}_{>0},
\end{align*}
for a constant $\tilde{C} > 1 \vee \frac{\bar{\pi}}{B_{\min}}$. 
Similarly, there exists a positive function $m_r (\cdot)$ such that
\[
\log h_{X_i} (U_i) \leq m_r (U_i)
\]
for $U_i \in (1 - u_0, 1)$. Combining these facts we have for all $U_i \in (0, 1)$, 
\[
|\log h_{X_i} (U_i)| \leq  |\log(c)|  \vee \log (C)\vee m_r(U_i) \vee m_l(U_i),
\]
where the right hand side is integrable by \assumpref{regularity} $(iii)$ and \eqref{thetabetaconvg} is proved. 
%
%
%
%

We will only prove \eqref{GCdenom}, and \eqref{GCnumer} and \eqref{GCnumerbar} can be shown the same way. Let
$
D_m = D_m (\omega) = \sup_{ \underline{t}  \leq t \leq \bar{t}} \left|\frac{1}{m} \sum_{i = 1}^mI( \hat{T}_i  \leq t)  - G(t)\right|
$,
where $\omega$  denotes a point in the underlying probability space $\Omega$. It suffices to show that for any $\epsilon > 0$, there exists a subspace $\Lambda(\epsilon) \subset \Omega$ such that $P(\Lambda(\epsilon)) = 1$ and 
$
D_m (\omega) <  \epsilon 
$
for sufficiently large $m$ and every $\omega \in \Lambda(\epsilon)$. By \lemref{pre-GCthm}, there exists $\Lambda_1$ with $P(\Lambda_1) = 1$ such that \eqref{denominatorConvg} holds on $\Lambda_1$ for $\delta(\epsilon) >0$. By \eqref{PiKconvg}, there exists $\Lambda_2$ with $P(\Lambda_2) = 1$ such that 
$
\|\hat{\Pi} - \Pi\|_\infty \vee \|\hat{K} - K\|_\infty  < \delta(\epsilon)
$
on $\Lambda_2$  for sufficiently large $m$. Take $\Lambda(\epsilon) = \Lambda_1\cap \Lambda_2$. 
\end{proof}

\subsection{Proof of \thmref{ZAPasympControl}} \label{sec:final_pf}

The proof is similar to that of \citet[Theorem 4]{storey2004strong} but is a bit more sutble. Recall the ratio in \eqref{FDP_est_zap_asymp}. We shall first show that 
under all the assumptions of \thmref{ZAPasympControl}, for any $ \underline{t} >0$ and any $ \bar{t} \in (t^\infty_{\alpha}, {\min}_i {c_i^*}^{-1}(\bar{s}) ]$, 
\begin{equation} \label{lowerbdd_FDP_ZAP}
 \liminf_{m \rightarrow \infty} \inf_{ \underline{t} \leq t \leq \bar{t} } \left \{ \widehat{\mbox{FDP}}_{asymp} (t) -  
FDP(t)
 \right\}
 \geq 0 \text{ a.s.}
\end{equation}
where $FDP(t) \equiv  \frac{\sum_i (1- H_i) I( \hat{T}_i  \leq   t)}{
\left(  \sum_i  I( \hat{T}_i  \leq   t)\right) \vee 1}$ for any $t > 0$. 
%
From the first two Glivenko-Cantelli statements in  \lemref{GCthm}, we see that
%
%


\begin{multline}
\lim_{m \rightarrow \infty} \sup_{\underline{t} \leq t \leq \bar{t}} \left|FDP(t) - \frac{ \pi_0 m G_0 (t) }{\left(  \sum_{i = 1}^m I( \hat{T}_i  \leq   t)\right) \vee 1}\right|\\
= \lim_{m \rightarrow \infty} \sup_{\underline{t} \leq t \leq \bar{t}} \left|  \frac{\sum_{i} (1- H_i) I( \hat{T}_i  \leq   t)}{
\left(  \sum_{i = 1}^m I( \hat{T}_i  \leq   t)\right) \vee 1}  - \frac{ \pi_0 m G_0 (t) }{\left(  \sum_{i = 1}^m I( \hat{T}_i  \leq   t)\right) \vee 1}\right|\\
\leq \lim_{m \rightarrow \infty}    \left| \frac{m}{ \left(  \sum_{i = 1}^m I( \hat{T}_i  \leq  \underline{t})\right) \vee 1  } \right| \sup_{\underline{t} \leq t \leq \bar{t}}  \left|   \frac{\sum_i (1 - H_i) I( \hat{T}_i  \leq   t)}{m}  - \pi_0 G_0(t) \right| = 0 \text{ a.s.}\label{fdd}
\end{multline}
since $\lim_{m \rightarrow \infty}    \left| \frac{m}{ \left(  \sum_{i = 1}^m I( \hat{T}_i  \leq \underline{t})\right) \vee 1  } \right|  =  1/G(\underline{t}) < \infty$ almost surely, given that $G(\underline{t}) > 0$. On the other hand, it must be that $\bar{G}_0(t) \geq \pi_0 G_0(t)$ considering that $ P( S_i^*  \geq  1 - c^*_i( t )| H_i = 0) = P(T_i^* \leq t | H_i = 0)$, which, together with the last Glivenko-Cantelli statement in  \lemref{GCthm},  
gives
\[
\liminf_m \inf_{\underline{t} \leq t \leq \bar{t}}  \left\{ \frac{ \sum_{i = 1}^m I( \hat{S}_i  \geq  1 - \hat{c}_i( t )) }{m} -  \pi_0 G_0(t) \right\} \geq 0.
\]
The preceding display and  \eqref{fdd} will lead to
\begin{multline*}
\liminf_{m \rightarrow \infty} \inf_{ \underline{t} \leq t \leq \bar{t}} \left \{ \widehat{\mbox{FDP}}_{asymp} (t) -  
FDP(t)
 \right\}\\
  \geq \liminf_{m \rightarrow \infty} \inf_{\underline{t} \leq t \leq\bar{t}}  \left\{
   \frac{ \sum_{i = 1}^m I( \hat{S}_i  \geq  1 - \hat{c}_i( t )) }{ \left(  \sum_{i = 1}^m I( \hat{T}_i  \leq   t)\right) \vee 1}
    - FDP(t) \right\} \geq 0,
\end{multline*}
which is \eqref{lowerbdd_FDP_ZAP}. 
%
 
Towards finishing, we will establish that, almost surely, 
\begin{equation} \label{t_zap_range}
\liminf_m \hat{t}_{asymp} (\alpha) > 0 \text{ and }  \limsup_m \hat{t}_{asymp} (\alpha) \leq\min_i {c_i^*}^{-1} (\bar{s}).
\end{equation} 
Fix $t_1 \in (t^\infty_{\alpha},  \min_i {c_i^*}^{-1} (\bar{s}))$. By the definition of $t^\infty_\alpha$ in \assumpref{convgAssump} it must be the case that
\[
\frac{\bar{G}_0(t_1)}{G(t_1)} > \alpha = \frac{\bar{G}_0(t^\infty_{\alpha})}{G(t^\infty_{\alpha})} 
\]
and we can let $\frac{\bar{G}_0(t_1)}{G(t_1)} - \alpha = \epsilon_1 >0$. For sufficiently large $m$, because of \lemref{GCthm} we can get that 
$
  \left| \frac{ \sum_i I( \hat{S}_i  \geq  1 - \hat{c}_i( t_1)) }{ \left(  \sum_i I( \hat{T}_i  \leq   t_1)\right) \vee 1} - \frac{\bar{G}_0(t_1)}{G(t_1)}  \right| < \epsilon_1/2  \text{ a.s.,}
$
which implies that $\widehat{\mbox{FDP}}_{asymp} (t_1) > \alpha$ almost surely to give  the ``limsup" statement in  \eqref{t_zap_range}. On the other hand, let $\alpha  - \frac{\bar{G}_0(t_0)}{G(t_0)} = \epsilon_0 > 0$ for $t_0$ in \assumpref{convgAssump}. Since $t_0 < t^\infty_{\alpha}$ (by continuity of the ``G" functions), \lemref{GCthm} also suggests that for large enough $m$, $|\frac{\bar{G}_0(t_0)}{G(t_0)}  - \widehat{\mbox{FDP}}_{asymp} (t_0)| < \epsilon_0/2$ almost surely, which implies $ \widehat{\mbox{FDP}}_{asymp} (t_0) < \alpha$ almost surely and hence the ``liminf" statement in \eqref{t_zap_range}.

Now given \eqref{t_zap_range} is true, since $\widehat{\mbox{FDP}}_{asymp} (\hat{t}_{asymp} (\alpha)) \leq \alpha$, by \eqref{lowerbdd_FDP_ZAP} it must be true that
\[
\limsup_m FDP(\hat{t}_{asymp} (\alpha)) \leq \alpha\text{ a.s. }
\]
By the reverse Fatou's lemma, this implies
\[
\limsup_m  \mathbb{E}[FDP(\hat{t}_{ asymp} (\alpha) ) ] \leq \mathbb{E} \left[  \limsup_m  FDP(\hat{t}_{ asymp} (\alpha) )\right] \leq \alpha, 
\]
and  \thmref{ZAPasympControl}  is proved. 

\section{Proof for the finite-sample method} \label{app:finiteZAPpf}

The proof is almost exactly the same as that of \citet[Theorem 1]{lei2016adapt} which relies on the key  lemma in that paper \citep[Lemma 2]{lei2016adapt}, and we will only define the notation required to apply their argument. 
First, for each $t = 0, 1, \dots$, let
$
\mathbb{V}_t = \# \{i: U_i \in \mathcal{R}_t\text{ and } H_i = 0\}
$
and
$
\mathbb{U}_t = \# \{i: U_i \in \mathcal{A}_t\text{ and } H_i = 0\}
$
which are respectively the numbers of true nulls in the rejection set and acceptance set at step $t$. 
Define
\[
m_i =  I(U_i \geq 0.5) ( \widecheck{U}_i \vee U_i) +   I(U_i <  0.5) (\widecheck{U}_i\wedge U_i)
\]
and
\[
b_i  = I(  0.25 \leq  U_i \leq  0.75) 
\] 
so that
\[
U_i = b_i \{ I(m_i \geq 0.5) ( 1.5 - m_i )  + I(m_i < 0.5) (0.5 - m_i)\} + (1 - b_i)m_i. 
\]
Also define
$
\mathcal{C}_t = \{i: i \in \mathcal{A}_t \cup  \mathcal{R}_t \text{ and }H_i = 0\}
$
to give
\[
\mathbb{U}_t = \sum_{i \in \mathcal{C}_t} b_i
\text{ and }
\mathbb{V}_t = \sum_{i \in \mathcal{C}_t} ( 1 - b_i) = |\mathcal{C}_t| - \mathbb{U}_t.
\]
If we set the initial sigma-algebra $\mathcal{G}_{-1} = \sigma\{ (X_i, m_i)_{i\in \{1, \dots, m\}} , (b_i)_{i:H_i \neq 0}\}$, then $P( b_i = 1| \mathcal{G}_{-1}) = 0.5$ almost surely for a null $i$ under the uniform null distribution of $U_i$. With these ingredients, the arguments in the proof of \citet[Theorem 1]{lei2016adapt} will follow line by line, where the $U_i$'s will take the role of the $p$-values in that paper.

\section{Supplementary algorithms} \label{app:EM}

We will inherit the simplified notation in \appref{workModProp}. The \emph{complete data log-likelihood} for Model \eqref{hu-beta}, treated as a function of $\{\theta, \beta\}$,  has the form
\begin{align}
l(\theta, \beta) 
&= \sum_{i = 1}^m  H_{li} \left[ (k_{li}   - 1) \log( U_i )  + (\gamma_l - 1) \log (1 - U_i) - \log B(k_{li}, \gamma_l) \right] + \notag\\ 
&\quad \sum_{i = 1}^mH_{ri} \left[ (k_{ri} - 1) \log( 1 - U_i )  + (\gamma_r - 1) \log (U_i) - \log B( \gamma_r, k_{ri}) \right]  + \notag\\
 &\quad  \left[ \sum_{i = 1}^m (1 - H_{li} - H_{ri}) \log(1 - \pi_{li} - \pi_{ri}) +   H_{li} \log( \pi_{li}) +  H_{ri} \log( \pi_{ri})\right], \label{multinomialReg}
 \end{align}
where  for each $i$, $H_{li}$ and $H_{ri}$ are Bernoulli random variables with respective success probabilities $\pi_{li}$ and $\pi_{ri}$, and $H_{li}$ and $H_{ri}$ cannot be both equal to 1 at the same time. Note that the last line in \eqref{multinomialReg} amounts to a multinomial logistic regression with three classes. 

%
%
%

\subsection{EM algorithm for asymptotic ZAP} \label{app:EM_zap_asymp}

\ \

{\LinesNumberedHidden
\begin{algorithm}[H]
\caption{EM algorithm for asymptotic ZAP}
 \label{alg:EMasymp} 
\KwData{$U_1, \ldots, U_m, X_1, \dots, X_m$}
\KwIn{initial guess $\beta^{(0)}$, $\theta^{(0)}$}

\While{ $(\beta^{(j)}, \theta^{(j)})$ not converged}{
{\bf E step}: Let $\pi_{li}^{(j)}$, $\pi_{ri}^{(j)}$, $h_{li}^{(j)}$, $h_{ri}^{(j)}$ and $h_{X_i}^{(j)}$ be
as defined in \secref{betaMod} evaluated at $(\beta^{(j)}, \theta^{(j)})$. Compute 
\begin{multline*}
Q^{(j)}(\beta, \theta) = \\\sum_{i = 1}^m \left\{w_{li}^{(j)} \log[\pi_{li} h_{li} (U_i) ]+w_{ri}^{(j)} \log[\pi_{ri} h_{ri} (U_i) ]  +
 (1 - w_{li}^{(j)} - w_{ri}^{(j)}) \log (  1- \pi_{li}- \pi_{ri} )\right\},
\end{multline*}
where 
\begin{multline*}
w_{li}^{(j)} = \mathbb{E}_{ \beta^{(j)}, \theta^{(j)}}[H_{li} \mid X_i, U_i] =  \frac{ \pi_{li}^{(j)} \cdot  h_{li}^{(j)}(U_i)}{ h_{X_i}^{(j)} (U_i) }, \\
 w_{ri}^{(j)} = \mathbb{E}_{\beta^{(j)}, \theta^{(j)}}[H_{ri} \mid X_i, U_i ]=  \frac{ \pi_{ri}^{(j)} \cdot  h_{ri}^{(j)}(U_i)}{ h_{X_i}^{(j)} (U_i) }.
\end{multline*}

{\bf M step}:  Compute
$
(\beta^{(j+1)}, \theta^{(j+1)} ) = \arg \max_{\beta, \theta} Q^{(j)}(\beta, \theta))
$. 
}
\KwOut{Estimated coefficients $\beta$ and $\theta$}
\end{algorithm}
}

\subsection{Updating the thresholding functions in finite-sample ZAP}  \label{app:EM_finite_update}

We recommend using \algref{updateThreshold} below to update the thresholding functions, which performs estimations of our beta-mixture model, although finite-sample FDR control is guaranteed as long as the conditions in the \thmref{ZAPfiniteControl} are met. As seen in \algref{updateThreshold}, assessor functions for the hypotheses are first constructed based on expression \eqref{assessor_form}, using an EM algorithm that acts on the masked data $\{\tilde{U}_{t, i}, X_i\}_{i = 1}^m$ (\appref{EM_finite_ZAP}) to estimate the parameters. Next, for each masked $i \in \mathcal{A}_t \cup \mathcal{R}_t$, evaluated assessor value $T_i'$ at whichever $U_i$ or $\widecheck{U}_i$ is closer to the extreme ends of the interval $(0,1)$ is computed, and among them the hypothesis $j$ with the largest such value is selected. This step aims to locate the hypothesis in the current masked set that is the most likely to be a true null if all masked hypotheses are presumed to be from the candidate rejection set $\mathcal{R}_t$. Finally, one of the two thresholding functions $s_{l,t}$ and $s_{r,t}$ will be updated, in a manner that satisfies condition $(ii)$ in \thmref{ZAPfiniteControl}, to give a different $s_{l,t+1}$ or $s_{r,t+1}$: If $U_j > 0.5$, $s_{r,t+1}$ will be updated from $s_{r,t}$ at the point $X_j$ as $s_{r,t+1} (X_j) = U_{j}\vee \widecheck{U}_{j}$, and remains the same at all other covariate values; otherwise, $s_{l,t+1}$ will update from $s_{l,t}$ in a similar fashion using the value $U_{j}\wedge \widecheck{U}_{j}$.
As such,   at the next step $t+1$, one of $\mathcal{A}_{t+1}$ or $\mathcal{R}_{t+1}$ will be shrunk by exactly one element which is $j$. This is intuitive since if $\widehat{\text{FDP}}_{finite}(t) > \alpha$ at step $t$, one would hope to reduce the size of $\mathcal{R}_t$.

 \begin{algorithm}[h]
\caption{Update thresholding functions at  step $t$  with Model \eqref{hu-beta}}
 \label{alg:updateThreshold} 
\KwIn{The masked data $\{\tilde{U}_{t, i}, X_i\}_{i = 1}^m$ }

 Compute $\{\hat{\theta}_l,\hat{\theta}_r,  \hat{\beta}_l, \hat{\beta}_r\}$    using the EM algorithm in \appref{EM_finite_ZAP}.  
 
 Construct $\{\hat{a}_{X_i}(\cdot)\}_{i \in  \mathcal{A}_t \cup \mathcal{R}_t}$ with \eqref{assessor_form} by setting the underlying parameters as the estimates in the prior step. \\
 
 Find $j \equiv \arg\max_{i \in \mathcal{A}_t \cup \mathcal{R}_t} T_i'$ for $T_i' = \hat{a}_{X_i}(U_i')$, where
 \[U_i' \equiv I(U_i < 0.5)U_i\wedge \widecheck{U}_i +  I(U_i \geq 0.5)U_i\vee \widecheck{U}_i \]
\\
 \eIf{ $U_j >  0.5$}{
$s_{r, t+1} (X_i) = I(i = j) (U_i\vee \widecheck{U}_i) + I( i \neq  j) s_{r, t} (X_i)$, 
$s_{l,t+1} \equiv s_{l,t}$;
   }{
$s_{l, t+1} (X_i) = I(i = j) (U_i\wedge \widecheck{U}_i) + I( i \neq  j) s_{l, t} (X_i)$, 
$s_{r,t+1} \equiv s_{r,t}$;
  }
\KwOut{$s_{l t+1}$, $s_{r,t+1}$}
%
%
%
%
\end{algorithm}

\subsection{EM algorithm for finite-sample ZAP}  \label{app:EM_finite_ZAP}

We will lay out aspects of the EM algorithm required for \algref{updateThreshold}.

\emph{E-step computations}. Let $\mathcal{D}_{ti}  = (X_i, \widetilde{U}_{t, i})$ be the available data for $i$ at step $t$ of the finite-sample ZAP algorithm. To update from the parameters $(\theta^{(j)}, \beta^{(j)})$ at the $j$-th EM iteration, we need to compute the following quantities:
\begin{multline} \label{different_E_quantities}
\mathbb{E}_{(\theta^{(j)}, \beta^{(j)})}[H_{li}| \mathcal{D}_{ti}], 
\;\ \mathbb{E}_{(\theta^{(j)}, \beta^{(j)})}[H_{r i}| \mathcal{D}_{ti}],\\
 \;\ \mathbb{E}_{(\theta^{(j)}, \beta^{(j)})}[H_{r i} \log (U_i)| \mathcal{D}_{ti}],
 \;\ \mathbb{E}_{(\theta^{(j)}, \beta^{(j)})}[H_{r i}\log (1 - U_i)| \mathcal{D}_{ti}],\\ 
  \;\ \mathbb{E}_{(\theta^{(j)}, \beta^{(j)})}[H_{l i} \log (U_i)| \mathcal{D}_{ti}],
 \;\ \mathbb{E}_{(\theta^{(j)}, \beta^{(j)})}[H_{l i}\log (1 - U_i)| \mathcal{D}_{ti}].
\end{multline}
These quantities are straightforward to compute when $\tilde{U}_{t, i}$ is a singleton, so we will only focus on computing them when  $\tilde{U}_{t, i}$ is a two-element set, i.e. corresponding to a masked $U_i$. Like \algref{EMasymp}, we shall let $\pi_{li}^{(j)}$, $\pi_{ri}^{(j)}$, $h_{li}^{(j)}$, $h_{ri}^{(j)}$ and $h_{X_i}^{(j)}$ be
as defined in \secref{betaMod} evaluated at $(\beta^{(j)}, \theta^{(j)})$. We will have
\begin{align*}
\hat{H}_{li}^{(j)} &\equiv \mathbb{E}_{(\theta^{(j)}, \beta^{(j)})}[H_{li}| \mathcal{D}_{ti}] =  P_{(\theta^{(j)}, \beta^{(j)})}[H_{li} = 1| \mathcal{D}_{ti}]= \frac{ \pi^{(j)}_{li} [h^{(j)}_{li} (U_i)+ h^{(j)}_{li} (\widecheck{U}_i)]}{h^{(j)}_{X_i} (\widecheck{U}_i)+ h^{(j)}_{X_i} (U_i)}\\
\hat{H}_{ri}^{(j)} &\equiv \mathbb{E}_{(\theta^{(j)}, \beta^{(j)})}[H_{ri}| \mathcal{D}_{ti}] =  P_{(\theta^{(j)}, \beta^{(j)})}[H_{ri} = 1| \mathcal{D}_{ti}]= \frac{ \pi^{(j)}_{ri} [h^{(j)}_{ri} (U_i)+ h^{(j)}_{ri} (\widecheck{U}_i)]}{h^{(j)}_{X_i} (\widecheck{U}_i)+ h^{(j)}_{X_i} (U_i)}.
\end{align*}
Moreover, to express the last four quantities in \eqref{different_E_quantities}, we define
\begin{align*}
y^{(j)}_{li, A}&\equiv \mathbb{E}_{\theta^{(j)}, \beta^{(j)}}[H_{l i} \log (U_i)| \mathcal{D}_{ti},H_{l i}  = 1 ] =\frac{h_{li}^{(j)} (U_i) \log (U_i) + h_{li}^{(j)} (\widecheck{U}_i)  \log (\widecheck{U}_i)}
{h_{li}^{(j)} (U_i) + h_{li}^{(j)} (\widecheck{U}_i)},\\
y^{(j)}_{ri, A}&\equiv \mathbb{E}_{\theta^{(j)}, \beta^{(j)}}[H_{ri} \log (U_i)| \mathcal{D}_{ti},H_{ri}  = 1 ] =\frac{h_{ri}^{(j)} (U_i) \log (U_i) + h_{ri}^{(j)} (\widecheck{U}_i)  \log (\widecheck{U}_i)}
{h_{li}^{(j)} (U_i) + h_{li}^{(j)} (\widecheck{U}_i)},\\
y^{(j)}_{li, B}&\equiv \mathbb{E}_{\theta^{(j)}, \beta^{(j)}}[H_{l i} \log (1 - U_i)| \mathcal{D}_{ti},H_{l i}  = 1 ] =\frac{h_{li}^{(j)} (U_i) \log (1 - U_i) + h_{li}^{(j)} (\widecheck{U}_i)  \log ( 1 - \widecheck{U}_i)}
{h_{li}^{(j)} (U_i) + h_{li}^{(j)} (\widecheck{U}_i)},\\
y^{(j)}_{ri, B}&\equiv\mathbb{E}_{\theta^{(j)}, \beta^{(j)}}[H_{r i} \log (1 - U_i)| \mathcal{D}_{ti},H_{r i}  = 1 ]= 
 \frac{h_{ri}^{(j)} (U_i) \log (1 - U_i) + h_{ri}^{(j)} (\widecheck{U}_i)  \log (1- \widecheck{U}_i)}
{h_{ri}^{(j)} (U_i) + h_{ri}^{(j)} (\widecheck{U}_i)},
\end{align*}
then one can express
\begin{multline*}
\mathbb{E}_{(\theta^{(j)}, \beta^{(j)})}[H_{l i} \log (U_i)| \mathcal{D}_{ti}] = y^{(j)}_{li, A} \hat{H}_{li}^{(j)},\quad 
 \mathbb{E}_{(\theta^{(j)}, \beta^{(j)})}[H_{r i} \log (U_i)| \mathcal{D}_{ti}] = y^{(j)}_{ri, A} \hat{H}_{ri}^{(j)}, \\
\mathbb{E}_{(\theta^{(j)}, \beta^{(j)})}[H_{l i} \log (1 - U_i)| \mathcal{D}_{ti}] =y^{(j)}_{li, B} \hat{H}_{li}^{(j)}, \quad \mathbb{E}_{(\theta^{(j)}, \beta^{(j)})}[H_{r i} \log (1 - U_i)| \mathcal{D}_{ti}] =y^{(j)}_{r i, B} \hat{H}_{ri}^{(j)}.
\end{multline*}

\emph{Initialization}. We now discuss how to specify values for $\pi^{(0)}_{li}, \pi^{(0)}_{ri}$, $\beta_l^{(0)}$ and $\beta_r^{(0)}$ to initialize the algorithm. Specifying $\beta_l^{(0)}$ and $\beta_r^{(0)}$ is easy: For  $\beta_l$, we only consider the left group
$
\mathfrak{L} = \{i: \widecheck{U}_i \vee U_i \leq 0.5 \}
$, and fit the left-leaning beta density $h_{li} (\cdot)$ to the data points $\{  U_i \wedge \widecheck{U}_i  : i \in \mathfrak{L}\}$ to obtain an estimate for $\beta_l$ as the initial value  $\beta_l^{(0)}$, with a given value for $\gamma_l$ (such as $4$). Note that we fit the model to the smaller point $U_i \wedge \widecheck{U}_i$ instead of $U_i \vee \widecheck{U}_i$ for each $i \in \mathfrak{L}$ with the goal of having a more ``aggressive" left alternative distribution. 
The initial value $\beta_r^{(0)}$ can be obtained similarly by considering 
the right group $\mathfrak{R} \equiv  \{i: \widecheck{U}_i \wedge U_i > 0.5 \}$, the data $\{  U_i \vee \widecheck{U}_i  : i \in \mathfrak{R}\}$ and $h_{ri} (\cdot)$ .

To specifiy $\pi^{(0)}_{li}, \pi^{(0)}_{ri}$, for each $i$, we first define
\begin{multline}
\pi_{ri}^+ \equiv P(H_{ri}= 1|X_i, U_i > 0.5),  \quad  \pi_{li}^+ \equiv P(H_{li}= 1|X_i, U_i > 0.5), \quad \pi^+_i \equiv P(U_i > 0.5|X_i),\\
\pi_{li}^- \equiv P(H_{li} = 1|X_i, U_i \leq  0.5), \quad  \pi_{ri}^- \equiv P(H_{ri} = 1|X_i, U_i \leq  0.5), \quad \pi^-_i \equiv P(U_i \leq  0.5|X_i)
\end{multline}
and note that, by definition, $\pi_{ri}\geq \pi_{ri}^+ \pi^+_i$ and $\pi_{li} \geq\pi_{li}^- \pi^-_i$. We will form estimates for $\hat{\pi}_{ri}^+, \hat{\pi}^+_i, \hat{\pi}_{li}^-,  \hat{\pi}^-_i$ and let $\pi^{(0)}_{ri} = \hat{\pi}_{ri}^+ \hat{\pi}^+_i$ and $\pi^{(0)}_{li} = \hat{\pi}_{li}^+ \hat{\pi}^+_i$ be conservative estimates for $\pi_{ri}$ and $\pi_{li}$. The estimates $\hat{\pi}^+_i$ and $\hat{\pi}^-_i$ can be obtained as predicted probabilities by fitting a logistic regression on the indicator responses $D_i \equiv I(U_i > 0.5)$ with covariates $X_i$. For the rest of this section we will focus on the estimates $\hat{\pi}_{ri}^+$, since the estimates $\hat{\pi}_{li}^-$ can be obtain analogously. 

Let $J_i \equiv I( \tilde{U}_{t, i} \text{ has one element})$. Then
\[
\mathbb{E} [J_i  | X_i, D_i = 1] = P( J_i = 1|X_i,  D_i = 1)\geq (1 - \pi_{ri}^+ - \pi_{li}^+  )\left(  \frac{0.5 - 2(1 - s_{r, t} (X_i))}{0.5}\right)
\]
which is equivalent to
\[
\pi_{ri}^+ + \pi_{li}^+ \geq \mathbb{E}\left[ \tilde{J}_i \Big| X_i, D_i = 1\right],
\]
where $\tilde{J}_i \equiv 1 - \frac{0.5 J_i}{0.5 - 2(1 - s_{r, t}(X_i))} $ and 
\begin{multline*}
\mathbb{E}\left[ \tilde{J}_i \Big| X_i, D_i = 1\right] = \underbrace{P( \tilde{U}_{t, i} \text{ has two elements } | X_i , D_i = 1)}_{  \equiv 1 -  \pi^+_{J_i = 1}} +  \\
\underbrace{P( \tilde{U}_{t, i} \text{ has one element } | X_i , D_i = 1) }_{ \pi^+_{J_i = 1}}\left(1 - \frac{0.5}{0.5 - 2(1 - s_{r, t}(X_i))} \right)
\end{multline*}
As the probability of $H_{li} = 1$ should be small under $U_i > 0.5$, $\pi_{li}^+$ is likely to be negligible, hence the right hand side of the previous display should still be a conservative estimate for $\pi_{ri}^+$, i.e. 
\[
\pi_{ri}^+ \approx (1 -  \pi^+_{J_i = 1}) +  \pi^+_{J_i = 1} \left(1 - \frac{0.5}{0.5 - 2(1 - s_{r, t}(X_i))} \right).
\]
Now estimates for $\pi^+_{J_i = 1}$ for any $i \in \{1, \dots, m\}$ can be obtained as the fitted probability of the logistic regression on $J_i$ with covariates $X_i$ restricted to samples with $U_i > 0.5$. 
%

%
%
%


\  \ 

 \ \



\section{Component probability estimated with the beta mixture} \label{app:betaModSim}

Following up on \secref{beta_expr}, we briefly test how well our beta-mixture model in \secref{betaMod} can be leveraged to robustly estimate the non-null probabilities under more setups. We generate 8000 i.i.d. $z$-values from a normal mixture model with the density
\begin{equation} \label{normMixMod}
(1  - w) \times f_0(z)  + \underbrace{(w \cdot (1 - \rho) )}_{\equiv w_l}\times \phi(z; \mu_l, 1) +  \underbrace{(w \cdot\rho)}_{\equiv w_r} \times \phi(z; \mu_r, 1),
\end{equation}
where the simulation parameters $\mu_l$, $\mu_r$, $w$, $\rho$ range as
\begin{multline*}
\mu_l \in \{-2.5, -2, -1.5, -1, -0.5\}, \quad \mu_r \in \{0.5, 1, 1.5, 2, 2.5\}, \\
w \in \{0.1, 0.15, 0.2\}, \quad \rho \in \{0.5, 0.7, 0.9\}.
\end{multline*}
Apparently,  $w$ is the non-null probability, $\mu_l$ and $\mu_r$ are respectively the mean parameters for the alternative normals on the left and right, and $\rho$ parametrizes the degree of asymmetry reflected in the mixing  probabilities $w_l$ and $w_r$. For each set of $8000$ z-values, the beta mixture model \eqref{hu-beta} for $(\gamma_l, \gamma_r) = (4, 4)$  is fitted with regression intercepts only by an EM algorithm, and the resulting left and right model-based non-null probabilities $\hat{\pi}_l$ and $\hat{\pi}_r$ estimates serve as estimates for $w_l$ and $w_r$. 

\begin{table}[ht]
\centering
\caption[Table caption text]{Estimated probabilities $\hat{\pi}_l$ and $\hat{\pi}_r$ based on the beta mixture \eqref{hu-beta}  with $(\gamma_l, \gamma_r) = (4, 4)$ and regression intercepts only, for 8000  $z$-values generated by  the normal mixture    \eqref{normMixMod} with $\rho = 0.5$. $\hat{\pi}_l$ and $\hat{\pi}_r$ are respectively the left and right entries in each cell. 
}
\begin{tabular}{c*{5}{|cc}}
  \hline
$\mu_l \backslash \mu_r$ & \multicolumn{2}{c|}{ $ 0.5$}  & \multicolumn{2}{c|}{ $ 1$}  & \multicolumn{2}{c|}{ $ 1.5$} & \multicolumn{2}{c|}{ $ 2$}  & \multicolumn{2}{c}{ $ 2.5$}  \\ 
  \hline
 & \multicolumn{10}{c}{ $w = 0.1$, $(w_l, w_r) = (0.05, 0.05)$} \\ 
  \hline
-2.5 & 0.073 & 0.056 & 0.066 & 0.055 & 0.064 & 0.064 & 0.071 & 0.065 & 0.080 & 0.069 \\ 
  -2 & 0.061 & 0.037 & 0.062 & 0.038 & 0.057 & 0.058 & 0.059 & 0.061 & 0.071 & 0.079 \\ 
  -1.5 & 0.040 & 0.022 & 0.048 & 0.037 & 0.042 & 0.047 & 0.049 & 0.065 & 0.058 & 0.070 \\ 
  -1 & 0.044 & 0.029 & 0.037 & 0.039 & 0.034 & 0.046 & 0.047 & 0.064 & 0.066 & 0.072 \\ 
  -0.5 & 0.023 & 0.024 & 0.024 & 0.047 & 0.041 & 0.073 & 0.043 & 0.072 & 0.041 & 0.076 \\  \hline
   & \multicolumn{10}{c}{  $w = 0.15$, $(w_l, w_r) = (0.075, 0.075)$} \\ \hline
-2.5& 0.110 & 0.050 & 0.103 & 0.072 & 0.100 & 0.087 & 0.106 & 0.095 & 0.102 & 0.095 \\ 
 -2& 0.085 & 0.045 & 0.087 & 0.058 & 0.089 & 0.070 & 0.086 & 0.091 & 0.100 & 0.096 \\ 
    -1.5& 0.080 & 0.052 & 0.076 & 0.065 & 0.082 & 0.064 & 0.074 & 0.082 & 0.082 & 0.094 \\ 
-1& 0.065 & 0.040 & 0.044 & 0.041 & 0.041 & 0.064 & 0.068 & 0.087 & 0.068 & 0.102 \\ 
  -0.5 & 0.022 & 0.024 & 0.049 & 0.065 & 0.050 & 0.075 & 0.041 & 0.071 & 0.050 & 0.091 \\  \hline
    & \multicolumn{10}{c}{ $w= 0.2$, $(w_l, w_r) = (0.1, 0.1)$} \\ \hline
-2.5 & 0.116 & 0.073 & 0.118 & 0.101 & 0.123 & 0.119 & 0.127 & 0.132 & 0.122 & 0.136 \\ 
  -2 & 0.112 & 0.053 & 0.113 & 0.086 & 0.115 & 0.127 & 0.113 & 0.132 & 0.120 & 0.133 \\ 
  -1.5 & 0.089 & 0.047 & 0.090 & 0.074 & 0.086 & 0.104 & 0.096 & 0.124 & 0.086 & 0.135 \\ 
  -1 & 0.067 & 0.042 & 0.055 & 0.068 & 0.060 & 0.095 & 0.071 & 0.117 & 0.090 & 0.134 \\ 
  -0.5 & 0.025 & 0.046 & 0.042 & 0.083 & 0.032 & 0.106 & 0.068 & 0.124 & 0.069 & 0.142 \\
   \hline
\end{tabular}
      \label{tab:betaModSimTab1}
\end{table}
      
  \begin{table}[ht]
\centering
\caption[Table caption text]{Estimated probabilities $\hat{\pi}_l$ and $\hat{\pi}_r$ based on the beta mixture \eqref{hu-beta}  with $(\gamma_l, \gamma_r) = (4, 4)$ and regression intercepts only, for 8000  $z$-values generated by  the normal mixture   \eqref{normMixMod} with $\rho = 0.7$. $\hat{\pi}_l$ and $\hat{\pi}_r$ are respectively the left and right entries in each cell. 
 }
\begin{tabular}{c*{5}{|cc}}
  \hline
$\mu_l \backslash \mu_r$ & \multicolumn{2}{c|}{ $ 0.5$}  & \multicolumn{2}{c|}{ $ 1$}  & \multicolumn{2}{c|}{ $ 1.5$} & \multicolumn{2}{c|}{ $ 2$}  & \multicolumn{2}{c}{ $ 2.5$}  \\ 
  \hline
 & \multicolumn{10}{c}{ $w = 0.1$, $(w_l, w_r) = (0.03, 0.07)$} \\ 
  \hline
-2.5 & 0.049 & 0.043 & 0.037 & 0.046 & 0.047 & 0.060 & 0.050 & 0.079 & 0.058 & 0.091 \\ 
  -2 & 0.046 & 0.042 & 0.036 & 0.059 & 0.037 & 0.072 & 0.046 & 0.077 & 0.042 & 0.089 \\ 
  -1.5 & 0.065 & 0.042 & 0.023 & 0.044 & 0.028 & 0.072 & 0.051 & 0.076 & 0.037 & 0.089 \\ 
  -1 & 0.036 & 0.041 & 0.057 & 0.061 & 0.034 & 0.072 & 0.049 & 0.086 & 0.047 & 0.087 \\ 
  - 0.5 & 0.031 & 0.040 & 0.037 & 0.050 & 0.042 & 0.077 & 0.036 & 0.074 & 0.030 & 0.091 \\  \hline
   & \multicolumn{10}{c}{  $w = 0.15$, $(w_l, w_r) = (0.045, 0.105)$} \\ \hline
-2.5 & 0.066 & 0.054 & 0.057 & 0.058 & 0.062 & 0.106 & 0.058 & 0.124 & 0.070 & 0.137 \\ 
  -2 & 0.047 & 0.051 & 0.057 & 0.079 & 0.058 & 0.112 & 0.067 & 0.124 & 0.057 & 0.134 \\ 
  -1.5 & 0.043 & 0.055 & 0.044 & 0.075 & 0.034 & 0.091 & 0.048 & 0.118 & 0.050 & 0.128 \\ 
  -1 & 0.053 & 0.058 & 0.024 & 0.069 & 0.049 & 0.100 & 0.053 & 0.129 & 0.069 & 0.136 \\ 
  -0.5 & 0.038 & 0.057 & 0.034 & 0.076 & 0.034 & 0.111 & 0.035 & 0.125 & 0.045 & 0.135 \\  \hline
    & \multicolumn{10}{c}{ $w = 0.2$, $(w_l, w_r) = (0.06, 0.14)$} \\ \hline
   -2.5 & 0.078 & 0.070 & 0.079 & 0.116 & 0.088 & 0.152 & 0.079 & 0.159 & 0.088 & 0.185 \\ 
  -2 & 0.062 & 0.050 & 0.071 & 0.104 & 0.072 & 0.143 & 0.075 & 0.164 & 0.073 & 0.181 \\ 
  -1.5 & 0.049 & 0.052 & 0.058 & 0.114 & 0.059 & 0.146 & 0.059 & 0.164 & 0.074 & 0.184 \\ 
  -1 & 0.029 & 0.041 & 0.028 & 0.088 & 0.046 & 0.128 & 0.075 & 0.174 & 0.087 & 0.192 \\ 
  -0.5 & 0.031 & 0.059 & 0.021 & 0.106 & 0.026 & 0.137 & 0.050 & 0.164 & 0.059 & 0.186 \\ 
   \hline
\end{tabular}
      \label{tab:betaModSimTab2}
\end{table}

  \begin{table}[ht]
\centering
\caption[Table caption text]{Estimated probabilities $\hat{\pi}_l$ and $\hat{\pi}_r$ based on the beta mixture \eqref{hu-beta}  with $(\gamma_l, \gamma_r) = (4, 4)$ and regression intercepts only, for 8000  $z$-values generated by  the normal mixture    \eqref{normMixMod} with $\rho = 0.9$. $\hat{\pi}_l$ and $\hat{\pi}_r$ are respectively the left and right entries in each cell. 
}
\begin{tabular}{c*{5}{|cc}}
  \hline
$\mu_l \backslash \mu_r$ & \multicolumn{2}{c|}{ $ 0.5$}  & \multicolumn{2}{c|}{ $ 1$}  & \multicolumn{2}{c|}{ $ 1.5$} & \multicolumn{2}{c|}{ $ 2$}  & \multicolumn{2}{c}{ $ 2.5$}  \\ 
  \hline
 & \multicolumn{10}{c}{ $w = 0.1$, $(w_l, w_r) = (0.01, 0.09)$} \\ 
  \hline
-2.5 & 0.019 & 0.044 & 0.017 & 0.072 & 0.015 & 0.093 & 0.030 & 0.100 & 0.026 & 0.114 \\ 
  -2 & 0.037 & 0.065 & 0.008 & 0.064 & 0.017 & 0.102 & 0.009 & 0.108 & 0.035 & 0.123 \\ 
  -1.5 & 0.022 & 0.050 & 0.011 & 0.071 & 0.018 & 0.086 & 0.034 & 0.117 & 0.053 & 0.120 \\ 
  -1 & 0.019 & 0.044 & 0.019 & 0.066 & 0.021 & 0.079 & 0.026 & 0.112 & 0.041 & 0.125 \\ 
  -0.5 & 0.010 & 0.050 & 0.011 & 0.062 & 0.019 & 0.092 & 0.035 & 0.120 & 0.049 & 0.116 \\  \hline
   & \multicolumn{10}{c}{  $w = 0.15$, $(w_l, w_r) = ( 0.015, 0.135)$} \\ \hline
 -2.5 & 0.024 & 0.060 & 0.022 & 0.096 & 0.024 & 0.115 & 0.023 & 0.149 & 0.024 & 0.164 \\ 
  -2 & 0.020 & 0.071 & 0.021 & 0.098 & 0.020 & 0.128 & 0.017 & 0.150 & 0.035 & 0.167 \\ 
  -1.5 & 0.029 & 0.080 & 0.015 & 0.092 & 0.025 & 0.141 & 0.035 & 0.148 & 0.035 & 0.156 \\ 
  -1 & 0.021 & 0.060 & 0.026 & 0.112 & 0.017 & 0.124 & 0.044 & 0.150 & 0.034 & 0.163 \\ 
  -0.5 & 0.011 & 0.065 & 0.017 & 0.105 & 0.012 & 0.127 & 0.018 & 0.139 & 0.032 & 0.158 \\  \hline
    & \multicolumn{10}{c}{ $w = 0.2$, $(w_l, w_r) = (0.02, 0.18)$} \\ \hline
  -2.5 & 0.018 & 0.059 & 0.024 & 0.134 & 0.021 & 0.162 & 0.026 & 0.194 & 0.039 & 0.223 \\ 
  -2 & 0.015 & 0.046 & 0.017 & 0.125 & 0.026 & 0.162 & 0.020 & 0.196 & 0.043 & 0.226 \\ 
  -1.5 & 0.012 & 0.072 & 0.010 & 0.134 & 0.027 & 0.154 & 0.039 & 0.214 & 0.032 & 0.216 \\ 
  -1 & 0.014 & 0.077 & 0.010 & 0.120 & 0.021 & 0.175 & 0.037 & 0.213 & 0.035 & 0.225 \\ 
  -0.5 & 0.016 & 0.104 & 0.013 & 0.115 & 0.023 & 0.167 & 0.033 & 0.209 & 0.049 & 0.230 \\ 
   \hline
\end{tabular}
      \label{tab:betaModSimTab3}
\end{table}    
      
From the results in  \tabsref{betaModSimTab1} - \tabssref{betaModSimTab3}, one can see that our beta mixture produces fairly accurate non-null probability estimates $(\hat{\pi}_l, \hat{\pi}_r)$ for $(w_l, w_r)$. Generally speaking, the estimates are the most inaccurate when one of $\mu_l$ or $\mu_r$ has a small magnitude, which is reasonable since one of the two non-null components has a weak signal and many z-values which are non-nulls could be regarded as null by the EM fitting algorithm. This should not be too concerning, as it simply means some of the hypotheses pose hard testing problems to begin with. 

\clearpage
\section{Additional numerical results on simulated data} \label{app:more_simulation_results}


In addition to the seven methods already introduced in \secref{sim_data}, we have also experimented with another seven methods:

 \begin{enumerate}[(a)]
       \item BH:  the vanilla BH procedure \citep{benjamini1995controlling}
     
    \item BL:  Boca and Leek procedure  \citep{boca2018direct}. 
    \item SK:  The  signed-knockoff  procedure by \citet{tian2021powerful}, which is a $z$-value based data masking procedure similar to ZAP (finite) and AdaPT-$\text{GMM}_g$, except that it doesn't leverage covariate information.
     \item CAMT with sign:  this is almost exactly the same as CAMT in \secref{sim_data}, but  the sign of the $z$-value $\text{sgn}(Z_i)$ is inserted as an extra covariate. 
      \item AdaPT with sign:  this is almost exactly the same as AdaPT in \secref{sim_data}, but the sign of the $z$-value $\text{sgn}(Z_i)$ is inserted  as an extra covariate. 
        \item oracle: The oracle procedure $\pmb\delta^\mathcal{Z}$.
 \end{enumerate}

 Of these additional methods, BH and  BL are $p$-value based whereas the rest are $z$-value based, although SK is a $z$-value based method that doesn't leverage covariate information. 
Note that CAMT with sign and AdaPT with sign are considered $z$-value based because with the incorporation of $\{\text{sgn}(Z_i)\}_{i=1}^m$ as covariates, they essentially operate on the full data $\{Z_i, X_i\}_{i=1}^m$ since  the  information lost from collapsing into two-sided $p$-values is recovered. We now extend the simulation studies under Setups 1-3 in \secref{sim_data} by exploring more choices for the simulation parameters of the data generating mechanism in \eqref{normMix}, under which, in addition to the classical BH procedure, these methods offer  finite-sample FDR control:
\begin{equation}\label{finitesamplemethods}
\text{AdaPT}, \quad \text{AdaPT-$\text{GMM}_g$}, \quad   \text{ AdaPT with sign}, \quad \text{ZAP (finite)}, \quad \text{SK}.
\end{equation}
%
Before proceeding,  we point out that no matter the choices of the simulation parameters, no current methods  have demonstrated near-optimal power under Setup 3 where the covariates influence the alternative means; as discussed in \secref{discuss}, we see closing this gap as a future research problem. With respect to the simulation studies below, our general conclusion is as follows: While the data masking methods in \eqref{finitesamplemethods} have the attractive finite-sample FDR controlling property, it always comes at a cost, such as the instability in power resulting from data masking (see \secref{discuss}). On the contrary, the asymptotic ZAP, which is justified by regular asymptotics (\thmref{ZAPasympControl}) but not a strong assumption like \eqref{FDRregAssumption}, demonstrates robust performances both in terms of FDR control and power throughout.

 \subsection{All-method comparison for the baseline simulations in \secref{sim_data}} \label{app:baseline_all_methods}

 \figref{all_setup_append} extends the results in  \figref{all_setup} by also displaying the performances of the above added methods under the simulation setups in \secref{sim_data}.  

In Setup 1, note that FDRreg in fact has slightly more power than the oracle procedure when the covariates are the most informative, which, admittedly, comes at the expense of violating the FDR bound. Also, SK, as a $z$-value based method,  has increasing power as the effect size increases due to its capitalizing on the increasing asymmetry of the $z$-values, a known phenomenon in the literature  \citep{sun2007oracle, storey2007optimal}; however, that it doesn't leveraging covariate information has put a cap on its power in comparison with the $z$-value based covariate-adaptive methods (ZAP, AdaPT-$\text{GMM}_g$, CAMT with sign and AdaPT with sign).

 In Setup 2, CAMT with sign and AdaPT with sign cannot match the power of the latest $z$-value covariate-adaptive methods (ZAP and AdaPT-$\text{GMM}_g$). This is not unexpected, because as described in \secref{sim_data}, the interaction between the $z$-values and the covariates (i.e. $f_{1, x}$ changes from concentrating on the -ve $z$-values to the +ve $z$-values as $x_\bullet$ goes from -ve to +ve) in this setup can be more accurately described by the $z$-value based working model employed by ZAP and AdaPT-$\text{GMM}_g$; a simple plug-in of the sign as a covariate into AdaPT and CAMT is not sufficient to describe such an interaction, because  the two-component beta-mixture model employed by these method tries to fit the interaction between $P_i$ and $\{X_i, \text{sgn}(Z_i)\}$ to no avail.   
 
 In Setup 3, no existing  methods has power nearing  the
oracle procedure, despite ZAP attaining top performances here; again, refer to our discussion in \secref{discuss}.
 
  \begin{figure}[t]
\centering
\includegraphics[width=\textwidth]{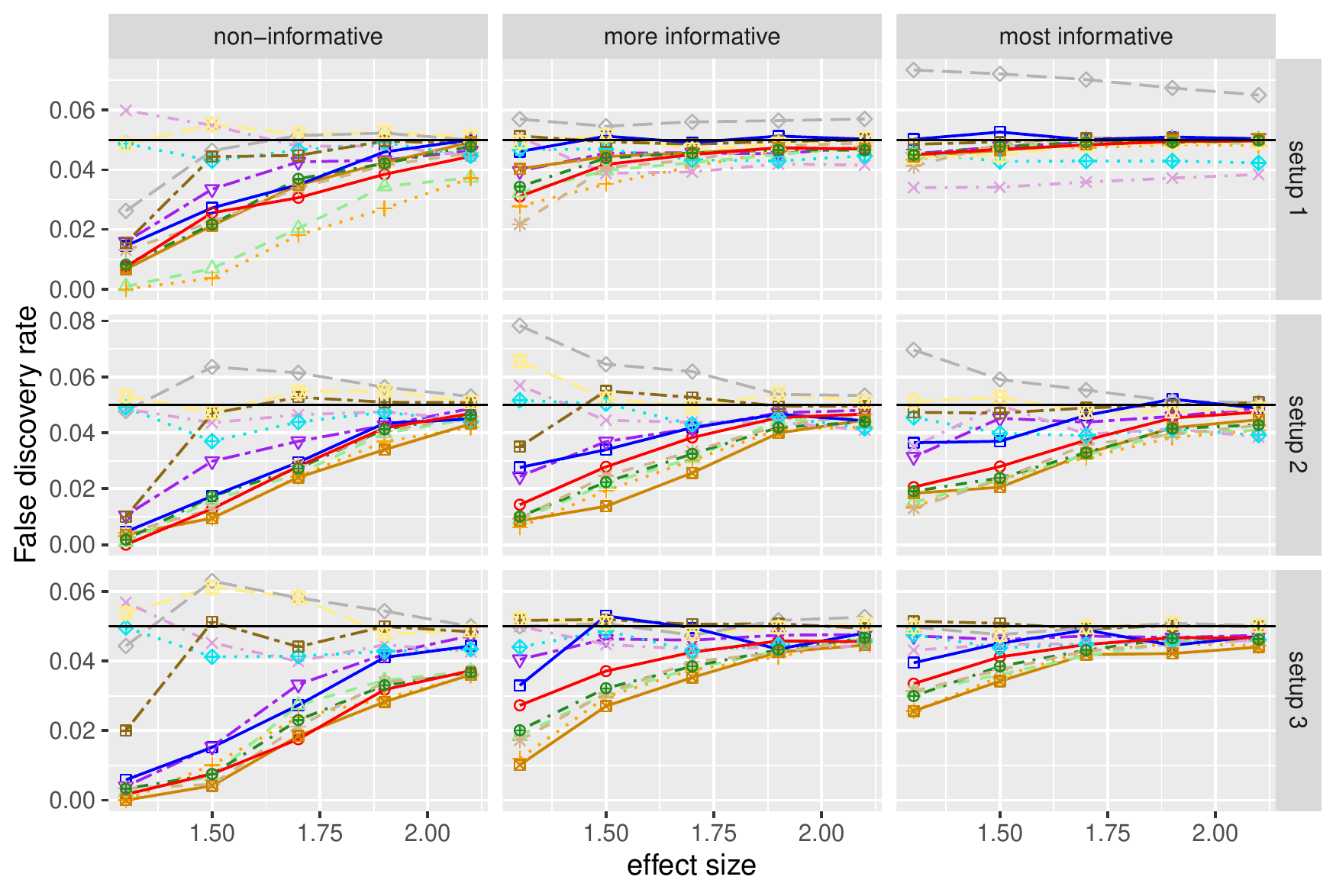}
\includegraphics[width=\textwidth]{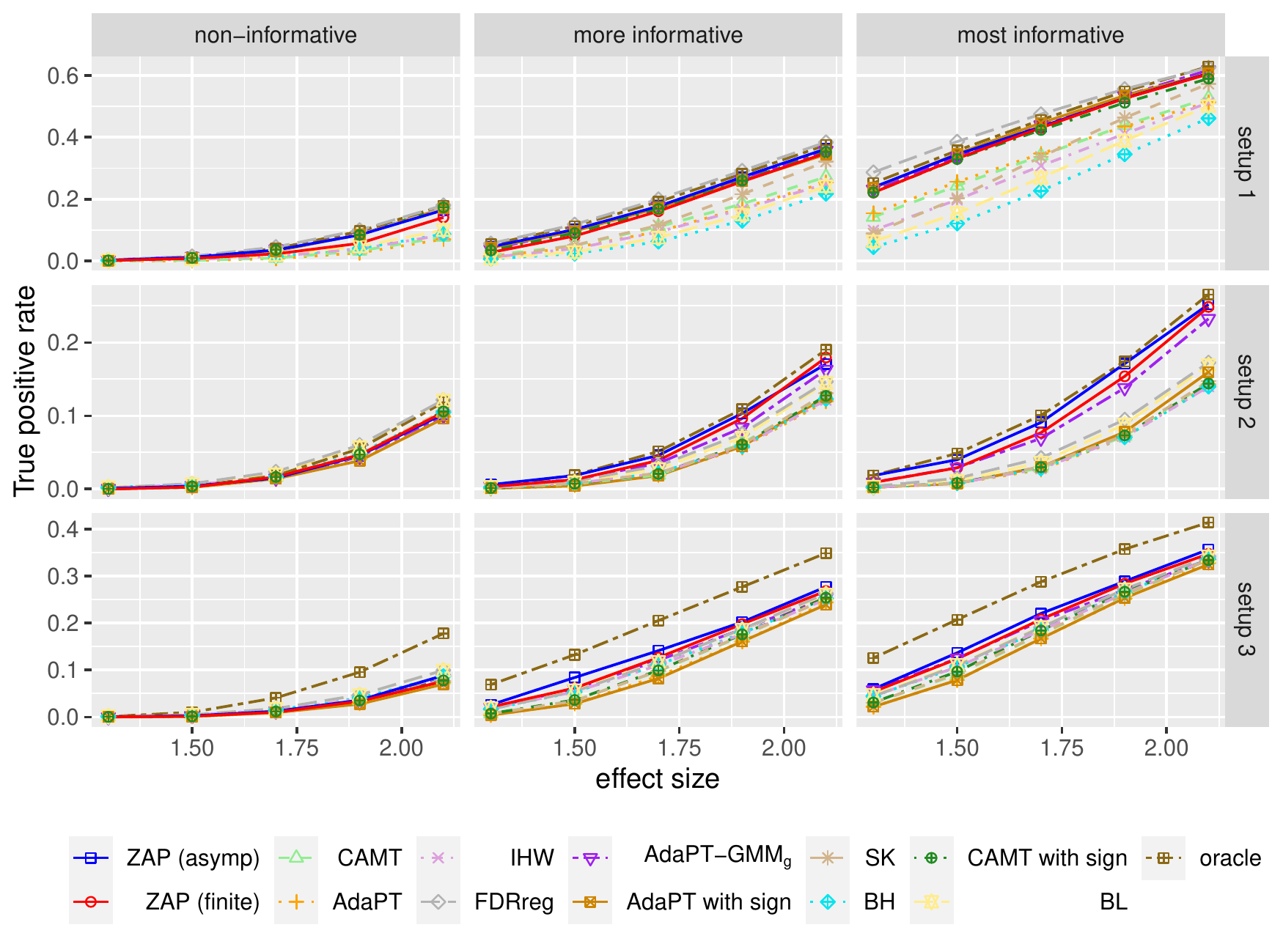}
\caption{
\emph{Baseline setting}: FDR and TPR performances of the full list  of methods for the simulations in \secref{sim_data}. All methods are applied at a targeted FDR level of $0.05$ (the horizontal black lines). The x-axes and panels are labelled as in \figref{all_setup}.
 }
\label{fig:all_setup_append}
\end{figure}

 \subsection{Different alternative variances} \label{app:adjusting_alt_var}
 
 \figsref{all_methods_alt_var_half} and \figssref{all_methods_alt_var_two} show the performances of different methods by setting the alternative component variance $\sigma^2$ in \eqref{normMix} as  $0.5$ and $2$ respectively; they help gauge how well our working beta-mixture model can fit different shapes of the alternative distributions. The two ZAP methods maintain top power performances in both scenarios, with the empirical FDR of asymptotic ZAP only overshoots the desired $0.05$ bound very slightly for Setup 1 when $\sigma^2 = 2$. Some other methods violate the desired FDR bound to varying degree; in particular, when $\sigma^2= 0.5$,  FDRreg's empirical FDR violates the bound of $0.05$ drastically in seven panels. This makes sense and is in agreement with the analogous simulation results in \citet[Figure A2]{zhang2020covariate}, as a reduced alternative variance can amplify the influence of the covariates on the overall non-null density in  \eqref{setup_cond_alt_density} and the strong assumption \eqref{FDRregAssumption} employed by FDRreg is more likely to invalidate the FDR control. 
 
\emph{Remark}: In theory, the BH procedure  has exact finite-sample FDR control in these setups, but is  empirically seen to moderately violate the FDR bound in some of the panels in  \figref{all_methods_alt_var_half}. This can happen with a limited number of simulated repetitions; our choice of $150$ is limited by our available computing facilities and the running times of non-BH methods (\appref{speed}). We have independently verified that, once the number of repetitions is brought up to $1000$, BH's empirical FDR drops below $0.05$.

  \begin{figure}[t]
\centering
\includegraphics[width=\textwidth]{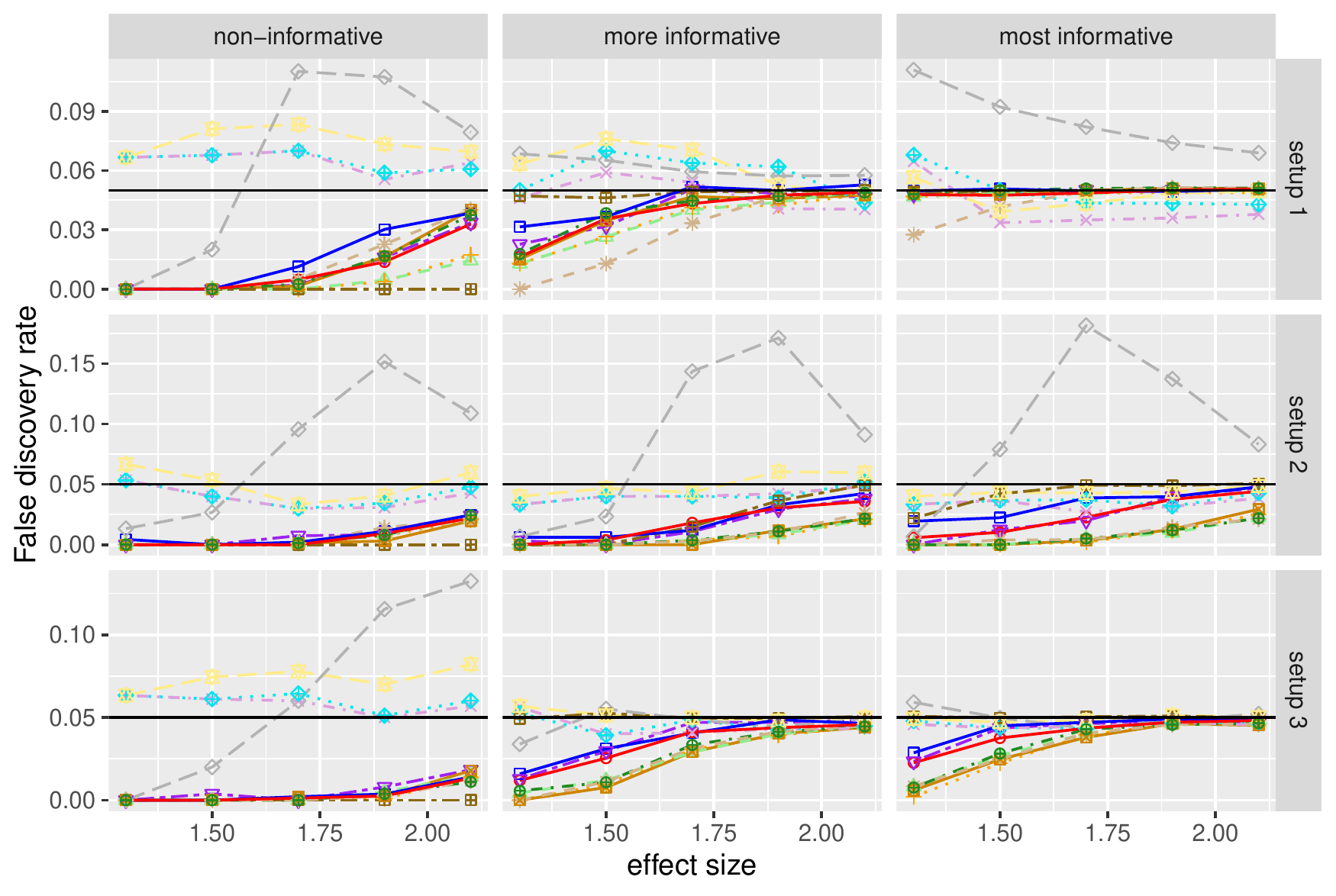}
\includegraphics[width=\textwidth]{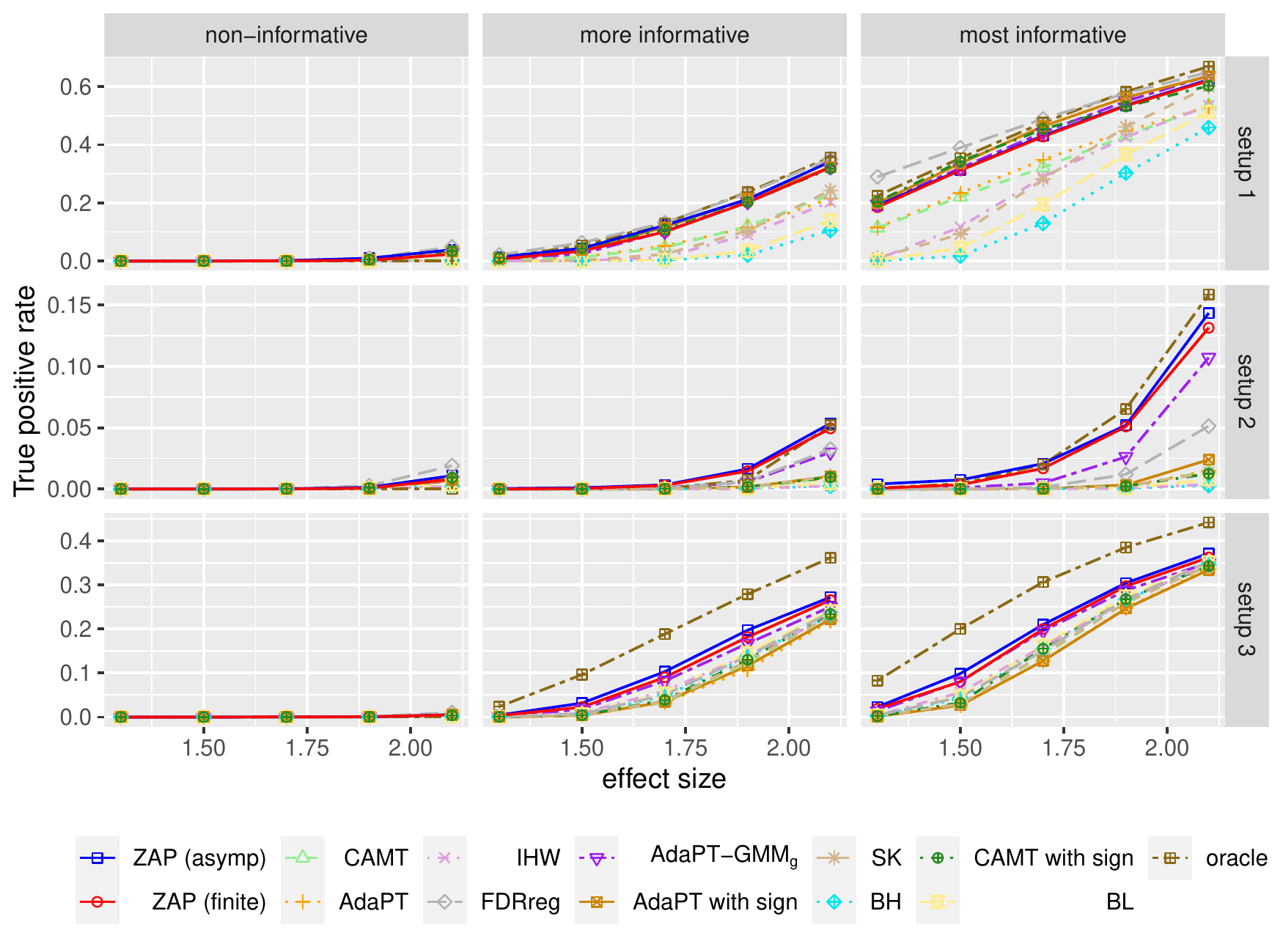}
\caption{
\emph{$\sigma^2= 0.5$}: FDR and TPR performances of the full list  of methods for the simulations in \secref{sim_data}, except that the alternative variance is set as $\sigma^2 = 0.5$. All methods are applied at a targeted FDR level of $0.05$ (the horizontal black lines). The x-axes and panels are labelled as in \figref{all_setup}.
 }
\label{fig:all_methods_alt_var_half}
\end{figure}

  \begin{figure}[t]
\centering
\includegraphics[width=\textwidth]{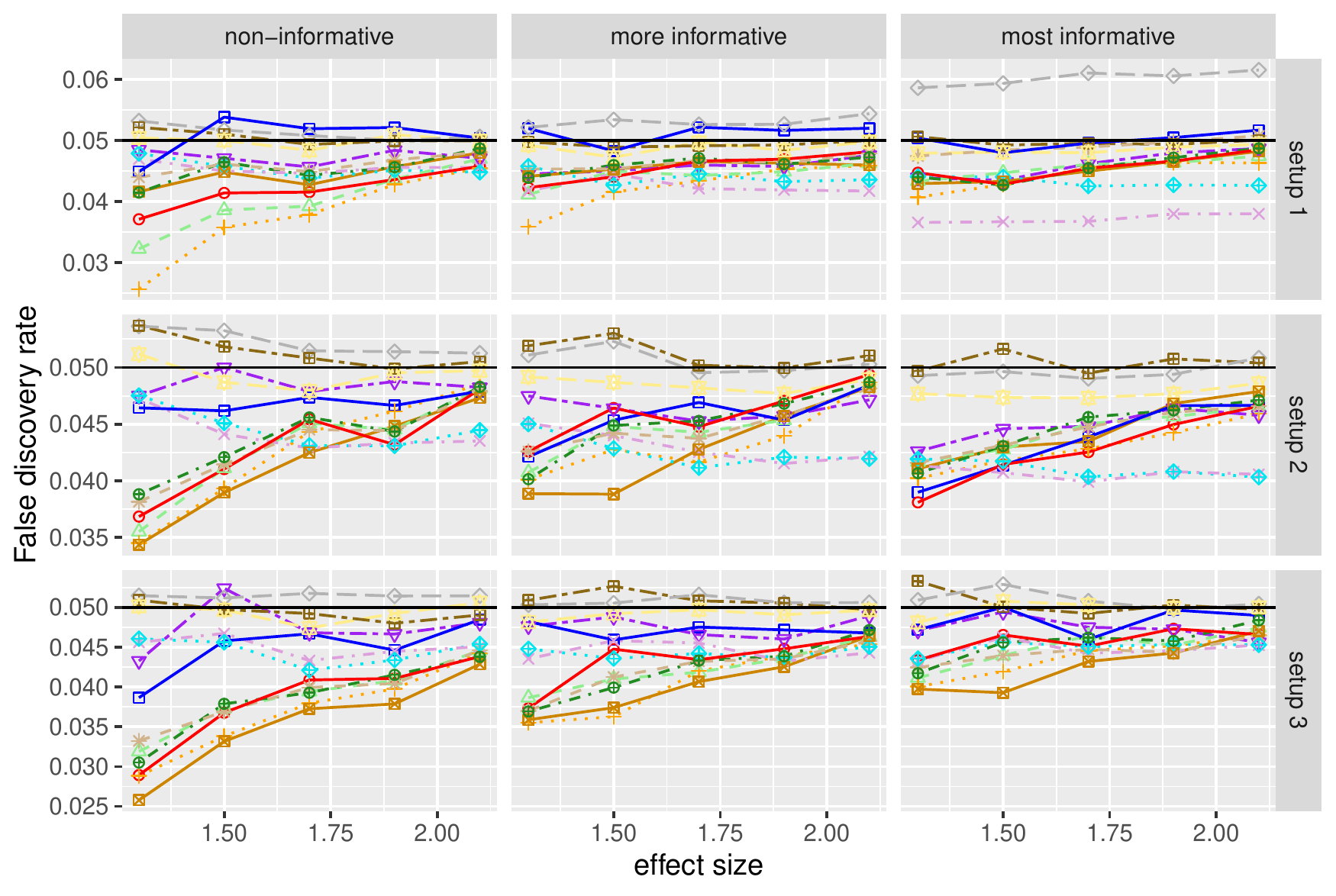}
\includegraphics[width=\textwidth]{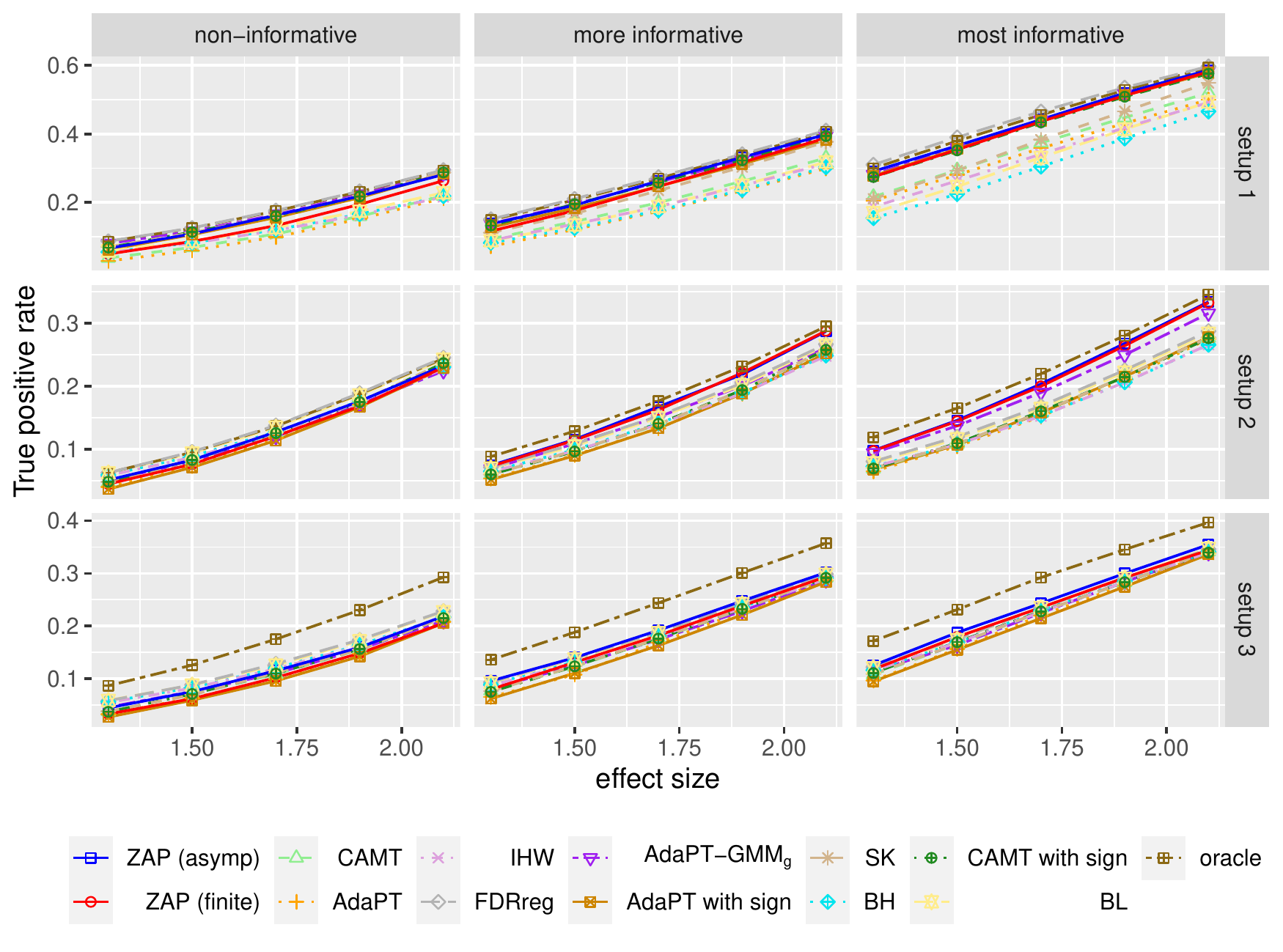}
\caption{
\emph{$\sigma^2= 2$}: FDR and TPR performances of the full list  of methods for the simulations in \secref{sim_data} except that the alternative variance is set as $\sigma^2 = 2$. All methods are applied at a targeted FDR level of $0.05$ (the horizontal black lines). The x-axes and panels are labelled as in \figref{all_setup}.
 }
\label{fig:all_methods_alt_var_two}
\end{figure}

\subsection{Sparse signals} \label{app:sparse_signal}

\figsref{moderate_sparse} and \figssref{sparsest}  show the performances of the methods when the setups of \secref{sim_data} contain sparser signals by setting the sparsity parameter $\eta$ to be smaller.  In the moderately sparse settings of  \figref{moderate_sparse}, when the covariates are completely uninformative ($\zeta = 0$),  Setups 1 and 3 have a signal density of $7.5 \%$ and Setup 2 has a signal density of $9.0 \%$. By and large, the methods of ZAP maintain top performances there. 

In the sparsest settings of \figref{sparsest}, Setups 1 and 3 have a signal density of $2.9 \%$ and Setup 2 has a signal density of $3.5 \%$ when $\zeta = 0$; the methods of ZAP, while still being quite competitive, lose their places as the top power performers in a number of  scenes, notably to FDRreg, BL, IHW and BH (note that BH  always controls the FDR under $0.05$ even visually may not look so, as explained at the end of \appref{adjusting_alt_var}).  FDRreg, although being powerful at times, can easily violate the FDR target due to the strong assumption \eqref{FDRregAssumption} as we have seen.  In addition, the BC-type methods (ZAP and all methods belonging to the CAMT or AdaPT families) are known to be generally more conservative than the BH-type methods (IHW, BH, BL) when the potential number of rejections is small (due to sparsity in this case); this is because  the BC-type methods generally form a more conservative FDR estimators for calibrating the rejection thresholds than the BH-type methods, a fact that has been exposited at length in other papers \citep{zhang2020covariate, ignatiadis2021covariate, chao2021adapt} and not repeated here for brevity. However, one can argue that the ZAP methods (and other BC-type methods) having a \emph{realized}   FDR much lower than the target is an advantage, as fewer-than-expected false discoveries are made. Visually, IHW and BL, both of which do not have finite-sample FDR guarantee, moderately overshoot the FDR bound of $0.05$   when the the covariates are non-informative.

 \begin{figure}[t]
\centering
\includegraphics[width=\textwidth]{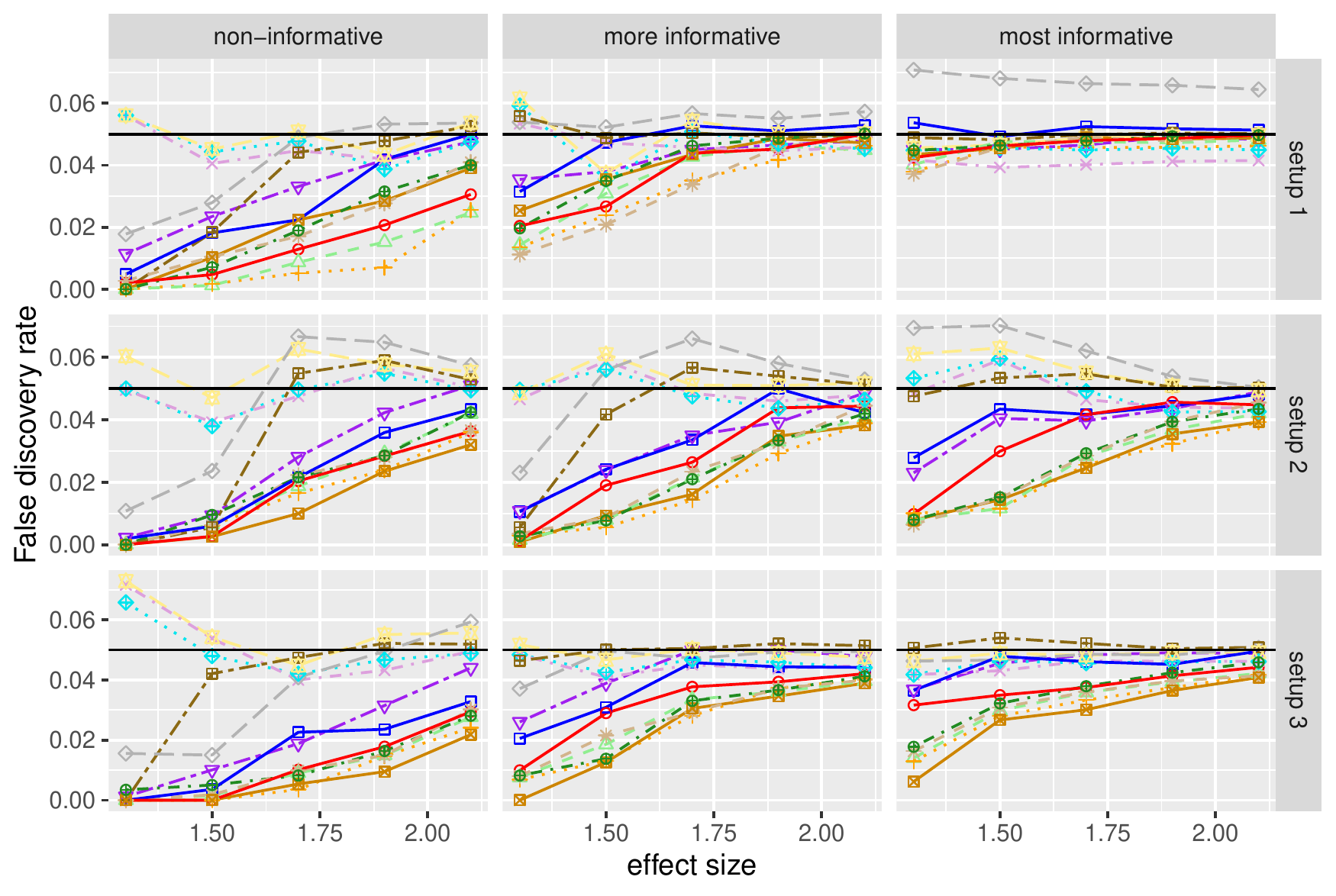}
\includegraphics[width=\textwidth]{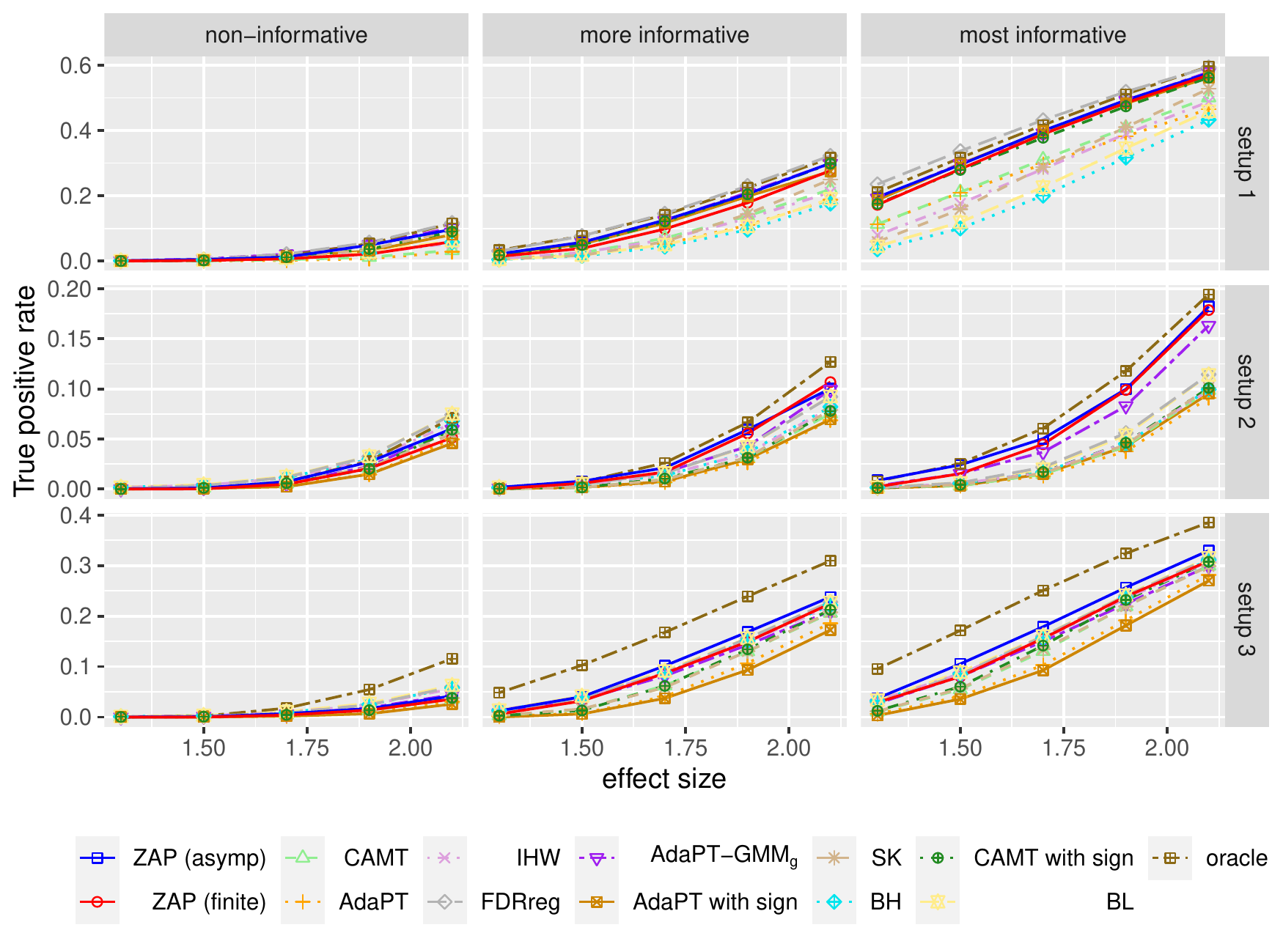}
\caption{
\emph{Moderately sparse setting}: FDR and TPR performances of the full list  of methods for the simulations in \secref{sim_data}, but with the density parameter $\eta$ set to $-2.5$ for Setups 1 and 3 and to $-3$ for Setup 2. All methods are applied at a targeted FDR level of $0.05$ (horizontal black lines). The x-axes and panels are labelled as in \figref{all_setup}.
 }
\label{fig:moderate_sparse}
\end{figure}
 \begin{figure}[t]
\centering
\includegraphics[width=\textwidth]{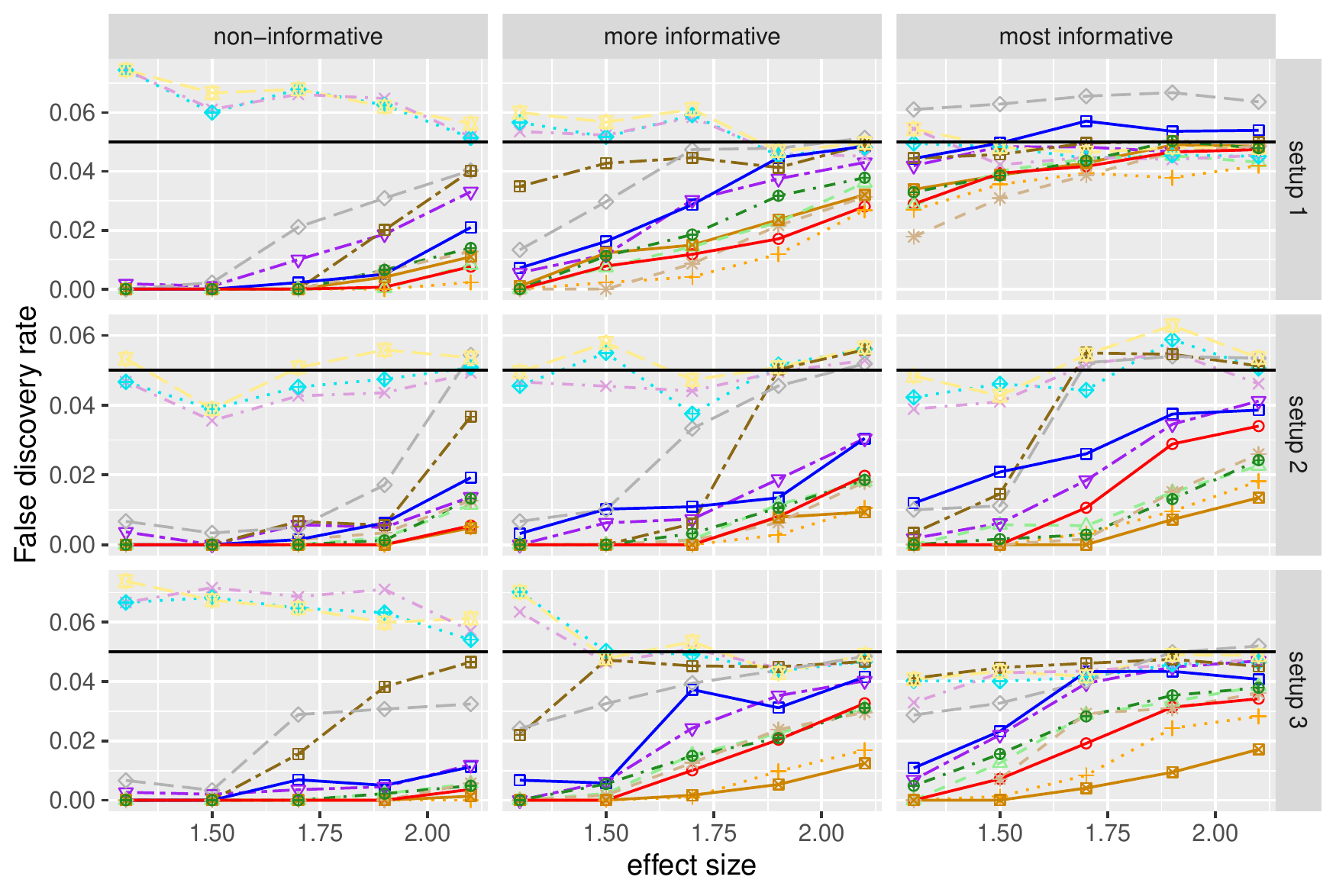}
\includegraphics[width=\textwidth]{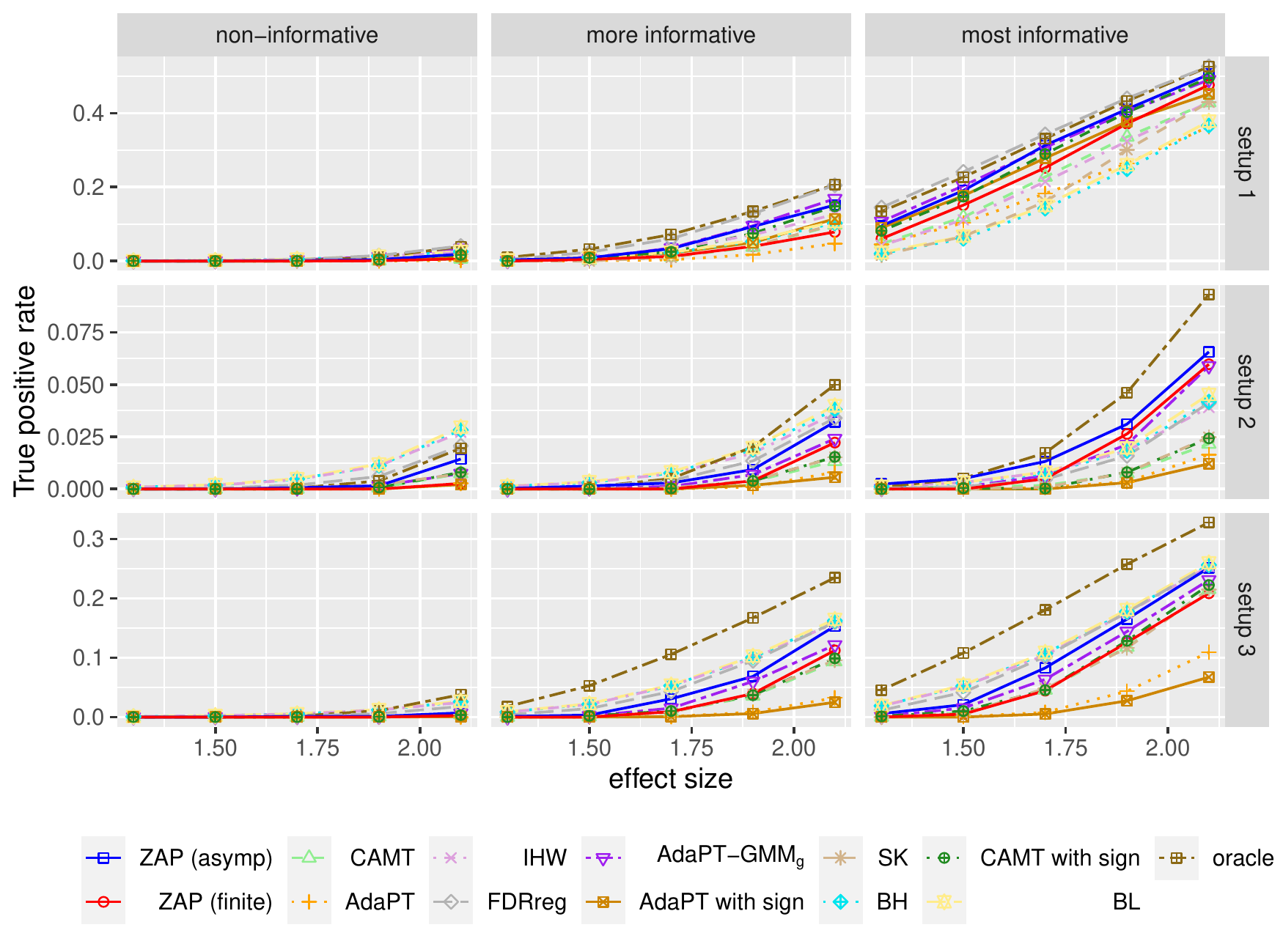}
\caption{
\emph{Sparsest setting}: FDR and TPR performances of the full list  of methods for the simulations in \secref{sim_data}, but with the density parameter $\eta$ set to $-3.5$ for Setups 1 and 3 and to $-4$ for Setup 2. All methods are applied at a targeted FDR level of $0.05$ (horizontal black lines). The x-axes and panels are labelled as in \figref{all_setup}.
 }
\label{fig:sparsest}
\end{figure}

 \subsection{Smaller-scale problem} \label{app:small_m_problem}

  \figref{all_methods_m_1000} shows the results of the same baseline setting in \appref{baseline_all_methods}, but with $m$ brought down to $1000$. The power of ZAP (asymp) remains  competitive, except in Setup 3 where the BH-type methods (BH, IHW, BL) again dominate in power.  The latter is again likely  attributable to the small potential number of hypotheses that can be rejected when $m = 1000$ and the conservative nature of the FDR estimates employed by the BC-type methods, as we have explained in \appref{sparse_signal}

 \begin{figure}[t]
\centering
\includegraphics[width=\textwidth]{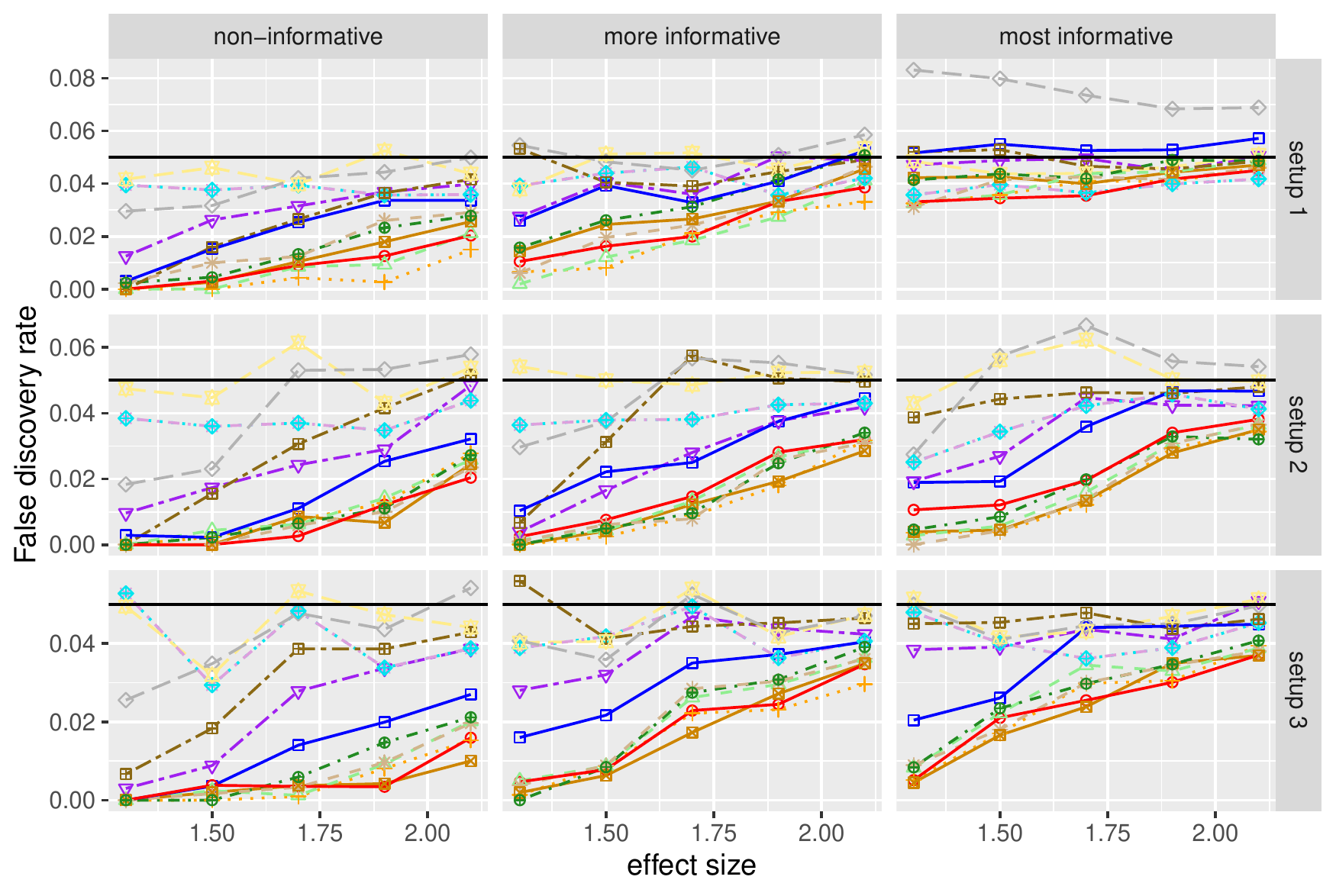}
\includegraphics[width=\textwidth]{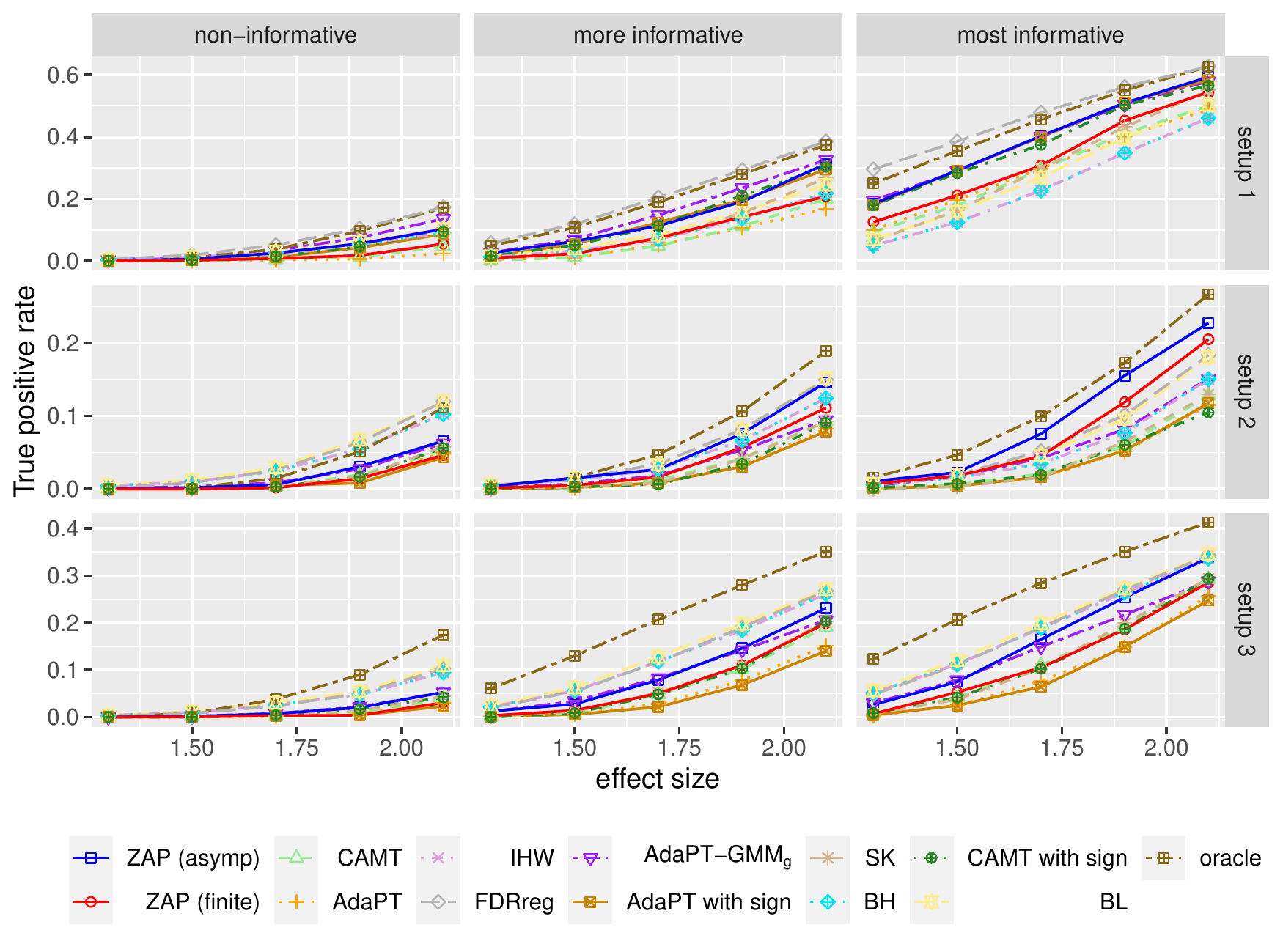}
\caption{
\emph{$m= 1000$}: FDR and TPR performances of the full list  of methods for the simulations in \secref{sim_data}, except that $m = 1000$. All methods are applied at a targeted FDR level of $0.05$ (the horizontal black lines). The x-axes and panels are labelled as in \figref{all_setup}.
 }
\label{fig:all_methods_m_1000}
\end{figure}


 \subsection{Global null} \label{app:global_null} 
 
  \figref{global_null} shows the empirical FDR of different methods at different target FDR levels, based on 300 simulated repetitions and $m = 5000$ null $z$-values, so any rejection constitutes a false discovery. Visually, except for IHW, BL and BH, no other methods are seen to violate the target FDR bound. 
As explained in the remark of \appref{adjusting_alt_var}, one can always get the empirical FDR of BH to be less than the target by running more repetitions. Our choice of $300$ repetitions is limited by the running time of other methods and our available  computing facilities.

  \begin{figure}[h]
\centering
\includegraphics[width=01\textwidth]{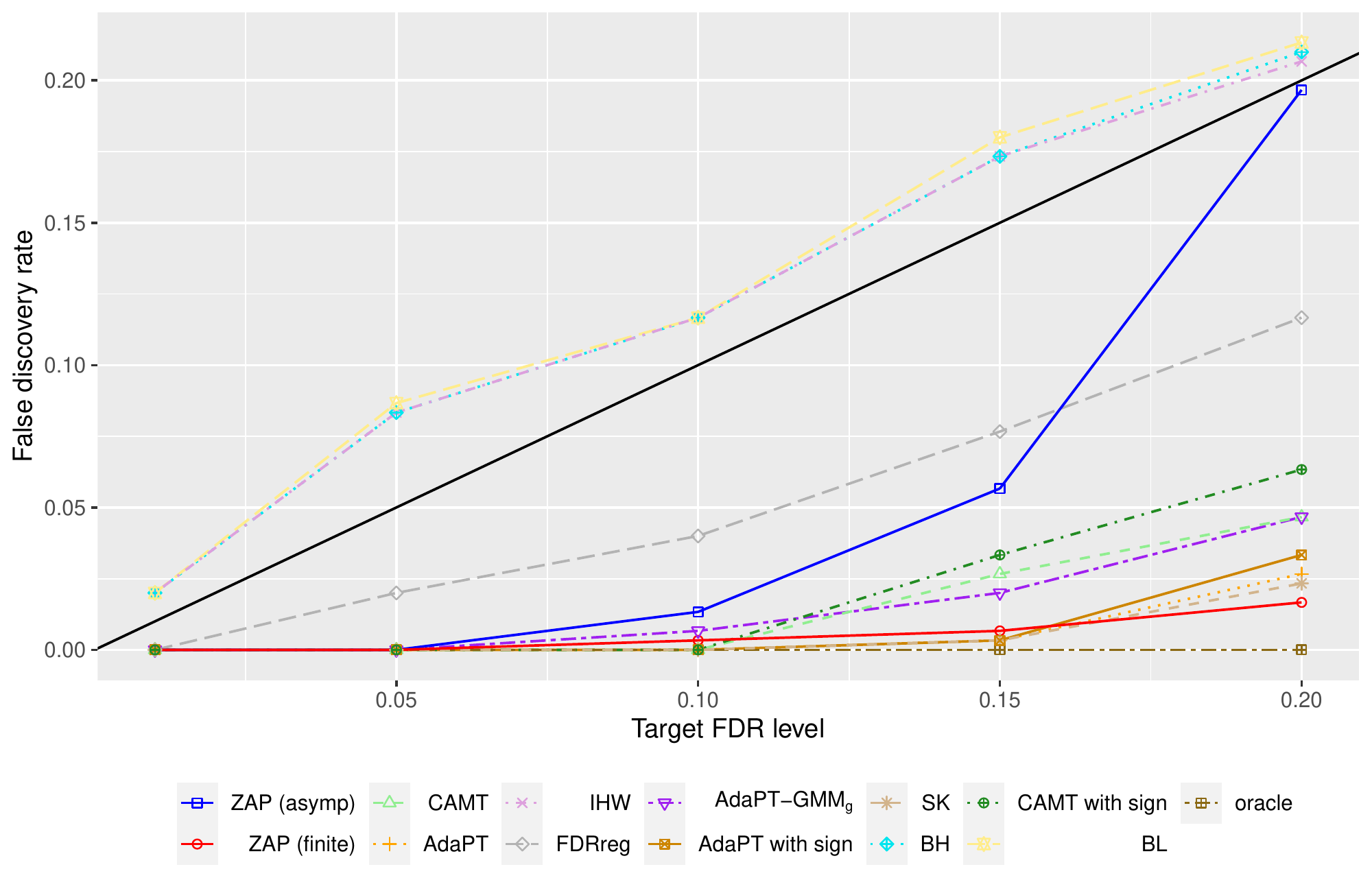}
\caption{
\emph{Global null}: The empirical FDRs of the full list of methods when applied to $m = 5000$ null $z$-values, based on $300$ repetitions. The $45^\circ$ line (in black)  is included as a reference.
 }
\label{fig:global_null}
\end{figure}

 \subsection{Comparison of computation costs} \label{app:speed}
 
\figref{comp_cost} compares the runtimes of different methods under Setup 1 for different values of $m$, with the simulation parameters described in \secref{sim_data} arbitrarily set at $(\epsilon, \eta, \zeta, \sigma) = (2.1, -2, 1, 1)$. While the speeds of different algorithms  depend on many factors including the complexities of the working models being fitted and the internal architectures of the packages implementing them, a fact that immediately stands out is the higher computational cost of the five data  masking algorithms (ZAP(finite), AdaPT-$\text{GMM}_g$, AdaPT, AdaPT with sign, SK). This is fully expected because all these algorithms have to iteratively fit a working model for  many times, and each fit is based on running an EM algorithm which is  computationally costly. It is conceivable that SK has the fastest speed among them, because its  model does not involve any covariates (and is hence easier to fit). As an aside, in a recent manuscript \citep{leung2022z}, a much more efficient data-masking algorithm for \emph{directional} FDR control is devised based on a working model whose fitting can be resorted to much faster interior point methods instead of the EM. 

The asymptotic ZAP  does use an EM algorithm  to fit the beta-mixture working model, but only once, and is seen to have shorter runtimes than all the data masking algorithms. It has longer runtimes than the rest of the methods, which is mainly due to its need for evaluations at many uniform realizations to obtain each mirror statistic $\hat{T}_i^{\mathfrak{m}}$   in \algref{zapAsymp}. To be more effective at such computations, we have employed parallel computing (based on the \texttt{parallel} package in R) so that these mirror statistics can be  estimated in parallel across $i$; this can be performed on most multi-core laptops available these days. 

Lastly, for both ZAP methods, we have leveraged the celebrated R package \texttt{SQUAREM}  to speed up each EM algorithm; see  \citet{varadhan2008simple} for details.

  \begin{figure}[h]
\centering
\includegraphics[width=01\textwidth]{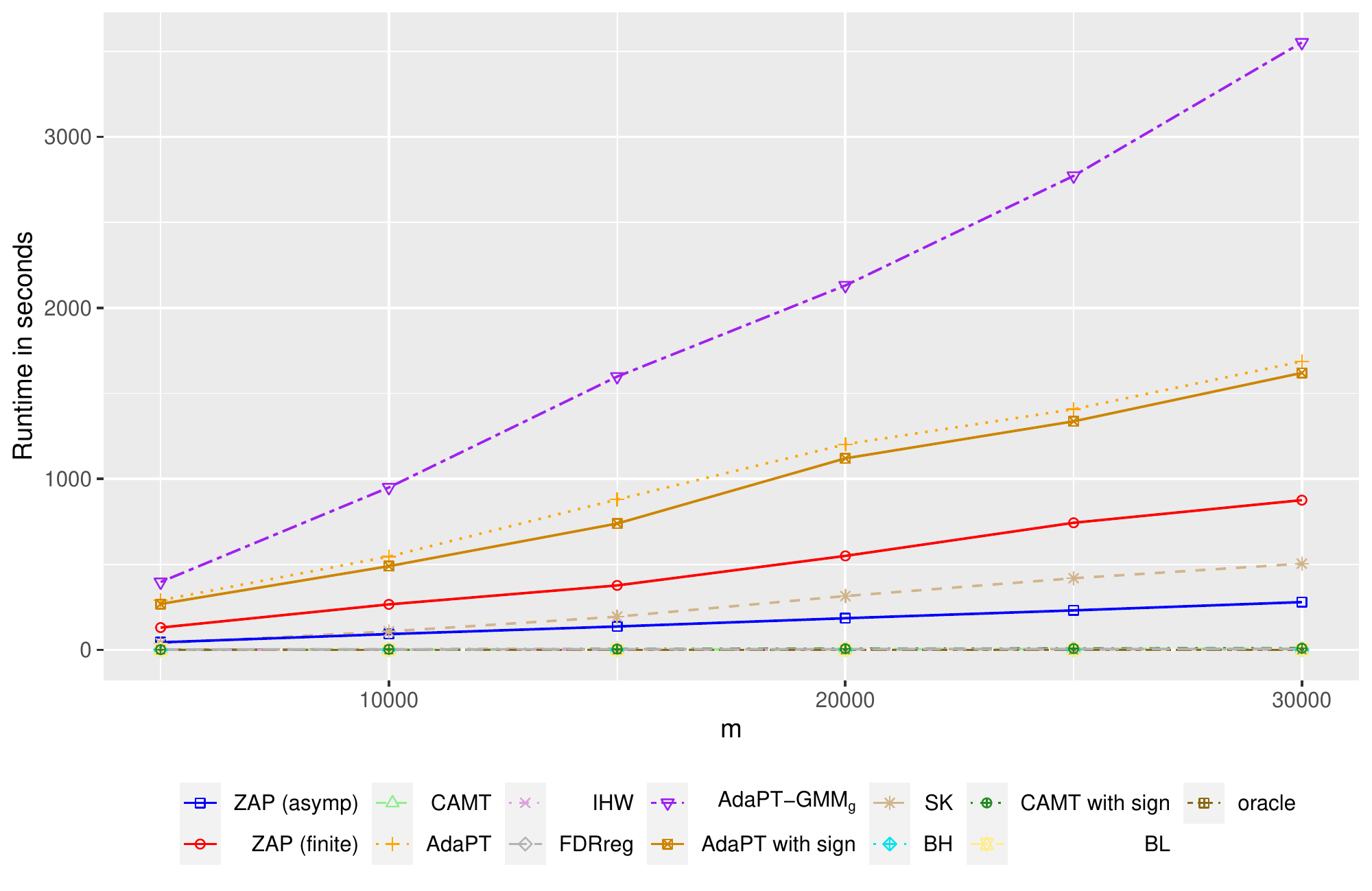}
\caption{
\emph{Runtime comparison}: The runtimes of all methods under Setup 1 for different values of $m$, with the simulation parameters set at  $(\epsilon, \eta, \zeta, \sigma) = (2.1, -2, 1, 1)$. The runtimes are recorded based on a single repetition of the simulations. 
 }
\label{fig:comp_cost}
\end{figure}


\section{Additional numerical results on real data} \label{app:real_data_append}

Inclusive of the additional methods introduced in \appref{more_simulation_results} (but exclusive of the oracle), the numbers of rejections for the full list of  methods  applied to the real data in \secref{real_data} are shown in  \figref{real_data_plot_append};  the conclusions we can arrive at are essentially the same as in the main text.   One can refer to \citet{lei2016adapt} for access to the \texttt{bottomly} and \texttt{airway} datasets. The other two real datasets are available at:

\begin{enumerate}
\item \texttt{hippo}: \url{https://www.raynamharris.com/DissociationTest/}
\item \texttt{scott}: \url{https://github.com/jgscott/FDRreg}  
\end{enumerate}

\subsection{Differential expression analysis of RNA-Seq data } \label{app:rnaseq_processing}

We  briefly discuss the importance of pre-filtering genes with excessively low read counts before applying FDR methodologies for differential expression (DE) analysis of RNA-Seq data. For the unfamiliar reader,  a good open resource on the relevant analysis pipeline can be found on \url{https://github.com/hbctraining/DGE_workshop}. It typically begins with a raw ``count matrix" with the expression read counts as entries, where each row corresponds to a mapped gene and each column corresponds to  a sample/library that is either in the treatment or the control group. This count matrix is taken as an input to a suite of statistical analysis tools  available from one of the \texttt{R} packages for DE analysis that differ by their underlying modelling assumptions, to produce test statistics that are re-scaled measures of differential expression between the two groups for all the genes involved. The two most popular such R pipelines which can produce the z-values considered by the current paper  are \texttt{limma} (with the ``voom" function therein) \citep{ritchie2015limma}  and \texttt{DESeq2} \citet{love2014moderated}.  \texttt{limma}  operates with a linear model to produce  $t$-statistics, and \texttt{DESeq2} operates with a negative binomial model to produce Wald statistics. These primary statistics can then undergo the further transformation in \secref{assessor} to give the $u$-values, on which our ZAP methods can be applied. 

However, without suitable pre-processing, a raw count matrix will typically produce  unusual distributions for the $u$-values (or $p$-values). \figref{real_data_histograms_unfiltered} plots histograms of the $u$-values produced by the original raw count matrices of the three RNA-Seq datasets in the main text processed with \texttt{DESeq2}. Normally, if the test statistics are well-calibrated, the null $u$-values should be approximately uniformly distributed, and one should only expect spikes  near the two ends of the interval $(0,1)$ (or only close to 0 if the histogram is for two-sided $p$-values) which represent genes that are non-null. This is clearly not the case in \figref{real_data_histograms_unfiltered}, and the spurious spikes in the middle of the unit interval for all three histograms are typically results of genes that have excessive low read counts for which reliable DE analysis is impossible and can at best be considered as nulls. In particular, the presence of such spikes will make a procedure like the BH overly conservative. A standard practice is to filter out these genes according to some rules of thumb which have been discussed by 
\citet{chen2016reads} in some length. In the analysis of the main text, we have adopted a simple convention of filtering out genes with a total raw counts less than $15$ using the function \texttt{filterByExpr} in the R package \texttt{edgeR}, which implements the method in \citet{chen2016reads}. Apparently, spurious structures in the $u$-value histograms have been more or less removed as a result, as is evident by comparing \figref{real_data_histograms_unfiltered} $(c)$ and \figref{real_data_plot}$(e)$, the latter of which has its $u$-values produced by the filtered version of \texttt{hippo} dataset.

 \begin{figure}[t]
\centering
\includegraphics[width=\textwidth]{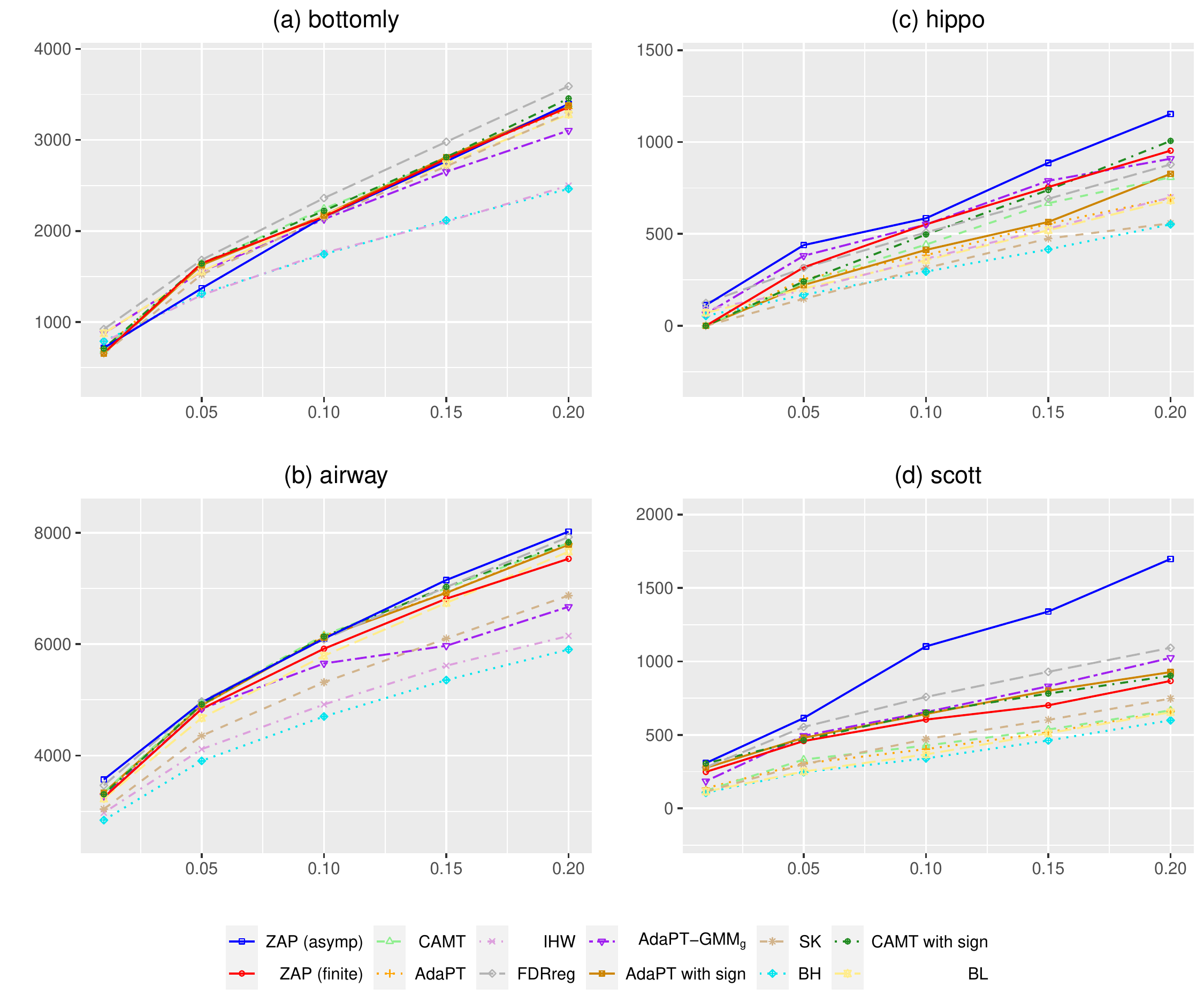}
\caption{(a)-(d) plot the numbers of rejections for different methods across datasets, against targeted FDR level at $0.01, 0.05, 0.1, 0.15, 0.2$.  }
\label{fig:real_data_plot_append}
\end{figure}

\begin{figure}[h]
\centering
\includegraphics[width=\textwidth]{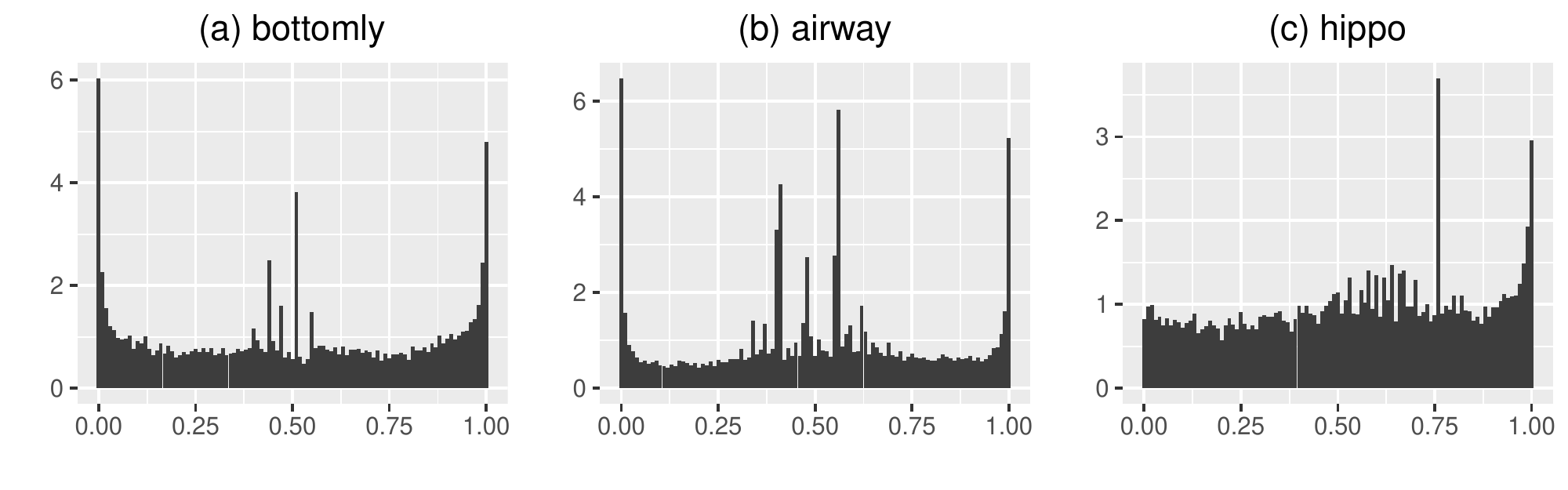}
\caption{Histograms of $u$-values for the original unfiltered versions of the three RNA-Seq datasets in the main text. In principle, one should only see at most two ``spikes" on the two ends of the interval $(0, 1)$. Spikes not located close to 0 or 1 in any histogram result from genes with extremely low read counts. 
 }
\label{fig:real_data_histograms_unfiltered}
\end{figure}

\end{document}